\let\emptyset=\varnothing
\DeclareMathOperator{\sep}{\mid\!\mid}
\newcommand{\ie}{\textit{i.e., }}
\newcommand{\eg}{\textit{e.g., }}
\newcommand{\inputv}[1]{\check{{#1}}}
\newcommand{\outputv}[1]{\hat{{#1}}}
\newcommand*{\len}{\mathit{len}}
\newcommand*{\nat}{\mathbb{N}}
\newcommand*{\integers}{\mathbb{Z}}
\newcommand{\correctins}{\mathcal{C}}
\newcommand*{\dom}{\mathit{\mathit{dom}}}
\newcommand*{\codom}{\mathit{\mathit{rng}}}
\newcommand*{\pl}{\mathit{\mathit{pl}}}
\newcommand*{\la}{\mathop{\leftarrow}\limits}
\newcommand*{\ra}{\mathop{\rightarrow}\limits}
\newcommand*{\lra}{\mathop{\longrightarrow}\limits}
\newcommand*{\block}[1]{\fbox{${#1}$}\hspace*{1ex}}
\newcommand*{\tto}{\rightrightarrows}
\newcommand*{\ttob}{{\tto\begin{smallmatrix}b_1\\\cdots\\b_n\end{smallmatrix}}}
\newcommand*{\vars}{\mathit{vars}}
\newcommand{\clppl}{\mathrm{CLP}(\mathbb{PL})}
\newtheorem{proposition}{Proposition}
\newtheorem{theorem}{Theorem}
\newtheorem{claim}{Claim}
\begin{document}


\title{Non-Termination Analysis of Java Bytecode}

\numberofauthors{3}
\author{
  \alignauthor \'Etienne Payet\\
  \affaddr{LIM-IREMIA}\\
  \affaddr{Universit\'e de la R\'eunion}\\
  \affaddr{France}\\
  \email{etienne.payet@univ-reunion.fr}
  \alignauthor Fred Mesnard\\
  \affaddr{LIM-IREMIA}\\
  \affaddr{Universit\'e de la R\'eunion}\\
  \affaddr{France}\\
  \email{frederic.mesnard@univ-reunion.fr}
  \alignauthor Fausto Spoto\\
  \affaddr{Dipartimento di Informatica}\\
  \affaddr{Universit\`a di Verona}\\
  \affaddr{Italy}\\
  \email{fausto.spoto@univr.it}
}

\maketitle
  
\begin{abstract}
  We introduce a fully automated static
  analysis that takes a sequential Java bytecode program
  $P$ as input and attempts to prove that there exists an
  infinite execution of $P$. The technique consists in 
  compiling $P$ into a constraint logic program $P_{\mathit{CLP}}$
  and in proving non-termination of $P_{\mathit{CLP}}$;
  when $P$ consists of instructions that are \emph{exactly}
  compiled into constraints, the non-termination
  of $P_{\mathit{CLP}}$ entails that of $P$.
  Our approach can handle method calls; to the best of our
  knowledge, it is the first static approach for Java bytecode
  able to prove the existence of infinite recursions.
  We have implemented our technique inside the Julia analyser.
  We have compared the results of Julia on a set of
  113 programs with those provided by AProVE and Invel, the only
  freely usable non-termination analysers comparable to ours
  that we are aware of. Only Julia could detect non-termination
  due to infinite recursion.
\end{abstract}  

\category{F.3.1}{Logics and Meanings of Programs}{Specifying and Verifying and Reasoning about Programs}[Mechanical Verification]
\category{F.3.2}{Logics and Meanings of Programs}{Semantics of Programming Languages}[Denotational Semantics \and Program Analysis]

\terms{Languages, Theory, Verification}

\keywords{Non-termination analysis, Java, Java bytecode}


\section{Introduction}
%
In this paper, we address the issue of automatically proving
non-termination of sequential Java bytecode programs.
We describe and implement a static analysis that takes a program $P$
as input and attempts to prove that there exists an infinite execution
of $P$.
It is well-known that termination of computer programs is an undecidable
property, hence a non-termination analyser for Java bytecode can be used
to complement any existing termination analyser,
\eg{} AProVE~\cite{Otto10}, COSTA~\cite{AlbertAGPZ12} or
Julia~\cite{SpotoMP10}.
Research in non-termination has mainly been focused
on logic programs
\cite{Bol91b,DeSchreye90a,Payet09a,Payet06a,Shen03,Shen01a}
and term rewriting systems
\cite{Giesl05,Payet08a,Waldmann04,Waldmann07,Zankl07b,Zantema05}.
Only a few recent papers address the problem of proving
non-termination of imperative programs:
\cite{BrockschmidtSOG} considers Java bytecode,
\cite{Gupta08} considers programs written in the C language
and \cite{VelroyenR08} considers imperative programs
that can be described as logical formul\ae{} written in a simple
while-language.

\subsection{Contributions}
In~\cite{Payet09b}, we presented a first experimentation
with the automatic derivation of non-termination proofs for
Java bytecode programs. There, we started from the results
introduced in a preliminary version of~\cite{SpotoMP10}
where the original Java bytecode program $P$ is translated
into a constraint logic program $P_{\mathit{CLP}}$ whose
termination entails that of $P$. We had the idea of carrying
out a very simple non-termination analysis of $P_{\mathit{CLP}}$
using earlier results introduced in~\cite{Payet09a}.
During our experiments with non-terminating Java bytecode
programs, we made the empirical observation that the non-termination
of $P_{\mathit{CLP}}$ entails that of $P$ when $P_{\mathit{CLP}}$
is an \emph{exact} translation of $P$. We only introduced a
very intuitive and non-formal definition of \emph{exactness}
and we did not give any formal proof of this entailment.
In this paper, we provide the formal definitions and results
that are missing in~\cite{Payet09b}; the corresponding formal
proofs are available in the long version at~\cite{PayetMP12long}.
We also provide a non-termination criterion that works for
method calls and recursion, together with a new experimental
evaluation of our results over a set of 113 Java bytecode programs.

The technique we apply for proving non-termination of
$P_{\mathit{CLP}}$ is an improvement of a simple sufficient 
condition for linear binary CLP programs~\cite{Payet09a}.
This improved condition (Proposition~\ref{proposition:unit-loop})
is another contribution of this paper.
Our main result (Theorem~\ref{theorem:non-term})
is independent of the non-termination detection procedure.
Let us point out that there is no perfect non-termination criterion
for the CLP programs we consider:
\cite{Bradley05polyrankingEV} shows that the
termination of binary CLP programs with
linear constraints over the integers
is undecidable. However some interesting
subclasses have been recently investigated.
For instance, when all the constraints are of the form
$x > y$ or $x \geq y$, termination
of the binary CLP program is decidable
\cite{BenAmramMCsZ}. So when the generated
CLP program falls into this class,
we could replace our general non-termination test by a
decision procedure for non-termination.

Our results are fully implemented inside the Julia static analyser,
that we used for conducting the experiments.
Julia is a commercial product (\texttt{http://www.juliasoft.com}).
Its non-termination analysis can be freely used through the
web interface~\cite{JuliaWeb}, whose power is limited by a
time-out and a maximal size of analysis.

\subsection{Related Works}
To the best of our knowledge, only
\cite{BrockschmidtSOG,Gupta08,VelroyenR08} introduce methods
and implementations that are directly comparable to the results
of this paper. 

In~\cite{BrockschmidtSOG}, the program $P$ under analysis is
first transformed into a \emph{termination graph} that finitely
represents all runs through the program. Then, a
term rewrite system is generated from the graph and existing
techniques from term rewriting are used to prove non-termination
of the rewrite system. This approach has been successfully
implemented inside the AProVE analyser~\cite{AProVEWeb,GieslST}.
Note that the rewrite system generated from the termination graph
is an \emph{abstraction}~\cite{CousotC77} of $P$; the technique
that we present in this paper also computes an abstraction of $P$
but a difference is that our abstraction consists of a constraint
logic program $P_{\mathit{CLP}}$ instead of a term rewrite system.

The technique described in~\cite{Gupta08} is a combination
of dynamic and static analysis. It consists in generating
\emph{lassos} that are checked for \emph{feasibility}.
A lasso consists of a finite program path called stem,
followed by a finite program path called loop; it is
feasible when an execution of the stem can be followed
by infinitely many executions of the loop. Lassos are
generated through a dynamic execution of the program on
concrete as well as symbolic inputs;
symbolic constraints are gathered during this execution
and are used for expressing the feasibility of the
lassos as a constraint satisfaction problem.
This technique has been implemented inside the TNT
non-termination analyser for C programs. The analysis that
we present in this paper also looks for feasible lassos: 
it tries to detect some loops that have an infinite execution
from some input values and to prove that these values are
reachable from the main entry point of the program
(Proposition~\ref{proposition:compound-loops}).
A difference is that our technique does not combine static
with dynamic analysis. Another difference is that the approach
of~\cite{Gupta08} provides a bit-level analysis which is able
to detect non-termination due, \eg  to arithmetic overflow.

In~\cite{VelroyenR08}, the authors consider
a simple while-language that is used to describe programs
as logical formul\ae. The non-termination of the program $P$
under analysis is expressed as a logical formula involving
the description of $P$. The method then consists in proving
that the non-termination formula is true by constructing
a proof tree using a Gentzen-style sequent calculus.
The rule of the sequent calculus corresponding to the \texttt{while}
instruction uses invariants, that have to be generated by an
external method.
Hence, \cite{VelroyenR08} introduces several
techniques for creating and for scoring the invariants according to
their probable usefulness; useless invariants are discarded
(invariant filtering). The generated invariants are stored inside
a queue ordered by the scores.
The algorithms described in~\cite{VelroyenR08} have
been implemented inside the Invel non-termination analyser for Java
programs~\cite{InvelWeb}. Invel uses the KeY~\cite{KeYBook2007}
theorem prover for constructing proof trees. As far as we know,
it was the first tool for automatically proving non-termination
of imperative programs.

One of the main differences between the techniques introduced
in~\cite{Gupta08,VelroyenR08} and ours is that we first construct
an abstraction of the program under analysis and then we keep
on reasoning on this abstraction only.
The algorithms presented in~\cite{Gupta08,VelroyenR08}
model the semantics of the concrete program more accurately.
They hence do not suffer from some lack of precision that we face; 
we were not able to exactly translate some bytecode instructions
into constraints, therefore our method fails on the programs that
include these instructions. On the other hand, the techniques 
that directly consider the original concrete program are generally
time consuming and they do not scale very well. 

Finally, a major difference between our approach and that
of~\cite{BrockschmidtSOG,VelroyenR08} is that
we are able to detect non-termination due to infinite
recursion, whereas \cite{BrockschmidtSOG,VelroyenR08}
are not. Our experiments illustrate this consideration
very clearly. Note that the approach in~\cite{Gupta08} can deal
with non-terminating recursion.

\subsection{Organisation of this Paper}
The rest of this paper is organised as follows.
Section~\ref{section:preliminaries} introduces the basic formal
material borrowed from~\cite{SpotoMP10}.
Section~\ref{section:exact-approximations} provides
a formal definition of exactness for the abstraction of
a Java bytecode instruction into a linear constraint.
In Section~\ref{section:compilation}, we show how to
automatically generate a constraint logic program
$P_{\mathit{CLP}}$ from a Java bytecode program $P$
so that the non-termination of $P_{\mathit{CLP}}$ entails
that of $P$.
Section~\ref{section:clp-nontermination} deals with proving
non-termination of $P_{\mathit{CLP}}$; it provides an improvement
of a non-termination criterion that we proposed in~\cite{Payet09a}.
Section~\ref{section:experiments} describes our experiments
on a set of 113 non-terminating programs obtained from different
sources.
Section~\ref{section:conclusion} concludes the paper.


\section{Preliminaries}
\label{section:preliminaries}
%
We strictly adhere to the notations, definitions, and results
introduced in~\cite{SpotoMP10}. We briefly list the
elements that are relevant to this paper.

For ease of exposition, we consider a simplification of the
Java bytecode where values can only be integers, locations
or \texttt{null}. 
\begin{definition}
  \label{def:state}
  The set of \emph{values} is $\mathbb{Z}\cup
  \mathbb{L}\cup\{\mathtt{null}\}$, where 
  $\mathbb{Z}$ is the set of integers and
  $\mathbb{L}$ is the set of \emph{memory locations}.
  A \emph{state} of the Java Virtual Machine is a triple
  $\langle l\sep s\sep\mu\rangle$ where $l$ is an array of values,
  called \emph{local variables} and numbered from $0$ upwards,
  $s$ is a stack of values, called \emph{operand stack} (in the
  following, just \emph{stack}), which grows leftwards,
  and $\mu$ is a \emph{memory}, or \emph{heap}, which maps
  \emph{locations} into \emph{objects}.
  An object is a function that maps its fields into values.
  We write $l^k$ for the value of the $k$th local variable;
  we write $s^k$ for the value of the $k$th stack element
  ($s^0$ is the base of the stack, $s^1$ is the element above and so on).
  The set of all states is denoted by $\Sigma$. When we want to fix
  the exact number $\#l\in\nat$ of local variables and $\#s\in\nat$
  of stack elements allowed in a state, we write $\Sigma_{\#l,\#s}$.
  \qed
\end{definition}
\begin{example}\label{example:state}
  Consider a memory
  \[\mu = [\ell_1\mapsto o_1,\ell_2\mapsto o_2,\ell_3\mapsto o_3,
  \ell_4\mapsto o_4,\ell_5\mapsto o_5]\]
  where $o_1=[f\mapsto\ell_4]$, $o_2=[f\mapsto\mathtt{null}]$,
  $o_3=[f\mapsto\ell_5]$, $o_4=[f\mapsto\mathtt{null}]$ and
  $o_5=[f\mapsto\mathtt{null}]$. Then,
  \[\sigma=\langle[5,\ell_2]\sep \ell_1::\ell_2::\ell_3\sep\mu\rangle\]
  is a state in $\Sigma_{2,3}$. Here, $\ell_1$ is the topmost element
  of the stack of $\sigma$, $\ell_2$ is the underlying element and
  $\ell_3$ is the element still below it. \qed
\end{example}
\begin{definition}
  \label{def:types}
  The set of \emph{types} of our simplified Java Virtual Machine
  is $\mathbb{T}=\mathbb{K}\cup\{\mathtt{int},\mathtt{void}\}$,
  where $\mathbb{K}$ is the set of all classes.
The $\mathtt{void}$ type can only be used as the return type of methods.
  A method signature is denoted by $\kappa.m(t_1,\ldots,t_p):t$ standing
  for a method named $m$, defined in class $\kappa$,
  expecting $p$ explicit parameters of type, respectively,
  $t_1,\ldots,t_p$ and returning a value of type $t$, or returning no
  value when $t=\mathtt{void}$.\qed
\end{definition}
We recall that in object-oriented languages, a non-static method
$\kappa.m(t_1,\ldots,t_p):t$ has also an \emph{implicit} parameter of
type $\kappa$ called \texttt{this} inside the code of the method.
Hence, the actual number of parameters is $p+1$.

A restricted set of eleven Java bytecode instructions is considered
in~\cite{SpotoMP10}. These instructions exemplify the operations
that the Java Virtual Machine performs.
Similarly, in this paper we only consider nine instructions,
but our implementation handles most of their variants.
%
\begin{definition}\label{def:instructions}
  We let $\correctins$ denote the set consisting of the 
  following Java bytecode instructions.
  \begin{itemize}
  \item $\mathsf{const}\ c$, pushes the constant $c$ on top
    of the stack.
  \item $\mathsf{dup}$, duplicates the topmost element of the stack.
  \item $\mathsf{new}\ \kappa$, creates an object of class $\kappa$
    and pushes a reference to it on the stack.
  \item $\mathsf{load}\ i$, pushes the value of local variable $i$ on
    top of the stack.
  \item $\mathsf{store}\ i$, pops the top value from the stack and
    writes it into local variable $i$.
  \item $\mathsf{add}$, pops the topmost two values from the stack and
    pushes their sum instead.
  \item $\mathsf{putfield}\ f$, where $f$ has integer type, 
    pops the topmost two values $v$ (the top)
    and $\ell$ (under $v$) from the stack where $\ell$ must be a
    reference to an object $o$ or $\mathtt{null}$;
    if $\ell$ is $\mathtt{null}$, the computation stops,
    else $v$ is stored into field $f$ of $o$.
  \item $\mathsf{ifeq\ of\ type}\ t$, with 
    $t\in\mathbb{K}\cup\{\mathtt{int}\}$,
    pops the topmost element from the
    stack and checks if it is 0 (when $t$ is $\mathtt{int}$) or
    $\mathtt{null}$ (when $t$ is a class); if it is not the case,
    the computation stops.
  \item $\mathsf{if}\mathit{\langle{}cond\rangle}
    \mathsf{\ of\ type}\ \mathtt{int}$, with 
    $\mathit{cond}\in\{\mathsf{lt},\mathsf{le},
    \mathsf{gt},\mathsf{ge}\}$,
    pops the topmost element from the stack and checks,
    respectively, if it is less than 0,
    less than or equal to 0, greater than 0,
    greater than or equal to 0;
    if it is not the case, the computation stops.
  \item $\mathsf{call}\ \kappa.m(t_1,\ldots,t_p):t$.
    If $m$ is a static method, this instruction 
    pops the topmost $p$ values (the \emph{actual parameters})
    $a_1$, \ldots, $a_p$ from the stack (where $a_p$ is the
    topmost value) and $m$ is run from a state having an empty stack
    and a set of local variables bound to $a_1,\ldots,a_p$.
    If $m$ is not static, this instruction
    pops the topmost $p+1$ values (the \emph{actual parameters})
    $a_0$, $a_1$, \ldots, $a_p$ from the stack (where $a_p$ is the
    topmost value).
    Value $a_0$ is called \emph{receiver} of the call and must be
    $\mathtt{null}$ or a reference to an object of class $\kappa$
    or of a subclass of $\kappa$.
    If the receiver is $\mathtt{null}$, the computation stops.
    Otherwise, method $m$ is run from a state having an empty stack
    and a set of local variables bound to $a_0,a_1,\ldots,a_p$.\qed
  \end{itemize}
\end{definition}
Unlike~\cite{SpotoMP10}, we do not consider the instruction
$\mathsf{getfield}\ f$, which is used for getting the value
of the field $f$ of an object, and
$\mathsf{putfield}\ f$, where $f$ has class type.
This is because we cannot design an exact abstraction,
as defined in Sect.~\ref{section:exact-approximations},
of these instructions.
We also do not consider the instruction
$\mathsf{ifne\ of\ type}\ t$, which pops the topmost element from
the stack, checks if it is 0 (when $t$ is \texttt{int}) or
\texttt{null} (when $t$ is a class) and, if it is the case,
stops the computation. This is because we have implemented
the results of this paper inside the Julia analyser,
which now systematically replaces the $\mathsf{ifne}$ instruction
with a disjunction of $\mathsf{iflt}$ (less than 0) and
$\mathsf{ifgt}$ (greater than 0); these two instructions belong
to the set considered by our implementation.
Finally, the $\mathsf{call}$ instruction considered
in~\cite{SpotoMP10} has the form 
\[\mathsf{call}\ \kappa_1.m(t_1,\ldots,t_p):t,\ldots,
\kappa_n.m(t_1,\ldots,t_p):t\] where 
$S=\{\kappa_1.m(t_1,\ldots,t_p):t,\ldots,\kappa_n.m(t_1,\ldots,t_p):t\}$
is an over-approximation of the set of methods that might be
called at run-time, at the program point where the call
occurs. This is because object-oriented languages,
such as Java bytecode, allow dynamic lookup of method
implementations in method calls, on the basis of the run-time
class of their receiver. Hence, the exact control-flow graph
of a program is not computable in general, but an
over-approximation can be computed instead.
In this paper, we present a technique for proving
\emph{existential} non-termination \ie for proving that
\emph{there exists} some inputs that lead to an
infinite execution. So, we have to ensure that the
methods we consider in the $\mathsf{call}$ instructions
are effectively called at run-time: this happens when $S$
only consists of one element.
Therefore, unlike~\cite{SpotoMP10}, we only consider 
calls of the form $\mathsf{call}\ \kappa.m(t_1,\ldots,t_p):t$
in this paper, and our technique cannot deal with situations
where $S$ consists of more than one element.

We assume that \emph{flat} code, as the one in Fig.~\ref{fig:sum_term_java},
is given a structure in terms of blocks of code linked by arrows expressing how
the flow of control passes from one to another.
We require that a $\mathsf{call}$ instruction can only occur at the
beginning of a block. For instance, Fig.~\ref{fig:sum_nonterm_blocks}
shows the blocks derived from the code of the method \texttt{sum} in
Fig.~\ref{fig:sum_term_java}. Note that at the beginning of the methods,
the local variables hold the parameters of the method.

From now on, a \emph{Java bytecode program} will be a graph of blocks, such
as that in Fig.~\ref{fig:sum_nonterm_blocks}; inside each block, there is
one or more instructions among those described in Definition~\ref{def:instructions}.
This graph typically contains many disjoint subgraphs, each corresponding to
a different method or constructor. The ends of a method or constructor, where
the control flow returns to the caller, are the end of every block with no
successor, such as the leftmost one in Fig.~\ref{fig:sum_nonterm_blocks}.
For simplicity, we assume that the stack there contains exactly as many elements
as are needed to hold the return value (normally $1$ element, but
$0$ element in the case of methods returning $\mathtt{void}$, such as all the
constructors or the \texttt{main} method).

A denotational semantics for Java bytecode is presented in~\cite{SpotoMP10}
together with a \emph{path-length} relational abstract domain that is
used for proving termination of Java bytecode programs. Denotations
are state transformers that can be composed to model the sequential
execution of instructions.
\begin{definition}
  \label{def:denotation}
  A denotation is a partial function $\Sigma\to\Sigma$
  from an \emph{input} state to an \emph{output} or \emph{final}
  state. The set of denotations is denoted by $\Delta$.
  When we want to fix the number of local variables and stack elements
  in the input and output states, we write $\Delta_{l_i,s_i\to l_o,s_o}$,
  standing for $\Sigma_{l_i,s_i}\to\Sigma_{l_o,s_o}$.
  Let $\delta_1,\delta_2\in\Delta$.
  Their \emph{sequential composition} is
  $\delta_1;\delta_2=\lambda\sigma.\delta_2(\delta_1(\sigma))$,
  which is undefined when $\delta_1(\sigma)$ is undefined
  or when $\delta_2(\delta_1(\sigma))$ is undefined. \qed
\end{definition}
For each instruction $\mathsf{ins}$ in $\correctins$ and
program point $q$ where $\mathsf{ins}$ occurs,
\cite{SpotoMP10} provides the definition of a corresponding
denotation $\mathit{ins}_q$.
\begin{example}
  Let $q$ be a program point where the instruction $\mathsf{dup}$
  occurs and let $\#l$ and $\#s$ be the number of local variables
  and stack elements at $q$. The denotation $\mathit{dup}_q$ corresponding
  to $\mathsf{dup}$ at $q$ is defined as: 
  $\mathit{dup}_q\in \Delta_{\#l,\#s\to \#l,\#s+1}$ and
  $\mathit{dup}_q = \lambda\langle l\sep\mathit{top}::s\sep\mu\rangle.
  \langle l\sep\mathit{top}::\mathit{top}::s\sep\mu\rangle$ where
  $\mathit{top}::s$ denotes a non-empty stack whose top element is
  $\mathit{top}$ and remaining portion is $s$. \qed
\end{example}
\cite{SpotoMP10} also defines the abstraction of
$\mathit{ins}_q$ into its path-length polyhedron
$\mathit{ins}^{\mathbb{PL}}_q$.
\begin{definition}
  \label{def:path_length_domain}
  Let $l_i,s_i,l_o,s_o\in\nat$.
  The set $\mathbb{PL}_{l_i,s_i\to l_o,s_o}$
  of the \emph{path-length polyhedra} contains all finite
  sets of integer linear constraints over the variables
  $\{\inputv{l}^k\mid 0\le k <l_i\}\cup\{\inputv{s}^k\mid 0\le k<s_i\}
  \cup\{\outputv{l}^k\mid 0\le k<l_o\}\cup\{\outputv{s}^k\mid 0\le k<s_o\}$,
  using only the $\le$, $\ge$ and $=$ comparison operators. \qed
\end{definition}
The path-length polyhedron $\mathit{ins}_q^{\mathbb{PL}}$ describes the
relationship between the \emph{sizes} $\inputv{l}^k$ and $\inputv{s}^k$
of the local variables and stack elements in the input state
of $\mathit{ins}_q$ and the sizes $\outputv{l}^k$ and $\outputv{s}^k$
of the local variables and stack elements in the output state
of $\mathit{ins}_q$.
The size of a local variable or stack element $v$ in a 
memory $\mu$ is denoted by $\len(v,\mu)$ and is informally
defined as: if $v\in\mathbb{Z}$ then $\len(v,\mu)=v$,
if $v$ is \texttt{null} then $\len(v,\mu)=0$ and
if $v$ is a location then $\len(v,\mu)$ is the maximal
length in $\mu$ of a chain of locations that one can follow
from $v$.
\begin{example}
  \cite{SpotoMP10} defines
  $\mathit{dup}_q^\mathbb{PL}=\mathit{Unchanged}_q(\#l,\#s)\cup
  \{\inputv{s}^{\#s-1}=\outputv{s}^{\#s}\}$
  where $\mathit{Unchanged}_q(\#l,\#s) =
  \{\inputv{l}^i=\outputv{l}^i\mid 0\leq i < \#l\}
  \cup\{\inputv{s}^i=\outputv{s}^i\mid 0\leq i < \#s\}$.
  Hence, $\mathit{dup}_q^\mathbb{PL}$ expresses the fact that
  after an execution of $\mathsf{dup}$, the new top of the
  stack has the same path-length as the former one
  ($\{\inputv{s}^{\#s-1}=\outputv{s}^{\#s}\}$)
  and that the path-length of the local variables and stack
  elements is unchanged ($\mathit{Unchanged}_q(\#l,\#s)$).
  \qed
\end{example}

Note that \cite{SpotoMP10} also provides the definition
of the abstract counterpart $;^{\mathbb{PL}}$ of the 
operator $;$ used for composing denotations. The operator
$;^{\mathbb{PL}}$ is hence used for composing path-length
polyhedra.
\begin{definition}
  \label{def:abstract_transformers}
  Let $\pl_1\in\mathbb{PL}_{l_i,s_i\to l_t,s_t}$ together with
  $\pl_2\in\mathbb{PL}_{l_t,s_t\to l_o,s_o}$.
  Let $T=\{\overline{l}^0,\ldots,\overline{l}^{l_t-1},
  \overline{s}^0,\ldots,\overline{s}^{s_t-1}\}$. We define
  $\pl_1;^\mathbb{PL}\pl_2\in\mathbb{PL}_{l_i,s_i\to l_o,s_o}$ as
  \[
  \pl_1;^\mathbb{PL}\pl_2=\exists_T\left(
    \pl_1[\outputv{v}\mapsto\overline{v}\mid\overline{v}\in T]\cup
    \pl_2[\inputv{v}\mapsto\overline{v}\mid\overline{v}\in T]
  \right)
  \]
  where $\pl_1[\outputv{v}\mapsto\overline{v}\mid\overline{v}\in T]$
  (resp. $\pl_2[\inputv{v}\mapsto\overline{v}\mid\overline{v}\in T]$)
  denotes the replacement in $\pl_1$ (resp. $\pl_2$) of
  $\outputv{v}$ with $\overline{v}$
  (resp. $\inputv{v}$ with $\overline{v}$).
  \qed
\end{definition}

The abstractions $\mathit{ins}_q^{\mathbb{PL}}$,
for each $\mathsf{ins}\in\correctins$, and
the abstraction $;^{\mathbb{PL}}$ are all proved to be
\emph{correct} \ie \cite{SpotoMP10} provides the proof
that these abstractions include their concrete counterpart
in their concretisation.

\begin{definition}
  \label{def:model}
  Let $\mathit{pl}\in\mathbb{PL}_{l_i,s_i\to l_o,s_o}$ and
  $\mathit{\rho}$ be an assignment from a superset of the
  variables of $\mathit{pl}$ into $\mathbb{Z}\cup\{+\infty\}$.
  We say that $\rho$ is a \emph{model} of $\mathit{pl}$
  and we write $\rho\models\mathit{pl}$ when $\rho(\mathit{pl})$
  is true, that is, by substituting, in $\pl$,
  the variables with their values provided by $\rho$, we
  get a tautological set of ground constraints.
  \qed
\end{definition}

Any state can be mapped into an input path-length assignment,
when it is considered as the input state of a denotation,
or into an output path-length assignment, when it is
considered as the output state of a denotation.
\begin{definition}
  \label{def:assignment}
  Let $\langle l\sep s\sep\mu\rangle\in\Sigma_{\#l,\#s}$. Its
  \emph{input path-length assignment} is
  \begin{align*}
    \inputv{\len}(\langle l\sep s\sep\mu\rangle) = & 
    \;[\inputv{l}^k\mapsto\len(l^k,\mu)\mid 0\le k <\#l]\\
    & \cup[\inputv{s}^k\mapsto\len(s^k,\mu)\mid 0\le k <\#s]
  \end{align*}
  and, similarly, its \emph{output path-length assignment} is
  \begin{align*}
    \outputv{\len}(\langle l\sep s\sep\mu\rangle) = &
    \;[\outputv{l}^k\mapsto\len(l^k,\mu)\mid 0\le k <\#l]\\
    & \cup [\outputv{s}^k\mapsto\len(s^k,\mu)\mid 0\le k <\#s]~.
  \end{align*}
  \qed
\end{definition}
\begin{example}
  \label{example:len}
  In Example~\ref{example:state},
  \begin{align*}
    \inputv{\len}(\sigma)&=\left[\begin{array}{l}
        \inputv{l}^0\mapsto \len(5,\mu),\\
        \inputv{l}^1\mapsto \len(\ell_2,\mu),\\
        \inputv{s}^0 \mapsto \len(\ell_3,\mu),\\
        \inputv{s}^1 \mapsto \len(\ell_2,\mu),\\
        \inputv{s}^2 \mapsto \len(\ell_1,\mu)
      \end{array}\right]
    =\left[\begin{array}{l}
        \inputv{l}^0\mapsto 5,\\
        \inputv{l}^1\mapsto 1,\\
        \inputv{s}^0 \mapsto 2,\\
        \inputv{s}^1 \mapsto 1,\\
        \inputv{s}^2 \mapsto 2
      \end{array}\right]\;.
  \end{align*}
  Similarly, 
  \[\outputv{\len}(\sigma) =
  [\outputv{l}^0\mapsto 5, \outputv{l}^1\mapsto 1,
  \outputv{s}^0 \mapsto 2, \outputv{s}^1 \mapsto 1,
  \outputv{s}^2 \mapsto 2]\;.\] \qed
\end{example}


\section{Exact Abstractions}
\label{section:exact-approximations}
%
Our technique for proving non-termination of a
Java bytecode program $P$ consists in abstracting
$P$ as a $\clppl$ program $P_{\mathit{CLP}}$, then
in proving non-termination of $P_{\mathit{CLP}}$,
and finally in concluding the non-termination of $P$
from that of $P_{\mathit{CLP}}$, when it is possible.
In~\cite{Payet09b}, we observed informally that when the abstraction
of $P$ as $P_{\mathit{CLP}}$ is \emph{exact}, the non-termination
of $P_{\mathit{CLP}}$ entails that of $P$.
In this section, we give a formal definition of exactness.
First, we start with preliminary definitions, where we let
$\codom(\delta_1)$ denote the codomain of the denotation $\delta_1$.
%
\begin{definition}\label{definition:input_output_assignment}
  Let $\pl\in\mathbb{PL}_{l_i,s_i\to l_o,s_o}$ and $\rho$
  be a model of $\pl$.
  We let $\inputv{\rho}$ denote the assignment obtained by
  restricting the domain of $\rho$ to the input variables
  $\inputv{l}^0$, \ldots, $\inputv{l}^{l_i-1}$
  and $\inputv{s}^0$, \ldots, $\inputv{s}^{s_i-1}$.
  We let $\outputv{\rho}$ denote the assignment obtained by
  restricting the domain of $\rho$ to the output variables
  $\outputv{l}^0$, \ldots, $\outputv{l}^{l_o-1}$
  and $\outputv{s}^0$, \ldots, $\outputv{s}^{s_o-1}$.  
  \qed
\end{definition}
%
\begin{definition}
  We say that a state $\sigma$ is \emph{compatible} with a
  denotation $\delta$ when $\sigma$ satisfies the static
  information at $\delta$ (number and type of local variables
  and stack elements). We say that a denotation $\delta_1$
  is \emph{compatible} with a denotation $\delta_2$ when
  any state in $\codom(\delta_1)$ is compatible
  with $\delta_2$. \qed
\end{definition}

Our definition of exactness is the following. Intuitively,
the abstraction of a denotation $\delta$ into a path-length
polyhedron $\pl$ is exact when $\pl$, considered as an 
input-output mapping from input to output variables,
exactly matches $\delta$ \ie any model of $\pl$ only
corresponds to states for which $\delta$ is defined.
%
\begin{definition}
  \label{definition:models}
  Let $\delta\in\Delta_{l_i,s_i\to l_o,s_o}$ and
  $\pl\in\mathbb{PL}_{l_i,s_i\to l_o,s_o}$.
  We say that $\pl$ is an \emph{exact abstraction} of $\delta$,
  and we write $\pl \models \delta$, when for any model $\rho$
  of $\pl$ and any state $\sigma$ compatible with $\delta$,
  $\inputv{\len}(\sigma) = \inputv{\rho}$ implies
  that $\delta(\sigma)$ is defined and
  $\outputv{len}(\delta(\sigma)) = \outputv{\rho}$.
  \qed
  %
  %
\end{definition}

Exactness is preserved by sequential composition:
%
\begin{proposition}\label{proposition:composition}
  Let $\delta_1\in\Delta_{l_i,s_i\to l_t,s_t}$,
  $\pl_1\in\mathbb{PL}_{l_i,s_i\to l_t,s_t}$ be such that
  $\pl_1\models\delta_1$.
  Let $\delta_2\in\Delta_{l_t,s_t\to l_o,s_o}$ and
  $\pl_2\in\mathbb{PL}_{l_t,s_t\to l_o,s_o}$ be such that
  $\pl_2\models\delta_2$.
  Suppose that $\delta_1$ is compatible with $\delta_2$.
  Then, we have $\pl_1;^{\mathbb{PL}}\pl_2\models\delta_1;\delta_2$.
\end{proposition}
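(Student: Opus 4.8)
The plan is to unfold the definition of exact abstraction (Definition~\ref{definition:models}) for the composite and reduce it, via the witness structure of the existential quantifier in $;^{\mathbb{PL}}$, to two successive applications of the hypotheses $\pl_1\models\delta_1$ and $\pl_2\models\delta_2$. So I fix an arbitrary model $\rho$ of $\pl_1;^{\mathbb{PL}}\pl_2$ and an arbitrary state $\sigma$ compatible with $\delta_1;\delta_2$ satisfying $\inputv{\len}(\sigma)=\inputv{\rho}$; the goal, in the forward (definedness-propagating) direction required by $\models$, is to prove that $(\delta_1;\delta_2)(\sigma)$ is defined and that $\outputv{\len}((\delta_1;\delta_2)(\sigma))=\outputv{\rho}$.

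The first step is the model-splitting. Since $\rho$ is a model of $\exists_T(\cdots)$, there are values for the intermediate variables $\overline{v}\in T$ witnessing the existential; collecting them yields an extension of $\rho$ satisfying $\pl_1[\outputv{v}\mapsto\overline{v}]\cup\pl_2[\inputv{v}\mapsto\overline{v}]$. From this extension I read off two assignments: $\rho_1$ on the variables of $\pl_1$, equal to $\rho$ on the input variables and carrying the witness values on the output variables, and $\rho_2$ on the variables of $\pl_2$, carrying the same witness values on its input variables and equal to $\rho$ on the output variables. Undoing the renamings gives $\rho_1\models\pl_1$ and $\rho_2\models\pl_2$, and by construction $\outputv{\rho_1}$ and $\inputv{\rho_2}$ assign the same numeric (intermediate) values.

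The second step applies the two exactness hypotheses in sequence. Because $\delta_1;\delta_2$ and $\delta_1$ share the same input signature $l_i,s_i$, the state $\sigma$ is compatible with $\delta_1$ as well; together with $\inputv{\len}(\sigma)=\inputv{\rho}=\inputv{\rho_1}$ and $\rho_1\models\pl_1$, the hypothesis $\pl_1\models\delta_1$ gives that $\tau:=\delta_1(\sigma)$ is defined and $\outputv{\len}(\tau)=\outputv{\rho_1}$. Since $\tau\in\codom(\delta_1)$ and $\delta_1$ is compatible with $\delta_2$, the state $\tau$ is compatible with $\delta_2$. The observation linking the two halves is that the input and output path-length assignments of one and the same state carry identical length values (they differ only in the names $\inputv{\cdot}$ versus $\outputv{\cdot}$); hence $\inputv{\len}(\tau)$ equals, entry by entry, $\outputv{\len}(\tau)=\outputv{\rho_1}$, which in turn equals $\inputv{\rho_2}$. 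Applying $\pl_2\models\delta_2$ to $\rho_2$ and $\tau$ then shows that $\delta_2(\tau)$ is defined with $\outputv{\len}(\delta_2(\tau))=\outputv{\rho_2}=\outputv{\rho}$. As $(\delta_1;\delta_2)(\sigma)=\delta_2(\delta_1(\sigma))=\delta_2(\tau)$ by Definition~\ref{def:denotation}, this is exactly the desired conclusion.

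I expect the main obstacle to be the first step: making the semantics of $\exists_T$ and of the two syntactic renamings precise enough that the witness values cleanly induce $\rho_1\models\pl_1$ and $\rho_2\models\pl_2$ over a genuinely shared intermediate valuation. The second delicate point is the bookkeeping identity $\inputv{\len}(\tau)=\inputv{\rho_2}$, which rests on the fact that $\inputv{\len}$ and $\outputv{\len}$ of a fixed state agree as functions of the underlying lengths; once that is stated carefully, the two applications of Definition~\ref{definition:models} chain together without further difficulty.
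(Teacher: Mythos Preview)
Your proposal is correct and follows essentially the same approach as the paper's own proof: both split the model $\rho$ via the existential witness for $T$ into $\rho_1\models\pl_1$ and $\rho_2\models\pl_2$ sharing the intermediate values, then apply $\pl_1\models\delta_1$ and $\pl_2\models\delta_2$ in sequence, using the compatibility of $\delta_1$ with $\delta_2$ and the observation that $\outputv{\len}(\delta_1(\sigma))=\outputv{\rho_1}$ yields $\inputv{\len}(\delta_1(\sigma))=\inputv{\rho_2}$. The two ``obstacles'' you flag are exactly the places where the paper's proof spends its bookkeeping effort.
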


Except from $\mathsf{call}$, all the bytecode instructions we consider
in this paper are exactly abstracted:
%
\begin{proposition}\label{proposition:instructions_exactness}
  For any $\mathsf{ins} \in \correctins\setminus 
  \{\mathsf{call}\}$ and program point $q$ where 
  $\mathsf{ins}$ occurs,
  we have $\mathit{ins}_q^{\mathbb{PL}}\models\mathit{ins}_q$.
\end{proposition}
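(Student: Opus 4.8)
The plan is to prove Proposition~\ref{proposition:instructions_exactness} by a case analysis over the eight instructions in $\correctins\setminus\{\mathsf{call}\}$, namely $\mathsf{const}\ c$, $\mathsf{dup}$, $\mathsf{new}\ \kappa$, $\mathsf{load}\ i$, $\mathsf{store}\ i$, $\mathsf{add}$, $\mathsf{putfield}\ f$, and the various conditional jumps $\mathsf{ifeq}$, $\mathsf{iflt}$, $\mathsf{ifle}$, $\mathsf{ifgt}$, $\mathsf{ifge}$. For each instruction $\mathsf{ins}$ at a program point $q$ with $\#l$ local variables and $\#s$ stack elements, I would unfold the two definitions supplied by~\cite{SpotoMP10}: the denotation $\mathit{ins}_q$ as an explicit partial state transformer, and the path-length polyhedron $\mathit{ins}_q^{\mathbb{PL}}$ as an explicit set of linear constraints. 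Then I would verify Definition~\ref{definition:models} directly: fix a model $\rho$ of $\mathit{ins}_q^{\mathbb{PL}}$ and a state $\sigma$ compatible with $\mathit{ins}_q$ with $\inputv{\len}(\sigma)=\inputv{\rho}$, and show both that $\mathit{ins}_q(\sigma)$ is defined and that $\outputv{\len}(\mathit{ins}_q(\sigma))=\outputv{\rho}$.

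The instructions split naturally into two groups. For the \emph{total} instructions — $\mathsf{const}\ c$, $\mathsf{dup}$, $\mathsf{load}\ i$, $\mathsf{store}\ i$, $\mathsf{add}$, $\mathsf{new}\ \kappa$ — the denotation is defined on every compatible state, so definedness is automatic; the only work is to check that the path-length of every local variable and stack slot in the output state is exactly the value forced by $\outputv{\rho}$. This follows because each such polyhedron is, up to renaming, $\mathit{Unchanged}_q(\cdot,\cdot)$ together with one or two equations pinning the affected slot (e.g.\ $\outputv{s}^{\#s}=\inputv{s}^{\#s-1}$ for $\mathsf{dup}$, $\outputv{s}^{\#s-2}=\inputv{s}^{\#s-2}+\inputv{s}^{\#s-1}$ for $\mathsf{add}$, $\outputv{s}^{\#s}=c$ for $\mathsf{const}\ c$ when $c\ge0$, and the corresponding clause for $c<0$), and the concrete semantics of $\len$ on integers, $\mathtt{null}$, and freshly created objects (whose reachable chain has length $0$, modeled by $\outputv{s}^{\#s}=1$ for $\mathsf{new}$) matches these equations on the nose. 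For the \emph{partial} instructions — the conditional jumps and $\mathsf{putfield}\ f$ — the subtlety is definedness: the denotation $\mathit{ins}_q$ is undefined precisely when the guard fails, so I must argue that no model $\rho$ of $\mathit{ins}_q^{\mathbb{PL}}$ corresponds to a guard-failing state. The polyhedron for, say, $\mathsf{ifgt}$ contains $\inputv{s}^{\#s-1}\ge 1$; since for an integer value $v$ we have $\len(v,\mu)=v$, a state with $\inputv{\len}(\sigma)=\inputv{\rho}$ must have topmost stack value $v=\rho(\inputv{s}^{\#s-1})\ge 1>0$, so the guard holds and the denotation is defined; the output slots are then governed by the $\mathit{Unchanged}$ part (with one slot popped). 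The case $\mathsf{ifeq\ of\ type}\ \kappa$ and $\mathsf{putfield}$ require the analogous observation that $\len(\mathtt{null},\mu)=0$, so the constraint forcing the relevant slot to $0$ exactly captures ``value is $\mathtt{null}$'' — except that here lies the one genuinely delicate point.

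The main obstacle is the direction ``path-length forces guard satisfaction'' for the reference-testing instructions. For a class-typed value, $\len(\ell,\mu)=0$ does \emph{not} imply $\ell=\mathtt{null}$ in general — a non-$\mathtt{null}$ location pointing to an object all of whose reference fields are $\mathtt{null}$ also has path-length $0$. This is exactly why $\mathsf{ifne\ of\ type}\ \kappa$ and $\mathsf{putfield}\ f$ with class-typed $f$ were excluded from $\correctins$ in Definition~\ref{def:instructions}: for those the abstraction is \emph{not} exact. I would therefore need to check carefully how~\cite{SpotoMP10} defines $\mathit{ifeq\ of\ type\ }\kappa^{\mathbb{PL}}$ and $\mathit{putfield}^{\mathbb{PL}}$ (for integer $f$), and confirm that for $\mathsf{ifeq}$ — whose guard \emph{succeeds} on $\mathtt{null}$ and on $0$ — the polyhedron's constraint is the non-restrictive one making $\mathsf{ifeq}$ behave on the abstract side like a slot-popping total map whenever the model sets that slot to $0$, and conversely that when the slot is forced to $0$ the concrete value (null, or integer $0$) indeed passes the test; the asymmetry with $\mathsf{ifne}$ is precisely that $\len=0$ is implied by (though does not imply) being null/zero. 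For $\mathsf{putfield}\ f$ with integer $f$, the relevant guard is ``receiver $\ne\mathtt{null}$'', and here one must rely on the polyhedron carrying a constraint $\inputv{s}^{\#s-2}\ge 1$, which via $\len$ on integers/locations does force the receiver to be a genuine object — but I would want to double-check the definition in~\cite{SpotoMP10} handles the side effect on $\len$ of the modified object correctly (writing an integer into an integer field cannot change any reachability chain, so the $\mathit{Unchanged}$-style constraints on all surviving slots remain valid). Modulo locating and quoting these definitions verbatim, each case is a short routine verification; the proof is essentially eight independent unfoldings, and the paper can reasonably relegate the full details to the long version~\cite{PayetMP12long}.
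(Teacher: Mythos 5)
Your overall strategy --- unfold $\mathit{ins}_q$ and $\mathit{ins}_q^{\mathbb{PL}}$ for each instruction, check Definition~\ref{definition:models} directly, and split the work into totality for the unguarded instructions versus definedness for the guarded ones --- is exactly the paper's proof, down to the auxiliary claim that a $\mathsf{putfield}$ on an integer field cannot alter any path-length of the surviving slots.

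The one genuine problem is your premise at what you yourself identify as the crux. You assert that for a class-typed value $\len(\ell,\mu)=0$ does not imply $\ell=\mathtt{null}$, citing a non-null location whose object has only null reference fields. Under the paper's convention this is false: the path-length of a non-null location counts the locations in the chain starting from and including $\ell$ itself, so such a location has path-length $1$, not $0$ (see Example~\ref{example:len}, where $\len(\ell_4,\mu)=1$ with $o_4=[f\mapsto\mathtt{null}]$; this is also why $\mathit{new}_q^{\mathbb{PL}}$ pins the fresh slot to $1$ rather than $0$, and why $\mathit{Unchanged}_q$ carries $\inputv{s}^i\ge 0$ for reference slots). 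Consequently, for a reference, $\len=0$ \emph{is} equivalent to being $\mathtt{null}$, which is precisely what the paper's proof of the $\mathsf{ifeq\ of\ type}\ t$ case with $t\neq\mathtt{int}$ uses to obtain definedness. As written, your treatment of that case is internally inconsistent: you claim that a model forcing the top slot to $0$ guarantees the concrete value passes the test, while simultaneously asserting that $\len=0$ does not force $\mathtt{null}$ --- if the latter held, exactness of $\mathsf{ifeq}$ on class types would in fact \emph{fail}, since Definition~\ref{definition:models} demands definedness on \emph{every} compatible state matching $\inputv{\rho}$. Your diagnosis of the excluded instructions is off for the same reason: $\mathsf{ifne}$ is dropped because ``$\neq 0$'' is not a conjunction of linear inequalities, and class-typed $\mathsf{putfield}$ because writing a reference can change path-lengths elsewhere in the heap, not because of any ambiguity in ``$\len=0$ versus $\mathtt{null}$''. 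Once the convention on $\len$ is corrected, the remaining case analysis goes through essentially as you describe.
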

The proof for 
$\mathsf{new}\ \kappa$ assumes that the
denotation of this bytecode is a total map. This is true only if we
assume that the system has infinite memory.
Termination caused by out of memory is not really termination from
our point of view. 
We deal with the $\mathsf{call}$
instruction in Sect.~\ref{section:compilation}: we do not abstract its
denotation into a path-length polyhedron but rather
translate it into a call to a predicate
(see Definitions~\ref{def:clp_block_call_not_void}%
--\ref{def:clp_block_call_not_void_one_inst} below).

\begin{example}
  Let $q$ be a program point where the instruction $\mathsf{dup}$
  occurs and let $\#l$ and $\#s$ be the number of local variables
  and stack elements at $q$. We have 
  \begin{align*}
    \mathit{dup}_q
    &= \lambda\langle l\sep\mathit{top}::s\sep\mu\rangle.
    \langle l\sep\mathit{top}::\mathit{top}::s\sep\mu\rangle\\
    \mathit{dup}_q^\mathbb{PL}
    &=\mathit{Unchanged}_q(\#l,\#s)\cup
    \{\inputv{s}^{\#s-1}=\outputv{s}^{\#s}\}\;.
  \end{align*}
  Let $\rho$ be a model of $\mathit{dup}_q^\mathbb{PL}$.
  Let $\sigma$ be a state that is compatible with
  $\mathit{dup}_q$. Then, $\sigma\in\Sigma_{\#l,\#s}$ and
  $\sigma$ has the form $\langle l\sep\mathit{top}::s\sep\mu\rangle$.
  Clearly, $\mathit{dup}_q(\sigma)$ is defined and we have
  $\mathit{dup}_q(\sigma) =
  \langle l\sep\mathit{top}::\mathit{top}::s\sep\mu\rangle$.
  Suppose that $\inputv{\len}(\sigma) = \inputv{\rho}$.
  \begin{itemize}
  \item For any $l^i\in l$, we have
    $\outputv{len}(\mathit{dup}_q(\sigma))(\outputv{l}^i) = 
    \inputv{\len}(\sigma)(\inputv{l}^i)$ because $\sigma$ and
    $\mathit{dup}_q(\sigma)$ have the same array of local variables
    $l$ and memory $\mu$.
    Moreover, as $\inputv{\len}(\sigma) = \inputv{\rho}$, we have
    $\inputv{\len}(\sigma)(\inputv{l}^i) = \inputv{\rho}(\inputv{l}^i)$.
    As $\rho$ is a model of $\mathit{dup}_q^\mathbb{PL}$, with
    $\mathit{Unchanged}_q(\#l,\#s)\subseteq\mathit{dup}_q^\mathbb{PL}$,
    we have $\inputv{\rho}(\inputv{l}^i) = \outputv{\rho}(\outputv{l}^i)$.
    Therefore, $\outputv{len}(\mathit{dup}_q(\sigma))(\outputv{l}^i)
    = \outputv{\rho}(\outputv{l}^i)$.
  \item Similarly, for any $s^i\in \mathit{top}::s$, 
    $\outputv{len}(\mathit{dup}_q(\sigma))(\outputv{s}^i)
    = \outputv{\rho}(\outputv{s}^i)$.
  \item Finally, consider the
    top element $s^{\#s} = \mathit{top}$ in the stack of 
    $\mathit{dup}_q(\sigma)$. We have
    \begin{align*}
      \outputv{len}(\mathit{dup}_q(\sigma))(\outputv{s}^{\#s})
      &= \len(\mathit{top},\mu)\\
      &= \inputv{len}(\sigma)(\inputv{s}^{\#s-1}) =
      \inputv{\rho}(\inputv{s}^{\#s-1})\;.
    \end{align*}
    As $\rho$ is a model of $\mathit{dup}_q^\mathbb{PL}$, with
    $\{\inputv{s}^{\#s-1}=\outputv{s}^{\#s}\}\subseteq
    \mathit{dup}_q^\mathbb{PL}$, we have
    $\inputv{\rho}(\inputv{s}^{\#s-1})=
    \outputv{\rho}(\outputv{s}^{\#s})$. So, 
    $\outputv{len}(\mathit{dup}_q(\sigma))(\outputv{s}^{\#s})
    =\outputv{\rho}(\outputv{s}^{\#s})$.
  \end{itemize}
  Consequently, we have 
  $\outputv{len}(\mathit{dup}_q(\sigma)) = \outputv{\rho}$.
  \qed
\end{example}



\section{Constraint Logic Program derived from
  Java bytecode}
\label{section:compilation}
%
The technique that we describe in~\cite{SpotoMP10} for proving
the termination of a Java bytecode program $P$ computes a $\clppl$
program $P_{\mathit{CLP}}$ which is an \emph{over-approximation} of
$P$, in the sense that the set of executions of $P$ is ``included''
in that of $P_{\mathit{CLP}}$. This is because some bytecode
instructions considered in~\cite{SpotoMP10}
\eg $\mathsf{call}$, are not exactly
abstracted, in the sense of Definition~\ref{definition:models},
but are over-approximated instead.
\begin{example}
  The bytecode instruction $\mathsf{getfield}\ f$ pops the
  topmost value $\ell$ of the stack, which must be a
  reference to an object $o$ or $\mathtt{null}$, and pushes
  $o(f)$ at its place. If $\ell$ is $\mathtt{null}$, the
  computation stops.
  For any program point $q$ with $\#l$ local variables
  and $\#s$ stack elements and any field $f$ with integer type,
  \cite{SpotoMP10} defines:
  \begin{align*}
    \mathit{getfield}_q\ f
    &= \lambda\langle l\sep\ell::s\sep\mu\rangle.\\
    & \hspace{1.6cm}\begin{cases}
      \langle l\sep\mu(\ell)(f)::s\sep\mu\rangle & \text{if $\ell\not=\mathtt{null}$}\\
      \mathit{undefined} & \text{otherwise}\\
    \end{cases}\\
    \mathit{getfield}_q^\mathbb{PL}\ f 
    &= \mathit{Unchanged}_q(\#l,\#s-1)\cup
    \{\inputv{s}^{\#s-1}\geq 1\}\;.
  \end{align*}
  The denotation $\mathit{getfield}_q\ f$ is not
  exactly abstracted by the path-length polyhedron
  $\mathit{getfield}_q^\mathbb{PL}\ f$ because 
  $\mathit{getfield}_q^\mathbb{PL}\ f$ does not provide
  any constraint for the top $\outputv{s}^{\#s-1}$ of the
  output stack, while in $\mathit{getfield}_q\ f$ a new element 
  appears on top of the output stack.
  Hence, $\mathit{getfield}_q^\mathbb{PL}\ f$ does not
  exactly matches $\mathit{getfield}_q\ f$:
  for some model $\rho$ of $\mathit{getfield}_q^\mathbb{PL}\ f$,
  there exists a state $\sigma$ which is such that 
  $\inputv{\len}(\sigma) = \inputv{\rho}$ and
  $\outputv{len}(\delta(\sigma)) \neq \outputv{\rho}$.
  For instance, suppose that $\#l = 2$ and $\#s = 3$
  and let
  \begin{align*}
    \rho = [
    & \inputv{l}^0\mapsto 5,
    \inputv{l}^1\mapsto 1,
    \inputv{s}^0\mapsto 2,
    \inputv{s}^1\mapsto 1,
    \inputv{s}^2\mapsto 2,\\
    & \outputv{l}^0\mapsto 5,
    \outputv{l}^1\mapsto 1,
    \outputv{s}^0\mapsto 2,
    \outputv{s}^1\mapsto 1,
    \outputv{s}^2\mapsto 10] \;.
  \end{align*}
  Then, $\rho$ is a model of $\mathit{getfield}_q^\mathbb{PL}\ f$.
  The state
  \[\sigma=\langle[5,\ell_2]\sep \ell_1::\ell_2::\ell_3\sep\mu\rangle\]
  of Example~\ref{example:state} is compatible with 
  $\mathit{getfield}_q\ f$ and, by Example~\ref{example:len},
  we have $\inputv{\len}(\sigma) = \inputv{\rho}$. Moreover,
  as $\mu(\ell_1)(f) = \ell_4$,
  \[(\mathit{getfield}_q\ f)(\sigma) =
  \langle[5,\ell_2]\sep \ell_4::\ell_2::\ell_3\sep\mu\rangle\;.\]
  As $\len(\ell_4,\mu) = 1$, we have
  \[\begin{array}{l}
    \outputv{\len}((\mathit{getfield}_q\ f)(\sigma)) = \\[1ex]
    \multicolumn{1}{r}{\hspace{2cm}
      [\outputv{l}^0\mapsto 5,
      \outputv{l}^1\mapsto 1,
      \outputv{s}^0\mapsto 2,
      \outputv{s}^1\mapsto 1,
      \outputv{s}^2\mapsto 1]\;.}
  \end{array}\]
  Consequently, 
  $\outputv{\len}((\mathit{getfield}_q\ f)(\sigma))
  \neq \outputv{\rho}$. \qed
\end{example}

For non-termination,
we rather need an \emph{under-approximation} of $P$, \ie a
program whose set of executions is ``included'' in that of $P$.
Note that because of
Proposition~\ref{proposition:instructions_exactness},
when $P$ only consists of instructions in
$\correctins\setminus\{\mathsf{call}\}$, the set of executions
of $P_{\mathit{CLP}}$, computed as in~\cite{SpotoMP10}, exactly matches
that of $P$ and we have:
%
\begin{theorem}
  \label{theorem:clp_completeness}
  Let $J$ be a Java Virtual Machine,
  $P$ be a Java bytecode program consisting of
  instructions in $\correctins\setminus\{\mathsf{call}\}$, and
  $b$ be a block of $P$. 
  Let $P_{\mathit{CLP}}$ be the abstraction of $P$ as a
  $\clppl$ program, computed as in~\cite{SpotoMP10}.
  The query $b(\mathit{vars})$ has only terminating computations
  in $P_{\mathit{CLP}}$, for any fixed integer values for
  $\mathit{vars}$, \emph{if and only if} all executions of $J$
  started at block $b$ terminate.
\end{theorem}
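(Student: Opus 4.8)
The plan is to establish the two directions of the biconditional by exhibiting a step-by-step simulation between executions of $J$ started at block $b$ and computations of the query $b(\mathit{vars})$ in $P_{\mathit{CLP}}$, using exactness (Propositions~\ref{proposition:instructions_exactness} and~\ref{proposition:composition}) as the workhorse. Since $P$ contains no $\mathsf{call}$ instructions, every block of $P$ is abstracted purely by composing path-length polyhedra of individual instructions, and by Proposition~\ref{proposition:instructions_exactness} each such instruction polyhedron is an exact abstraction of the corresponding denotation; by Proposition~\ref{proposition:composition} the polyhedron $\mathit{block}_b^{\mathbb{PL}}$ attached to a whole block $b$ is an exact abstraction of its denotation $\mathit{block}_b$, provided consecutive instructions and blocks are pairwise compatible (which holds by the well-formedness assumption on the block graph). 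So the essential semantic fact I will use is: $\mathit{block}_b^{\mathbb{PL}} \models \mathit{block}_b$ for every block $b$ of $P$.

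First I would set up the correspondence between JVM states and assignments. A JVM execution started at block $b$ visits a sequence of states $\sigma_0, \sigma_1, \ldots$, passing control from block to block along arrows of the graph; an edge from $b$ to $b'$ is taken exactly when $\mathit{block}_b(\sigma_i)$ is defined and $\sigma_{i+1} = \mathit{block}_b(\sigma_i)$ satisfies the guard selecting $b'$. On the CLP side, a computation of $b(\mathit{vars})$ with fixed integer values for $\mathit{vars}$ resolves, at each step, the clause for the current block against the accumulated store; by the construction in~\cite{SpotoMP10}, this clause carries the constraint $\mathit{block}_b^{\mathbb{PL}}$ relating input to output variables and then calls the predicate of a successor block. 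I would prove, by induction on the number of steps, that the sequence of input path-length assignments $\inputv{\len}(\sigma_i)$ that arises from a JVM run is precisely the sequence of (restrictions of) satisfying assignments of the successive CLP stores, and conversely that every satisfying CLP computation arises from such a JVM run. The forward direction (JVM $\to$ CLP): correctness of the abstraction (stated in the excerpt for all $\mathit{ins}_q^{\mathbb{PL}}$ and $;^{\mathbb{PL}}$) guarantees that whatever the JVM does is reflected by \emph{some} model of the block polyhedron, so a JVM run yields a CLP computation of the same length; hence if all CLP computations terminate, so do all JVM runs. The backward direction (CLP $\to$ JVM): here exactness is indispensable — given a model $\rho$ of $\mathit{block}_b^{\mathbb{PL}}$ whose input part equals $\inputv{\len}(\sigma_i)$ for the current JVM state, exactness ($\mathit{block}_b^{\mathbb{PL}} \models \mathit{block}_b$) forces $\mathit{block}_b(\sigma_i)$ to be \emph{defined} and $\outputv{\len}(\mathit{block}_b(\sigma_i)) = \outputv{\rho}$; so every CLP step can be matched by an actual JVM step, and an infinite CLP computation yields an infinite JVM run. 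Combining the two directions gives the biconditional: $J$ started at $b$ has only terminating executions iff $b(\mathit{vars})$ has only terminating computations for all integer $\mathit{vars}$.

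The main obstacle I expect is handling the \emph{branching} structure correctly in the simulation. The instruction-level exactness statements concern straight-line denotations, but a block ends with a conditional ($\mathsf{ifeq}$, $\mathsf{if}\langle cond\rangle$, etc.) whose denotation is a partial map — it is undefined precisely when the guard fails — and the block graph routes control to different successors depending on which guard holds. I will need to be careful that "compatibility" of a block's codomain with each of its successor blocks holds, that the disjunction of guard polyhedra over all outgoing edges covers exactly the situations in which the block's denotation is defined, and that exactness of each guarded branch polyhedron rules out spurious CLP computations that take an edge the JVM could not take. A second, more bookkeeping-style difficulty is the base/leaf case: blocks with no successor correspond to method return, and the assumption that the stack there holds exactly the return value must be matched against the CLP clause that has no body call, so that "terminating computation" on both sides means the same thing. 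Once these structural points are pinned down, the inductive simulation goes through routinely, and the theorem follows.
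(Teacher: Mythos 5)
Your overall strategy coincides with the paper's: the forward (JVM $\to$ CLP) direction is just the soundness of the path-length abstraction (the paper simply cites Theorem~56 of~\cite{SpotoMP10} for it), and the backward (CLP $\to$ JVM) direction rests on exactness of the block constraints, obtained from Propositions~\ref{proposition:instructions_exactness} and~\ref{proposition:composition}. The paper organises the backward direction slightly differently --- it fixes a single initial state $\sigma$, shows that each composed denotation $(\delta_0;\ldots;\delta_i)(\sigma)$ is defined using exactness of the composed constraint $c_0;^{\mathbb{PL}}\cdots;^{\mathbb{PL}}c_i$, and then invokes the equivalence of the denotational and operational semantics to obtain an infinite run --- whereas you propose an explicit step-by-step operational simulation; these are essentially interchangeable. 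Your worries about guards and leaf blocks are legitimate but are already discharged by the fact that the conditional instructions are in $\correctins$ and are exactly abstracted.

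There is, however, one genuine missing idea: the construction of the \emph{initial} JVM state. Your induction for the backward direction starts from ``the current JVM state $\sigma_i$'', but the hypothesis of the theorem only gives you integer values $\mathit{vars}$ for which the query $b(\mathit{vars})$ diverges; you must exhibit some state $\sigma_0$, compatible with the first instruction of $b$, such that $\inputv{\len}(\sigma_0)$ equals those values. Exactness (Definition~\ref{definition:models}) is quantified over states that \emph{already} have the right input path-length assignment, so it cannot produce $\sigma_0$ for you. This is non-trivial: for a reference-typed variable assigned path-length $n\geq 1$ one must actually build a memory containing a chain of $n$ locations, which is impossible if the relevant class has no reference-typed field. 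The paper isolates this as Proposition~\ref{proposition:instructions_exists_sigma} and has to assume (or enforce, by adding a dummy field of type $\kappa$ to each class $\kappa$) that every class admits objects realising arbitrary positive path-lengths. Without this base case your simulation never gets off the ground, so you should state and prove such an existence lemma explicitly.
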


In this section, we consider a Java bytecode program $P$
consisting of \emph{any} instructions in $\correctins$
(including $\mathsf{call}$) and we describe a technique for
abstracting $P$ as a $\clppl$ program $P_{\mathit{CLP}}$ whose
non-termination entails that of $P$. We do not abstract the
$\mathsf{call}$ instruction into a path-length polyhedron
but we rather translate it into an explicit call
to a predicate. We consider a block $b$:
\[\block{\begin{array}{c|c}b & \begin{smallmatrix}
      \mathsf{ins}_1\\
      \mathsf{ins}_2\\
      \cdots\\
      \mathsf{ins}_w
    \end{smallmatrix}\end{array}}\ttob\]
of $P$ occurring in a method $m_b$ and we describe the set
$b_{\mathit{CLP}}$ of $\clppl$ clauses derived from $b$; here,
$\mathsf{ins}_1$ $\mathsf{ins}_2$, \ldots, $\mathsf{ins}_w$ are
the instructions of $b$ and $b_1$, \ldots, $b_n$ are the successor
blocks of $b$ in $P$.
We let $\inputv{\vars}$ be the input local variables and stack elements at
the beginning of $b$ and $\outputv{\vars}$ be the output local variables
and stack elements at the end of $b$ (in some fixed order).

We let $\inputv{s}_b$ and $\outputv{s}_b$ be some fresh variables,
not occurring in $\inputv{\vars} \cup \outputv{\vars}$, and we use them
to capture the path-length of the return value of $m_b$.
In Definitions~\ref{def:clp_block_no_call_not_void}--%
\ref{def:clp_block_call_not_void_one_inst} below, when $b$ has no
successor ($n=0$), at the end of $b$ the stack contains exactly the return
value of $m_b$; hence, $\outputv{s}^0$ is bound to the path-length of this
return value and we set $\inputv{s}_b=\outputv{s}^0$ in order to capture
this path-length.
It is important to remark that we assume a specialised semantics of
\textit{CLP} computations here, where variables are always bound to integer
values, except for $\inputv{s}_b$ and $\outputv{s}_b$.
This means that we do not allow \emph{free} variables in a call to a predicate,
except for $\inputv{s}_b$ and $\outputv{s}_b$ which are always free until they
get bound to a value in a clause corresponding to a block with no successor
(see the constraints $\inputv{s}_b=\outputv{s}^0$ in
Definitions~\ref{def:clp_block_no_call_not_void}--%
\ref{def:clp_block_call_not_void_one_inst} below).

First, we consider the situation where $b$ does not start with a
$\mathsf{call}$ instruction. For each successor $b_i$ of $b$,
we generate a clause of the form
\[b(\inputv{\vars},\inputv{s}_b) \la c\cup\{\inputv{s}_b=\outputv{s}_b\},
b_i(\outputv{\vars},\outputv{s}_b)\]
which indicates that the flow of control passes from $b$ to $b_i$. Here,
$c$ is a constraint which expresses the sequential execution of the
instructions of $b$.
\begin{definition}\label{def:clp_block_no_call_not_void}
  Suppose that $\mathsf{ins}_1$ is not a $\mathsf{call}$ instruction.
  Let
  \[c=\mathit{ins}_1^{\mathbb{PL}} ;^{\mathbb{PL}}\cdots
  ;^{\mathbb{PL}}\mathit{ins}_w^{\mathbb{PL}}\;.\]
  We define $b_{\mathit{CLP}}$ as follows.
  \begin{enumerate}
  \item If $n\neq 0$, $b_{\mathit{CLP}}$ is the set consisting 
    of the $\clppl$ clauses 
    \begin{equation*}
      \begin{array}{l}
        b(\inputv{\vars},\inputv{s}_b) \la 
        c\cup\{\inputv{s}_b=\outputv{s}_b\},b_1(\outputv{\vars},\outputv{s}_b)\\
        \cdots\\
        b(\inputv{\vars},\inputv{s}_b) \la c\cup\{\inputv{s}_b=\outputv{s}_b\}
        ,b_n(\outputv{\vars},\outputv{s}_b)
      \end{array}
    \end{equation*}
  \item If $n = 0$, $b_{\mathit{CLP}}$ is the set consisting of the
    $\clppl$ clause
    \begin{equation*}
      b(\inputv{\vars},\inputv{s}_b) \la c\cup\{\inputv{s}_b=\outputv{s}^0\}
      \;.
    \end{equation*}
  \end{enumerate}
  \qed
\end{definition}

Now, suppose that block $b$ starts with a $\mathsf{call}$
instruction to a non-static method $m$ with $p$ actual parameters.
Then, at the beginning of $b$, the actual parameters of $m$
sit on the top of the stack and, at the end of the
$\mathsf{call}$ instruction, these parameters
are replaced with the return value of $m$:
\[\underbrace{
  \langle l\sep \overbrace{a_{p-1}::\cdots::a_0}^{\text{actual parameters}}::
  s\sep\mu\rangle}_{\text{the state at the beginning of $b$}}
\quad\lra^{\mathsf{call}}\quad
\langle l\sep \overbrace{a}^{\text{return value}}::s\sep\mu'\rangle\]
($a_0$ is the receiver and the memory $\mu$ may be affected by the call).
Therefore, if $\#l$ and $\#s$ are the number of local variables
and stack elements after the $\mathsf{call}$ instruction, 
then, in the input state of $\mathsf{call}$,
$s^{\#s-1+p-1}$, \ldots, $s^{\#s-1}$ are the actual parameters
of $m$ where $s^{\#s-1}$ is the receiver
and, in the output state of $\mathsf{call}$,
$s^{\#s-1}$ is the return value of $m$.
Note that the array $l$ of local variables and the stack
portion $s$ under the actual parameters remain unchanged
after the call. In general, this does not mean that the
path-length of their elements remains the same, as 
the execution of a method may modify the memory $\mu$,
hence the path-length of locations in $l$ and $s$.
In the scope of this paper, however, we discard the
instructions of the form $\mathsf{putfield}\ f$ where $f$
has class type. Therefore, the instructions we consider
do not modify the path-length of locations; hence after
a method call, the path-length of the elements of $l$
and $s$ remains the same.
In Definition~\ref{def:clp_block_call_not_void}
and Definition~\ref{def:clp_block_call_not_void_one_inst}
below, this operational semantics of $\mathsf{call}$ is modelled by:
\begin{itemize}
\item the constraint 
  \[c_= = \{\inputv{l}^i=\outputv{l}^i\mid 0\leq i <\#l\}\cup
  \{\inputv{s}^i=\outputv{s}^i\mid 0\leq i <\#s-1\}\]
  which specifies that the path-length of the local variables
  and stack elements under the actual parameters is not modified
  by the call,
\item the constraint $\inputv{s}^{\#s-1}\geq 1$,
  which specifies that the receiver of the call is not 
  $\mathtt{null}$, and
\item the atom $b_m(\inputv{s}^{\#s-1+p-1},\ldots,\inputv{s}^{\#s-1},
  \outputv{s}^{\#s-1})$, where $b_m$ denotes the entry block of $m$.
\end{itemize}
When the call is over, control returns to the next instruction. We
distinguish two situations here: that where $b$ contains more than
one instruction (Definition~\ref{def:clp_block_call_not_void}), then
control returns to the instruction in $b$ following the call to $m$,
and the situation where $b$ consists of the call to $m$ only
(Definition~\ref{def:clp_block_call_not_void_one_inst}), then control
returns to a successor of $b$.
\begin{definition}\label{def:clp_block_call_not_void}
  Suppose that $\mathsf{ins}_1=\mathsf{call}\ m$ and that
  $\mathsf{ins}_1$ is not the only instruction in $b$ (\ie $w\geq 2$).
  Let
  \[c=\mathit{ins}_2^{\mathbb{PL}} ;^{\mathbb{PL}}\cdots
  ;^{\mathbb{PL}}\mathit{ins}_w^{\mathbb{PL}},\]
  let $\outputv{\vars}'$ be the output local variables and stack elements
  at the end of $\mathsf{ins}_1$ and
  $\inputv{\vars}'$ be the input local variables and stack elements
  at the beginning of $\mathsf{ins}_2$ (in some fixed order).
  We suppose that $\inputv{s}_b$ and $\outputv{s}_b$ do no
  occur in $\inputv{\vars}'$ and $\outputv{\vars}'$.
  We define $b_{\mathit{CLP}}$ as follows.
  \begin{enumerate}
  \item If $n \neq 0$, 
    $b_{\mathit{CLP}}$ is the set consisting 
    of the $\clppl$ clause 
    \begin{align*}
      b(\inputv{\vars},\inputv{s}_b) \la \; &
      c_= \cup \{\inputv{s}^{\#s-1}\geq 1,\ \inputv{s}_b=\outputv{s}_b\},\\
      & b_m(\inputv{s}^{\#s-1+p-1},\ldots,\inputv{s}^{\#s-1},
      \outputv{s}^{\#s-1}),\\
      & b'(\outputv{\vars}',\outputv{s}_b)
    \end{align*}
    together with
    \begin{align*}
      &b'(\inputv{\vars}',\inputv{s}_b) \la 
      c\cup\{\inputv{s}_b=\outputv{s}_b\},b_1(\outputv{\vars},\outputv{s}_b)\\
      &\cdots\\
      &b'(\inputv{\vars}',\inputv{s}_b) \la c\cup\{\inputv{s}_b=\outputv{s}_b\}
      ,b_n(\outputv{\vars},\outputv{s}_b)
    \end{align*}
    where $b'$ is a fresh predicate symbol.
  \item If $n = 0$,  
    $b_{\mathit{CLP}}$ is the set consisting 
    of the $\clppl$ clause
    \begin{align*}
        b(\inputv{\vars},\inputv{s}_b) \la \; &
        c_= \cup \{\inputv{s}^{\#s-1}\geq 1,\ \inputv{s}_b=\outputv{s}_b\},\\
        & b_m(\inputv{s}^{\#s-1+p-1},\ldots,\inputv{s}^{\#s-1},
        \outputv{s}^{\#s-1}),\\
        & b'(\outputv{\vars}',\outputv{s}_b)
    \end{align*}
    together with
    \[b'(\inputv{\vars}',\inputv{s}_b) \la 
    c\cup\{\inputv{s}_b=\outputv{s}^0\}\]
    where $b'$ is a fresh predicate symbol. \qed
  \end{enumerate}
\end{definition}

\begin{definition}\label{def:clp_block_call_not_void_one_inst}
  Suppose that $\mathsf{ins}_1=\mathsf{call}\ m$ and that
  $\mathsf{ins}_1$ is the only instruction in $b$ (\ie $w = 1$).
  We define $b_{\mathit{CLP}}$ as follows.
  \begin{enumerate}
  \item If $n \neq 0$, 
    $b_{\mathit{CLP}}$ is the set consisting 
    of the $\clppl$ clauses 
    \begin{align*}
      b(\inputv{\vars},\inputv{s}_b) \la \;
      & c_= \cup \{\inputv{s}^{\#s-1}\geq 1,\ \inputv{s}_b=\outputv{s}_b\},\\
      & b_m(\inputv{s}^{\#s-1+p-1},\ldots,\inputv{s}^{\#s-1},
      \outputv{s}^{\#s-1}),\\
      & b_1(\outputv{\vars},\outputv{s}_b)\\
      \cdots & \\
      b(\inputv{\vars},\inputv{s}_b) \la \;
      & c_= \cup \{\inputv{s}^{\#s-1}\geq 1,\ \inputv{s}_b=\outputv{s}_b\},\\
      & b_m(\inputv{s}^{\#s-1+p-1},\ldots,\inputv{s}^{\#s-1},
      \outputv{s}^{\#s-1}),\\
      & b_n(\outputv{\vars},\outputv{s}_b)
    \end{align*}
  \item If $n = 0$, then $\#s=1$ and
    $b_{\mathit{CLP}}$ is the set consisting 
    of the $\clppl$ clause
    \[b(\inputv{\vars},\inputv{s}_b) \la 
    \{\inputv{s}^0\geq 1,\ \inputv{s}_b=\outputv{s}^0\},
    b_m(\inputv{s}^{p-1},\ldots,\inputv{s}^0, \outputv{s}^0)\]
  \end{enumerate}
  \qed
\end{definition}

The return type of the methods that we consider in
Definitions~\ref{def:clp_block_no_call_not_void}--%
\ref{def:clp_block_call_not_void_one_inst} is supposed to be
non-\verb+void+.
The situation where block $b$ occurs inside a method whose
return type is \texttt{void} is handled similarly,
except that we remove the variables $\inputv{s}_b$ and $\outputv{s}_b$
and the constraints where they occur.
The situation where the first instruction of $b$ is
$\mathsf{call}\ m$, where the return type
of $m$ is $\mathtt{void}$,
is handled as in Definitions~\ref{def:clp_block_call_not_void}%
--\ref{def:clp_block_call_not_void_one_inst}, except that we
remove the last parameter of $b_m$.
In Definitions~\ref{def:clp_block_call_not_void}--%
\ref{def:clp_block_call_not_void_one_inst}, $m$ is also
supposed to be a non-static method. The situation where $m$
is static is handled similarly, except that we remove 
the constraint $\inputv{s}^{\#s-1}\geq 1$, as there is no
call receiver on the stack.
\begin{definition}\label{def:clp_construction}
  Let $P$ be a Java bytecode program given as a graph of blocks.
  The $\clppl$ program $P_{\mathit{CLP}}$ derived from $P$
  is defined as
  \[P_{\mathit{CLP}} = \bigcup_{b\in P}b_{\mathit{CLP}}\;.\]
  \qed
\end{definition}

In this paper, we consider the \emph{leftmost selection rule}
for computations in $\clppl$ programs. Then, we have:
%
\begin{theorem}\label{theorem:non-term}
  Let $J$ be a Java Virtual Machine,
  $P$ be a Java bytecode program consisting of
  instructions in $\correctins$, and
  $b$ be a block of $P$.
  Let $\mathit{vars}$ be some fixed integer values and
  $\inputv{s}_b$ be a free variable.
  If the query $b(\mathit{vars}, \inputv{s}_b)$ has an infinite
  computation in $P_{\mathit{CLP}}$ then there is an execution of $J$
  started at block $b$ that does not terminate.
\end{theorem}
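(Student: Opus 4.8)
The plan is to establish the theorem by constructing, from a given infinite $\clppl$ computation of $b(\mathit{vars},\inputv{s}_b)$, a corresponding infinite execution of $J$ starting at block $b$. The central device is a simulation/invariant argument relating $\clppl$ derivation steps to JVM transitions. First I would set up the correspondence between $\clppl$ configurations and JVM states: at any point in a leftmost $\clppl$ derivation the selected atom has the form $b'(\vec{v},\inputv{s}_{b'})$ where $\vec{v}$ are integer values; I would say that a JVM state $\sigma$ \emph{realises} this atom when $\sigma$ is at the program point corresponding to the head predicate $b'$ and $\inputv{\len}(\sigma)$ agrees with $\vec{v}$ on all the input variables. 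The goal invariant to maintain along the derivation is: whenever the leftmost atom is realised by some compatible JVM state, the JVM can take at least one step (within block $b'$, or into a callee, or across a block edge) to a state that realises the \emph{next} selected atom in the derivation. Iterating this invariant produces the desired infinite JVM run.

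The key steps, in order, are: (1) fix an infinite fair leftmost derivation $D$ of $b(\mathit{vars},\inputv{s}_b)$ and pick an initial JVM state $\sigma_0$ at the entry of $b$ whose input path-length assignment matches $\mathit{vars}$ (such a state exists since the values are integers and path-length of an integer is the integer itself); (2) for a single resolution step using one of the clauses from Definitions~\ref{def:clp_block_no_call_not_void}--\ref{def:clp_block_call_not_void_one_inst}, show the invariant is preserved. For a non-call block this is where Propositions~\ref{proposition:instructions_exactness} and~\ref{proposition:composition} do the work: the body constraint $c = \mathit{ins}_1^{\mathbb{PL}};^{\mathbb{PL}}\cdots;^{\mathbb{PL}}\mathit{ins}_w^{\mathbb{PL}}$ is an exact abstraction of the composed denotation $\mathit{ins}_1;\cdots;\mathit{ins}_w$, so any satisfying assignment of $c$ forces that denotation to be \emph{defined} on the realising state and to land in a state realising the successor atom. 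For a call block the constraint $\inputv{s}^{\#s-1}\ge 1$ guarantees the receiver is non-\texttt{null}, so the JVM does not stop; the clause's callee atom $b_m(\ldots)$ is realised by the callee's entry state (local variables bound to the actuals), and $c_=$ together with the remark that the considered instructions never alter path-lengths of locations ensures the post-call frame is consistent; (3) observe that the leftmost rule makes the $\clppl$ call stack mirror the JVM call stack, so returns are handled correctly and the free variable $\inputv{s}_b$ never needs a concrete value until a no-successor block is reached — which, on an infinite derivation, is irrelevant to producing divergence; (4) conclude that the chain $\sigma_0 \to \sigma_1 \to \cdots$ is infinite, since $D$ is infinite and each $\clppl$ step maps to at least one JVM step.

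The main obstacle I anticipate is the bookkeeping for method calls and the interplay with the leftmost selection rule: one must argue carefully that the sequence of selected atoms in $D$ really does describe a well-nested pattern of ``enter block, possibly call, return, move to successor'' that a single JVM execution can follow, and in particular that when a callee sub-derivation is itself infinite the JVM simply stays inside that invocation forever (so divergence can occur ``deep'' in the call stack), whereas when it is finite the returned path-length value $\outputv{s}^{\#s-1}$ coincides with the one the JVM would actually produce — this last point again leans on exactness of the callee's block clauses, applied inductively on the structure of the finite sub-derivation. A secondary subtlety is confirming that a suitable concrete memory $\mu$ can always be chosen to realise the integer path-length values appearing along the way (path-length being only a lower-bound-style abstraction in general); here one uses that all surviving instructions are exactly abstracted, so the integer values propagated by the constraints are exactly the path-lengths the JVM computes, and integers as stack/local values realise themselves directly.
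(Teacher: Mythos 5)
Your proposal is correct and follows essentially the same route as the paper: the load-bearing ingredients are exactness of the individual instructions (Proposition~\ref{proposition:instructions_exactness}), preservation of exactness under $;^{\mathbb{PL}}$ (Proposition~\ref{proposition:composition}), the construction of a concrete state realising a given model of the constraints (the paper's Proposition~\ref{proposition:instructions_exists_sigma}, which is precisely the ``suitable concrete memory'' subtlety you flag and which the paper resolves by assuming every class admits chains of arbitrary path-length), and the observation that the clauses of Definitions~\ref{def:clp_block_call_not_void}--\ref{def:clp_block_call_not_void_one_inst} mirror the operational semantics of $\mathsf{call}$. The only presentational difference is that for call-free programs the paper factors the argument through Theorem~\ref{theorem:clp_completeness}, proving that every finite composition $\delta_0;\ldots;\delta_i$ is defined on the realising state and then invoking the denotational/operational equivalence of~\cite{SpotoMP10}, whereas you run a direct step-by-step operational simulation; your explicit treatment of returns and of divergence occurring deep in the call stack is in fact more detailed than the paper's own (quite terse) handling of the $\mathsf{call}$ case.
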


\begin{figure}[t]
  \lstset{language=Java, numbers=left, numberstyle=\tiny,
    basicstyle=\tt\scriptsize, frame=shadowbox, rulesepcolor=\color{blue}}
  
  \begin{lstlisting}
    public class Sum {
      public static int sum(int n) {
        if (n == 0) return 0;
        else return n + sum(n - 1);
      }
      public static void main(String args[]) {
        sum(-1);
      }
    }
  \end{lstlisting}
  \caption{A program with a recursive method \texttt{sum} that
    takes an integer $n$ as input and computes the sum $0+1+\cdots+n$.
    This method does not terminate on negative inputs.}
  \label{fig:sum_term_java}
\end{figure}

\begin{example}\label{example:sumNonTerm}
  Consider the recursive method \texttt{sum} in
  Fig.~\ref{fig:sum_term_java}, whose graph of blocks
  is given in Fig.~\ref{fig:sum_nonterm_blocks}.
  \begin{figure}[t]
    \begin{center}
      \includegraphics[scale=0.41]{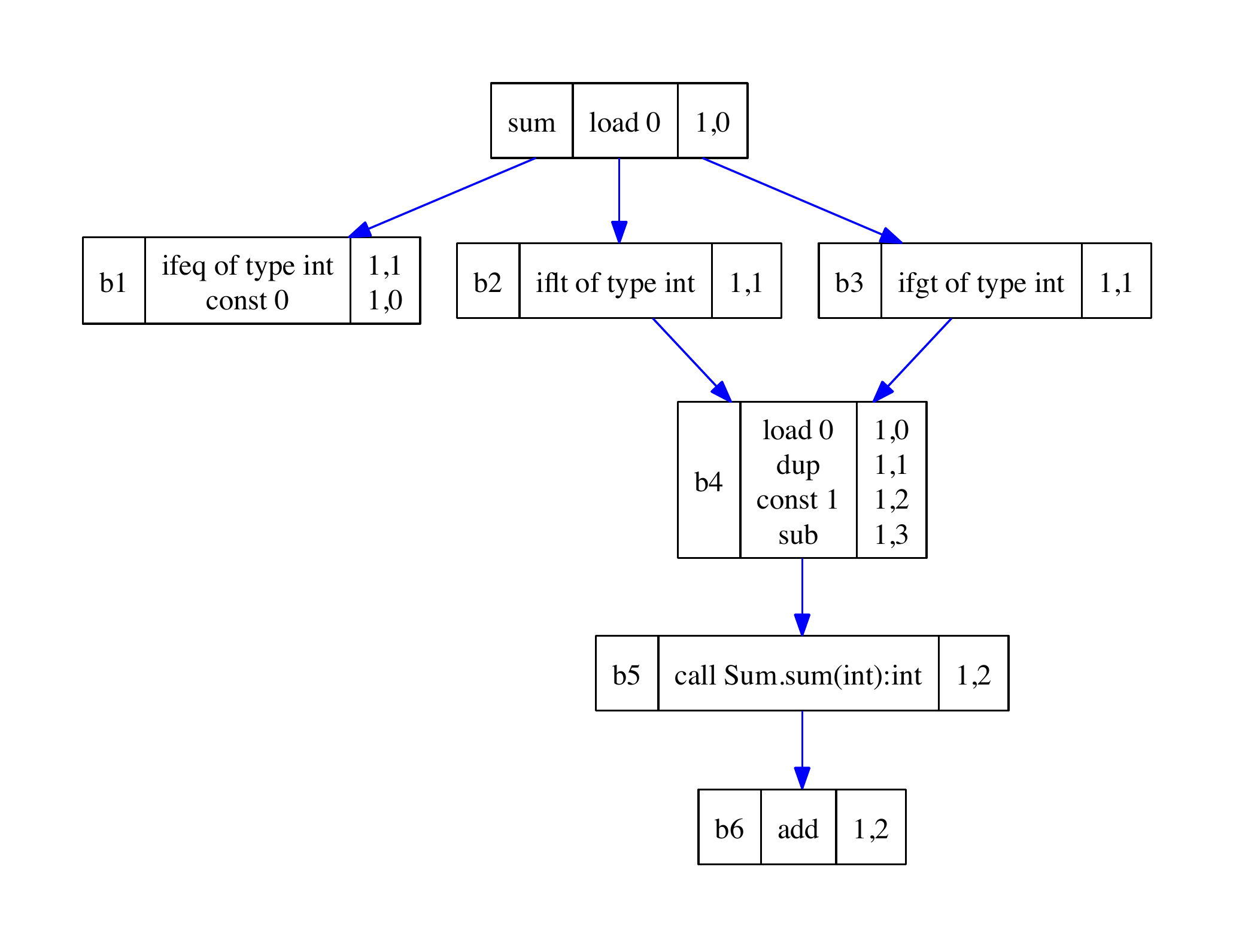}
    \end{center}
    \caption{The graph of blocks of the method \texttt{sum}
      in Fig.~\ref{fig:sum_term_java}, where each block
      is decorated with a unique name. On the right of each instruction,
      we report the number of local variables and stack elements at
      that program point, just before executing the instruction.}
    \label{fig:sum_nonterm_blocks}
  \end{figure}
  \begin{itemize}
  \item The block $\mathit{sum}$ has $b_1$, $b_2$ and $b_3$ as successors
    and its first instruction is not a $\mathsf{call}$ instruction.
    Let $q$ be the program point where the instruction
    $\mathsf{load}\ 0$ of the block $\mathit{sum}$ occurs.
    We have
    \begin{align*}
      \mathit{load}_q^{\mathbb{PL}}\ 0
      &= \mathit{Unchanged}_q(1,0)\cup\{\inputv{l}^0=\outputv{s}^0\}\\
      &= \{\inputv{l}^0=\outputv{l}^0,\inputv{l}^0=\outputv{s}^0\}\;.
    \end{align*}
    Hence, by Definition~\ref{def:clp_block_no_call_not_void},
    $\mathit{sum}_{\mathit{CLP}}$ consists of the following clauses:
    \begin{align*}
      \mathit{sum}(\inputv{l}^0,\inputv{s}_{\mathit{sum}}) \la \; &
      \{\inputv{l}^0=\outputv{l}^0,\inputv{l}^0=\outputv{s}^0,
      \inputv{s}_{\mathit{sum}}=\outputv{s}_{\mathit{sum}}\},\\
      & b_1(\outputv{l}^0,\outputv{s}^0,\outputv{s}_{\mathit{sum}})
      \\
      \mathit{sum}(\inputv{l}^0,\inputv{s}_{\mathit{sum}}) \la \; &
      \{\inputv{l}^0=\outputv{l}^0,\inputv{l}^0=\outputv{s}^0,
      \inputv{s}_{\mathit{sum}}=\outputv{s}_{\mathit{sum}}\},\\
      & b_2(\outputv{l}^0,\outputv{s}^0,\outputv{s}_{\mathit{sum}})
      \\
      \mathit{sum}(\inputv{l}^0,\inputv{s}_{\mathit{sum}}) \la \; &
      \{\inputv{l}^0=\outputv{l}^0,\inputv{l}^0=\outputv{s}^0,
      \inputv{s}_{\mathit{sum}}=\outputv{s}_{\mathit{sum}}\},\\
      & b_3(\outputv{l}^0,\outputv{s}^0,\outputv{s}_{\mathit{sum}})\;.
    \end{align*}
  \item The block $b_1$ has no successor
    and its first instruction is not a $\mathsf{call}$ instruction.
    Let $q$ and $q'$ be the program points where the instructions
    $\mathsf{ifeq\ of\ type}\ \mathtt{int}$ and $\mathsf{const}\ 0$
    of the block $b_1$ occur, respectively.
    We have
    \begin{align*}
      & \;(\mathit{ifeq\ of\ type}_q^{\mathbb{PL}}\ \mathtt{int})
      ;^{\mathbb{PL}} (\mathit{const}_{q'}^{\mathbb{PL}}\ 0)\\
      = & \;(\mathit{Unchanged}_q(1,0)\cup\{\inputv{s}^0 = 0\})\\
      & \quad;^{\mathbb{PL}} (\mathit{Unchanged}_{q'}(1,0)\cup\{0=\outputv{s}^0\}) \\
      = & \;\{\inputv{l}^0=\outputv{l}^0,\inputv{s}^0 = 0\}
      ;^{\mathbb{PL}}\{\inputv{l}^0=\outputv{l}^0,0=\outputv{s}^0\}\\
      = & \;\{\inputv{l}^0=\outputv{l}^0,\inputv{s}^0 = 0,0=\outputv{s}^0\} \;.
    \end{align*}
    Hence, by Definition~\ref{def:clp_block_no_call_not_void},
    ${b_1}_{\mathit{CLP}}$ consists of the following clause:
    \[\begin{array}{l}
      b_1(\inputv{l}^0,\inputv{s}^0,\inputv{s}_{b_1}) \la 
      \{\inputv{l}^0=\outputv{l}^0,\inputv{s}^0 = 0,0=\outputv{s}^0,
      \inputv{s}_{b_1}=\outputv{s}^0\}\;.
    \end{array}\]
  \item The block $b_5$ consists of only one instruction,
    which is a call to a static method,
    and it has $b_6$ as unique successor. We have
    \[c_==\{\inputv{l}^0=\outputv{l}^0,\inputv{s}^0=\outputv{s}^0\}\;.\]
    Hence, by Definition~\ref{def:clp_block_call_not_void_one_inst}, 
    ${b_5}_{\mathit{CLP}}$ consists of the following clause:
    \begin{align*}
      b_5(\inputv{l}^0,\inputv{s}^0,\inputv{s}^1,\inputv{s}_{b_5}) \la \; &
      \{\inputv{l}^0=\outputv{l}^0,\inputv{s}^0=\outputv{s}^0,
      \inputv{s}_{b_5}=\outputv{s}_{b_5}\},\\
      & \mathit{sum}(\inputv{s}^1,\outputv{s}^1),
      b_6(\outputv{l}^0,\outputv{s}^0,\outputv{s}^1,\outputv{s}_{b_5})\;.
    \end{align*}
  \end{itemize}

  The complete $\clppl$ program derived from the program
  in Fig.~\ref{fig:sum_term_java}
  consists of the clauses in Table~\ref{table:clp}.
  \qed
  \begin{table}[t]
    \small
    \[\begin{array}{|cl|}
      %
      %
      \hline
      \multicolumn{2}{|c|}{\texttt{sum}}\\
      \hline
      (r_1) & \mathit{sum}(\inputv{l}^0,\inputv{s}_{\mathit{sum}}) \la
      \{\inputv{l}^0=\outputv{l}^0,\inputv{l}^0=\outputv{s}^0,
      \inputv{s}_{\mathit{sum}}=\outputv{s}_{\mathit{sum}}\},\\
      & \hspace{2.1cm}b_1(\outputv{l}^0,\outputv{s}^0,\outputv{s}_{\mathit{sum}})
      \\[1ex]
      (r_2) & \mathit{sum}(\inputv{l}^0,\inputv{s}_{\mathit{sum}}) \la
      \{\inputv{l}^0=\outputv{l}^0,\inputv{l}^0=\outputv{s}^0,
      \inputv{s}_{\mathit{sum}}=\outputv{s}_{\mathit{sum}}\},\\
      & \hspace{2.1cm}b_2(\outputv{l}^0,\outputv{s}^0,\outputv{s}_{\mathit{sum}})
      \\[1ex]
      (r_3) & \mathit{sum}(\inputv{l}^0,\inputv{s}_{\mathit{sum}}) \la
      \{\inputv{l}^0=\outputv{l}^0,\inputv{l}^0=\outputv{s}^0,
      \inputv{s}_{\mathit{sum}}=\outputv{s}_{\mathit{sum}}\},\\
      & \hspace{2.1cm}b_3(\outputv{l}^0,\outputv{s}^0,\outputv{s}_{\mathit{sum}})
      \\[1ex]
      (r_4) & b_1(\inputv{l}^0,\inputv{s}^0,\inputv{s}_{b_1}) \la
      \{\inputv{l}^0=\outputv{l}^0,\inputv{s}^0 = 0,0=\outputv{s}^0,
      \inputv{s}_{b_1}=\outputv{s}^0\}
      \\[1ex]
      (r_5) & b_2(\inputv{l}^0,\inputv{s}^0,\inputv{s}_{b_2}) \la
      \{\inputv{l}^0=\outputv{l}^0,\inputv{s}^0 \leq -1,\inputv{s}_{b_2}=\outputv{s}_{b_2}\},\\
      & \hspace{2cm}b_4(\outputv{l}^0,\outputv{s}_{b_2})
      \\[1ex]
      (r_6) & b_3(\inputv{l}^0,\inputv{s}^0,\inputv{s}_{b_3}) \la
      \{\inputv{l}^0=\outputv{l}^0,\inputv{s}^0 \geq 1,\inputv{s}_{b_3}=\outputv{s}_{b_3}\},\\
      & \hspace{2cm}b_4(\outputv{l}^0,\outputv{s}_{b_3})
      \\[1ex]
      (r_7) & b_4(\inputv{l}^0,\inputv{s}_{b_4}) \la
      \{\inputv{l}^0=\outputv{l}^0,
      \inputv{l}^0=\outputv{s}^0,\inputv{l}^0-1=\outputv{s}^1,
      \inputv{s}_{b_4}=\outputv{s}_{b_4}\},\\
      & \hspace{1.6cm}
      b_5(\outputv{l}^0,\outputv{s}^0,\outputv{s}^1,\outputv{s}_{b_4})
      \\[1ex]
      (r_8) & b_5(\inputv{l}^0,\inputv{s}^0,\inputv{s}^1,\inputv{s}_{b_5}) \la
      \{\inputv{l}^0=\outputv{l}^0,\inputv{s}^0=\outputv{s}^0,
      \inputv{s}_{b_5}=\outputv{s}_{b_5}\},\\
      & \hspace{2.5cm}\mathit{sum}(\inputv{s}^1,\outputv{s}^1),
      b_6(\outputv{l}^0,\outputv{s}^0,\outputv{s}^1,\outputv{s}_{b_5})
      \\[1ex]
      (r_9) & b_6(\inputv{l}^0,\inputv{s}^0,\inputv{s}^1,\inputv{s}_{b_6}) \la
      \{\inputv{l}^0=\outputv{l}^0,\inputv{s}^0+\inputv{s}^1=\outputv{s}^0,
      \inputv{s}_{b_6}=\outputv{s}^0\}
      \\
      %
      %
      \hline\hline
      \multicolumn{2}{|c|}{\texttt{main}}\\
      \hline
      (r_{10}) & \mathit{main}(\inputv{l}^0) \la
      \{\inputv{l}^0=\outputv{l}^0,-1=\outputv{s}^0\},
      b_7(\outputv{l}^0,\outputv{s}^0)
      \\[1ex]
      (r_{11}) & b_7(\inputv{l}^0,\inputv{s}^0) \la
      \{\inputv{l}^0=\outputv{l}^0\},
      \mathit{sum}(\inputv{s}^0,\outputv{s}^0),
      b_8(\outputv{l}^0,\outputv{s}^0)
      \\[1ex]
      (r_{12}) & b_8(\inputv{l}^0,\inputv{s}^0) \la
      \{\inputv{l}^0=\outputv{l}^0\}\\
      \hline
    \end{array}\]
  \caption{The $\clppl$ program derived from the program in
    Fig.~\ref{fig:sum_term_java}}\label{table:clp}
  \end{table}
\end{example}


\section{Proving Non-Termination}
\label{section:clp-nontermination}
%
Let $P$ be a Java bytecode program and $b$ be a block
of $P$.
By Theorem~\ref{theorem:non-term}, the existence of
an infinite computation in $P_{\mathit{CLP}}$ of a query
of the form $b(\cdots)$ entails that there
is an execution of the Java Virtual Machine starting at
$b$ that does not terminate. 
In~\cite{Payet09a}, we provide a criterion that can
be used for proving the existence of a non-terminating
query for a \emph{binary} $\clppl$ program \ie a
program consisting of clauses whose body contains at
most one atom (we refer to such clauses as binary
clauses).
Note that by Definitions~\ref{def:clp_block_call_not_void}--%
\ref{def:clp_block_call_not_void_one_inst},
$P_{\mathit{CLP}}$ may not be binary as it may contain clauses
whose body consists of two atoms.

We use the binary unfolding operation~\cite{Gabbrielli94a}
to transform $P_{\mathit{CLP}}$ into a binary program.
We write $\clppl$ clauses as
\[H \la c, B_1, \ldots , B_m \quad\text{or}\quad
H \la c,\mathbf{B}\]
where $c$ is a $\clppl$ constraint,
$H$, $B_1$, \ldots, $B_m$ are atoms and
$\mathbf{B}$ is a sequence of atoms.
When $B_1, \ldots , B_m$ or $\mathbf{B}$ are empty
sequences, we write the clause as $H \la c$.
We let $\mathit{id}$ denote the set of all the binary clauses
of the form
\[p(\tilde{x}) \la \{\tilde{x}=\tilde{y}\}, p(\tilde{y})\]
where $p$ is a predicate symbol and $\tilde{x}$ and $\tilde{y}$
are disjoint sequences of distinct variables. 
We also let $\overline{\exists}_{\mathit{Var}(H,\mathbf{B})}[\cdots]$
denote the projection of $[\cdots]$ onto the variables of
$H$ and $\mathbf{B}$.
%
\begin{definition}\label{definition:binunf}
  Let $P_{\mathit{CLP}}$ be a $\clppl$ program and $X$
  be a set of binary clauses. Then,
  \[T_{P_{\mathit{CLP}}}^{\beta}(X) = \mathcal{F} \cup \mathcal{B}\]
  where 
  \[\mathcal{F}
  = \big\{H \la c \in P_{\mathit{CLP}} \, | \,
  c \text{ is satisfiable}\big\}\]
  and $\mathcal{B}$ is the set
  {\small
    \[\begin{array}{l}
      \left\{
        H \la c, \mathbf{B}
        \begin{array}{@{\;}|@{\;}l}
          r= H \la c_0,
          B_1, \ldots , B_m \in P_{\mathit{CLP}},\
          1 \leq i \leq m\\[1ex]
          \langle H_j \leftarrow c_j\rangle_{j=1}^{i-1}
          \in X
          \mbox{ renamed apart from $r$}\\[1ex]
          H_i \leftarrow c_i, \mathbf{B} \in X \cup \mathit{id}
          \mbox{ renamed apart from $r$}\\[1ex]
          i<m \Rightarrow \mathbf{B} \neq \emptyset\\[1ex]
          c  = \overline{\exists}_{\mathit{Var}(H,\mathbf{B})} \big[
          c_0 \land \mathop{\land}\limits_{j=1}^i (c_j \land
          B_j=H_j)\big]\\[1ex]
          c \mbox{ is satisfiable}
        \end{array}\right\}
    \end{array}\]}
  
  We define the powers of $T_{P_{\mathit{CLP}}}^{\beta}$ as usual:
  \begin{align*}
    T_{P_{\mathit{CLP}}}^{\beta} \uparrow 0 & = \emptyset\\
    T_{P_{\mathit{CLP}}}^{\beta} \uparrow (i + 1) & =
    T_{P_{\mathit{CLP}}}^{\beta}(T_{P_{\mathit{CLP}}}^{\beta}  \uparrow i)
    \quad\forall i \in \nat\;.
  \end{align*}
  It can be shown that the least fixpoint ($\mathit{lfp}$) 
  of this monotonic operator always exists and we set
  \[\mathit{binunf}(P_{\mathit{CLP}})=\mathit{lfp}
  (T_{P_{\mathit{CLP}}}^{\beta})\;.\]
  \qed
\end{definition}

\begin{example}\label{example:proving_non_termination}
  Consider the program $P_{\mathit{CLP}}$
  in Table~\ref{table:clp}.
  
  We have $T_{P_{\mathit{CLP}}}^{\beta} \uparrow 0 = \emptyset$. Moreover,
  \begin{itemize}
  \item the set $T_{P_{\mathit{CLP}}}^{\beta} \uparrow 1$ includes the
    following clauses:
    \[\begin{array}{cr@{\;\la\;}l}
      (u_1)
      & b_5(\inputv{l}^0,\inputv{s}^0,\inputv{s}^1,\inputv{s}_{b_5})
      &  \{\}, \mathit{sum}(\inputv{s}^1,\outputv{s}^1)
      \\[1ex]
      (u_2)
      & b_7(\inputv{l}^0,\inputv{s}^0)
      & \{\},\mathit{sum}(\inputv{s}^0,\outputv{s}^0)
    \end{array}\]
    where $u_1$ is obtained by unfolding $r_8$ with $\mathit{id}$ and
    $u_2$ by unfolding $r_{11}$ with $\mathit{id}$,
  \item the set $T_{P_{\mathit{CLP}}}^{\beta} \uparrow 2$ includes the
    following clauses:
    \[\begin{array}{cr@{\;\la\;}l}
      (u_3)
      & b_4(\inputv{l}^0,\inputv{s}_{b_4}) 
      & \{\inputv{l}^0 - 1 = \inputv{s}^1\},
      \mathit{sum}(\inputv{s}^1,\outputv{s}^1)
      \\[1ex]
      (u_4)
      & \mathit{main}(\inputv{l}^0)
      & \{-1=\inputv{s}^0\},
      \mathit{sum}(\inputv{s}^0,\outputv{s}^0)
    \end{array}\]
    where $u_3$ is obtained by unfolding $r_7$ with $u_1$ and
    $u_4$ by unfolding $r_{10}$ with $u_2$,
  \item the set $T_{P_{\mathit{CLP}}}^{\beta} \uparrow 3$ includes the
    following clause, obtained by unfolding $r_5$ with $u_3$:
    \begin{align*}
      (u_5) \quad
      b_2(\inputv{l}^0,\inputv{s}^0,\inputv{s}_{b_2}) \la \; &
      \{\inputv{l}^0-1=\inputv{s}^1,
      \inputv{s}^0 \leq -1\},\\
      & \mathit{sum}(\inputv{s}^1,\outputv{s}^1)
    \end{align*}
  \item the set $T_{P_{\mathit{CLP}}}^{\beta} \uparrow 4$ includes the
    following clause, obtained by unfolding $r_2$ with $u_5$:
    \begin{align*}
      (u_6)
      \quad \mathit{sum}(\inputv{l}^0,\inputv{s}_{\mathit{sum}}) \la \; &
      \{\inputv{l}^0 \leq -1, \inputv{l}^0-1=\inputv{s}^1\},\\
      & \mathit{sum}(\inputv{s}^1,\outputv{s}^1)\;.
    \end{align*}
    \qed
  \end{itemize}
\end{example}

It is proved in~\cite{Codish99a} that existential
non-termination in $P_{\mathit{CLP}}$ is \emph{equivalent}
to existential non-termination in $\mathit{binunf}(P_{\mathit{CLP}})$:
%
\begin{theorem}
  \label{theorem-observing-termination}
  Let $P_{\mathit{CLP}}$ be a $\clppl$ program
  and $Q$ be a query consisting of one atom. Then, $Q$ has an
  infinite computation in $P_{\mathit{CLP}}$ 
  \emph{if and only if} $Q$ has an infinite computation in
  $\mathit{binunf}(P_{\mathit{CLP}})$.
\end{theorem}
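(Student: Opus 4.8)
The plan is to prove each implication separately, built on the \emph{fundamental correctness property} of the binary unfolding operator~\cite{Gabbrielli94a,Codish99a}: a binary clause $H \la c, B$ belongs to $\mathit{binunf}(P_{\mathit{CLP}})$ if and only if there is a finite derivation under the leftmost selection rule from the goal $\la H$ that reaches a goal whose selected atom is $B$, with $c$ the satisfiable accumulated constraint projected onto $\mathit{Var}(H,B)$; and a unit clause $H \la c$ belongs to $\mathit{binunf}(P_{\mathit{CLP}})$ if and only if $\la H$ has a successful leftmost derivation with computed constraint $c$. I would obtain this either by citing~\cite{Codish99a} directly, or by re-deriving it by induction on the powers $T_{P_{\mathit{CLP}}}^{\beta} \uparrow k$, the point being that $k$ applications of $T_{P_{\mathit{CLP}}}^{\beta}$ correspond to a derivation decomposing into $k$ ``binary segments'', and that the set $\mathcal{F}$ together with the $\langle H_j \la c_j\rangle$ components inside $T_{P_{\mathit{CLP}}}^{\beta}$ records exactly the successful sub-derivations of the atoms already solved when building such a segment.

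For the ``if'' direction, assume $Q$ has an infinite computation in $\mathit{binunf}(P_{\mathit{CLP}})$. Since $Q$ is a single atom and every clause of $\mathit{binunf}(P_{\mathit{CLP}})$ is binary, this computation is an infinite chain of resolution steps against renamed-apart binary clauses $H_0 \la c_0, B_0,\ H_1 \la c_1, B_1, \ldots$, where $Q$ resolves with the first, each $B_i$ with $H_{i+1}$, and the running constraint stays satisfiable. By the characterization above each such binary clause stands for a finite leftmost-derivation segment of $P_{\mathit{CLP}}$ carrying exactly the constraint $c_i$; concatenating the segments — legitimate because the $c_i$ are jointly consistent, exactly as in the $\mathit{binunf}$ computation — yields an infinite leftmost derivation of $P_{\mathit{CLP}}$ from $Q$. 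The only care needed is with variable renaming and with observing that every prefix of the spliced constraint remains satisfiable.

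For the ``only if'' direction, assume $\la Q$ has an infinite leftmost derivation in $P_{\mathit{CLP}}$. I would extract an infinite computation in $\mathit{binunf}(P_{\mathit{CLP}})$ by maintaining the invariant: \emph{the current atom $A$ has an infinite leftmost derivation in $P_{\mathit{CLP}}$ under the current satisfiable constraint}. Starting with $A := Q$, the first step of the relevant derivation uses a clause $r = A \la c_0, B_1, \ldots, B_m$ of $P_{\mathit{CLP}}$; necessarily $m \geq 1$, since a unit clause would end the derivation. Let $j$ be least such that, along this derivation, the sub-derivations of $B_1, \ldots, B_{j-1}$ all succeed while that of $B_j$ is infinite; such $j$ exists, because if every $B_k$ were solved the goal would become empty and the derivation would terminate. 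The successful sub-derivations of $B_1, \ldots, B_{j-1}$ yield unit clauses in $\mathit{binunf}(P_{\mathit{CLP}})$, and combining $r$ with these and with the clause $\mathit{id}$ for $B_j$ is precisely one of the compositions performed by $T_{P_{\mathit{CLP}}}^{\beta}$; hence a binary clause $A' \la c, B_j'$, with $c$ the constraint accumulated so far, lies in $\mathit{binunf}(P_{\mathit{CLP}})$. Resolving $A$ against it, setting $A := B_j$, and iterating produces the required infinite computation.

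The main obstacle is this ``only if'' direction, and specifically two bookkeeping issues inside it: first, pinning down (or carefully importing from~\cite{Codish99a}) the fixpoint characterization of $\mathit{binunf}$ — in particular that \emph{successful} finite derivations of the already-solved left siblings are captured by unit clauses appearing at some \emph{finite} level $T_{P_{\mathit{CLP}}}^{\beta} \uparrow k$, which requires analysing how $T_{P_{\mathit{CLP}}}^{\beta}$ incrementally accumulates the $\langle H_j \la c_j\rangle_{j=1}^{i-1}$; and second, checking that the constraint attached to each extracted binary clause is exactly the projection of the constraint that the original derivation has accumulated at the moment $B_j$ becomes the selected atom, so that satisfiability is preserved step by step and the extracted computation is genuinely infinite rather than a failing finite prefix. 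Since the equivalence is already established in~\cite{Codish99a}, in the paper it suffices to invoke that result; the sketch above indicates the route a self-contained proof would take.
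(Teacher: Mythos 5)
Your proposal is correct and matches the paper's treatment: the paper gives no proof of this theorem at all, simply invoking the equivalence established in~\cite{Codish99a}, which is exactly what you conclude is sufficient. Your additional sketch of the two directions (splicing binary-clause segments into a concrete infinite derivation, and extracting an infinite chain of calls from an infinite leftmost derivation using the unit clauses for solved left siblings) is the standard argument behind that cited result and identifies the right bookkeeping obstacles.
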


Note that $\mathit{binunf}(P_{\mathit{CLP}})$ is a possibly
infinite set of binary clauses. In practice, we compute
only the first $\mathit{max}$ iterations of
$T_{P_{\mathit{CLP}}}^{\beta}$, where $\mathit{max}$ is a parameter
of the analysis, and we have
$T_{P_{\mathit{CLP}}}^{\beta} \uparrow \mathit{max}
\subseteq \mathit{binunf}(P_{\mathit{CLP}})$. Therefore,
any query that has an infinite computation in
$T_{P_{\mathit{CLP}}}^{\beta} \uparrow \mathit{max}$
also has an infinite computation in
$\mathit{binunf}(P_{\mathit{CLP}})$,
hence, by Theorem~\ref{theorem-observing-termination},
in $P_{\mathit{CLP}}$.

In the results we present below, $p$ is a predicate symbol,
$\tilde{x}$ and $\tilde{y}$ are disjoint sequences of
distinct variables and $c$ is a $\clppl$ constraint
on $\tilde{x}$ and $\tilde{y}$ only (\ie the set of
variables appearing in $c$ is a subset of 
$\tilde{x} \cup \tilde{y}$).
The criterion that we provide in~\cite{Payet09a}
for proving the existence of a non-terminating
query can be formulated as follows in the context
of $\clppl$ clauses.
%
\begin{proposition}
  \label{proposition:unit-loop-tplp}
  Let
  \[r = p(\tilde{x}) \la c, p(\tilde{y})\]
  be a $\clppl$ binary clause where $c$ is satisfiable.
  Let $e(\tilde{x})$ denote the projection of
  $c$ onto $\tilde{x}$.
  Suppose the formula
  \[\forall \tilde{x}\ e(\tilde{x}) \Rightarrow
  \big[\forall \tilde{y}\ c \Rightarrow 
  e(\tilde{y})\big]\]
  is true. Then, $p(\tilde{v})$ has an infinite
  computation in $\{r\}$, for some fixed integer values
  for $\tilde{v}$.
\end{proposition}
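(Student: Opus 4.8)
The plan is to construct an infinite sequence of tuples of (integer) values $\tilde v_0, \tilde v_1, \tilde v_2, \dots$, each of the arity of $p$, such that every consecutive pair satisfies the constraint $c$, i.e. $(\tilde v_i, \tilde v_{i+1}) \models c$ for all $i \in \nat$, and then to read off from this sequence an infinite computation of the query $p(\tilde v_0)$ in $\{r\}$. The role of the hypothesis $\forall \tilde x\, e(\tilde x) \Rightarrow [\forall \tilde y\, c \Rightarrow e(\tilde y)]$ is precisely to guarantee that the construction never gets stuck: it says that whenever $\tilde v_i$ lies in the projection $e$, every $c$-successor of $\tilde v_i$ again lies in $e$, and lying in $e$ is, by definition of the projection, exactly what is needed to produce a further successor.

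First I would fix the sequence by induction on $i$. Since $c$ is satisfiable, choose $(\tilde v_0, \tilde v_1)$ with $(\tilde v_0, \tilde v_1) \models c$; then $\tilde v_0 \models e(\tilde x)$ by definition of $e$ as the projection of $c$ onto $\tilde x$. For the inductive step, assume $\tilde v_i$ has been defined with $\tilde v_i \models e(\tilde x)$. By definition of the projection there is a tuple $\tilde v_{i+1}$ with $(\tilde v_i, \tilde v_{i+1}) \models c$; instantiating the hypothesis formula at $\tilde x := \tilde v_i$ and $\tilde y := \tilde v_{i+1}$ yields $\tilde v_{i+1} \models e(\tilde y)$, i.e. $e$ holds of $\tilde v_{i+1}$, which closes the induction. (The hypothesis, being a universally quantified first-order statement, holds in particular for the instances we use, so restricting attention to integer instances causes no loss.)

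Next I would turn this sequence into an infinite derivation. Because $r$ is a binary clause, its body has the single atom $p(\tilde y)$, so the leftmost selection rule leaves no choice; resolving the query $p(\tilde v_0)$ repeatedly against fresh renamings of $r$ produces, after $k$ steps, a goal whose constraint store is logically equivalent to $\bigwedge_{i=0}^{k-1} c(\tilde x_i, \tilde x_{i+1})$, with $\tilde x_0$ bound to $\tilde v_0$ and $\tilde x_1, \dots, \tilde x_k$ fresh. The assignment $\tilde x_i \mapsto \tilde v_i$ for $0 \le i \le k$ satisfies this store, so it is satisfiable at every finite stage; hence the derivation never fails finitely and is therefore infinite, which is what we had to show.

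The one point that needs care — and the step I expect to be the only real obstacle — is integrality: the conclusion asks for \emph{integer} values $\tilde v$, so the $\tilde v_i$ built above must be chosen in $\integers$. This is immediate if the projection $e$ and the satisfiability of $\clppl$ constraints are read over the integers, which matches the convention, stated earlier in the paper, that variables in $\clppl$ computations are always bound to integer values; under that reading "$\tilde v_i \models e(\tilde x)$" literally means "there is an integer tuple $\tilde v_{i+1}$ with $(\tilde v_i, \tilde v_{i+1}) \models c$", and the induction goes through verbatim. If instead one works with rational polyhedra, an additional argument — producing integer points in the relevant polyhedra, or exploiting the particular shape of the constraints generated by the path-length abstraction — is needed to recover the integer-valued sequence; I would isolate this as a separate lemma rather than let it clutter the main construction.
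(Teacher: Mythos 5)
Your proof is correct: the hypothesis is exactly the invariance of the set defined by $e$ under the relation defined by $c$, satisfiability of $c$ seeds the sequence, and the resulting chain of integer tuples witnesses an infinite derivation; your remark on integrality is the right caveat and is resolved by the paper's convention that projection and satisfiability are taken over $\integers$. The paper itself gives no proof of this proposition (it is imported from the cited TPLP article), but the informal gloss that follows it in the text describes precisely the argument you give, so your construction is the intended one.
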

The sense of Proposition~\ref{proposition:unit-loop-tplp}
is the following. Satisfiability for $c$ means that there exists
some ``input'' to the clause \ie some value from which one
can ``enter'' the clause. The logical formula means:
let $\tilde{a}$ be some input to the clause (\ie 
$\forall \tilde{x}\ e(\tilde{x})$), then any output
$\tilde{b}$ corresponding to $\tilde{a}$ (\ie
$\forall \tilde{y}\ c$) is also an input to the
clause (\ie $e(\tilde{y})$). Shortly, if one can enter the
clause with $\tilde{a}$, then one can enter the clause again
with \emph{any} output corresponding to $\tilde{a}$.
This corresponds to a notion of \emph{unavoidable}
(\emph{universal}) non-termination, as any input
to the clause necessarily leads to an infinite
computation.

The criterion provided in
Proposition~\ref{proposition:unit-loop-tplp}
can be refined into:
%
\begin{proposition}
  \label{proposition:unit-loop}
  Let
  \[r = p(\tilde{x}) \la c, p(\tilde{y})\]
  be a $\clppl$ binary clause where $c$ is satisfiable.
  Let $e(\tilde{x})$ denote the projection of
  $c$ onto $\tilde{x}$.
  Suppose the formula
  \[\forall \tilde{x}\ e(\tilde{x}) \Rightarrow
  \big[\exists \tilde{y}\ c \land
  e(\tilde{y})\big]\]
  is true. Then, $p(\tilde{v})$ has an infinite
  computation in $\{r\}$, for some fixed integer values
  for $\tilde{v}$.
\end{proposition}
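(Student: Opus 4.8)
The plan is to follow the same strategy as for Proposition~\ref{proposition:unit-loop-tplp}: I will produce an infinite chain of integer tuples that are linked pairwise by the body constraint $c$, and then read off from that chain an infinite $\clppl$ computation of $\{r\}$. The one genuinely new point, compared with the universal criterion of Proposition~\ref{proposition:unit-loop-tplp}, is that the weaker hypothesis only supplies \emph{some} admissible successor at each step, so the chain must be built by an explicit (dependent) choice rather than by following an arbitrary path.

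First I would start the chain using satisfiability of $c$. Since $c$ is satisfiable there is an integer tuple $(\tilde{a}_0,\tilde{b})$ with $c[\tilde{x}\mapsto\tilde{a}_0,\tilde{y}\mapsto\tilde{b}]$ true; as $e(\tilde{x})$ is the projection of $c$ onto $\tilde{x}$, i.e.\ $e(\tilde{x})\equiv\exists\tilde{y}\,c$, the tuple $\tilde{a}_0$ satisfies $e$. Then I would extend the chain by induction, keeping the invariant ``$\tilde{a}_i$ is an integer tuple with $e(\tilde{a}_i)$ true''. Given such an $\tilde{a}_i$, instantiating the hypothesis at $\tilde{x}\mapsto\tilde{a}_i$ yields an integer tuple $\tilde{a}_{i+1}$ with both $c[\tilde{x}\mapsto\tilde{a}_i,\tilde{y}\mapsto\tilde{a}_{i+1}]$ and $e(\tilde{a}_{i+1})$ true, which re-establishes the invariant. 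This produces an infinite sequence $\tilde{a}_0,\tilde{a}_1,\tilde{a}_2,\dots$ of integer tuples with $c[\tilde{x}\mapsto\tilde{a}_i,\tilde{y}\mapsto\tilde{a}_{i+1}]$ true for every $i\in\nat$.

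Finally I would convert this sequence into an infinite computation of the query $p(\tilde{a}_0)$ in $\{r\}$. Resolving $p(\tilde{a}_0)$ with a renamed copy of $r$, then resolving the new body atom with a further renamed copy, and so on, the constraint store after $n$ steps is a conjunction of $n$ renamed copies of $c$ threaded through the fresh variables; assigning, for the $k$th copy, the tuple $\tilde{a}_{k-1}$ to its $\tilde{x}$-variables and $\tilde{a}_k$ to its $\tilde{y}$-variables (and $\tilde{a}_0$ to the query arguments) is a model of that store. Hence the store stays satisfiable at every step, the derivation never fails, and $p(\tilde{v})$ with $\tilde{v}=\tilde{a}_0$ has an infinite computation in $\{r\}$.

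The step I expect to be delicate is the identification ``$e(\tilde{x})$ is the projection of $c$'' together with the claim that the witnesses chosen at each step can be taken to be \emph{integer} tuples: over the integers the projection of a linear constraint system is not in general itself a linear constraint system, and the existential in the hypothesis is an integer existential. To keep the argument clean I would either work inside the concrete class of constraints generated by the compilation, where this issue does not bite, or record explicitly the solvability and projection facts about $\clppl$ constraints that are being used. The rest is routine: the chain construction is a short induction driven by the hypothesis, and the extraction of the infinite $\clppl$ derivation is the same bookkeeping as in Proposition~\ref{proposition:unit-loop-tplp}.
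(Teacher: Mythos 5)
Your proof is correct and follows exactly the argument the paper intends: the paper does not spell out a proof of Proposition~\ref{proposition:unit-loop} in this version (it defers the formal details to the long version and to~\cite{Payet09a}), but its informal reading of the formula --- ``for any input $\tilde{a}$ there exists an output $\tilde{b}$ with which one can enter the clause again'' --- is precisely your dependent-choice construction of an infinite chain $\tilde{a}_0,\tilde{a}_1,\dots$ seeded by satisfiability of $c$ and threaded through renamed copies of $r$. Your caveat about integer projection is a fair implementation-level remark but does not affect the proof, since $e$ only needs to be used semantically as $\exists\tilde{y}\,c$.
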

Now, the sense of the logical formula is:
if one can enter the clause with an input $\tilde{a}$, then
\emph{there exists} an output $\tilde{b}$ corresponding
to $\tilde{a}$ such that one can enter the clause
again with $\tilde{b}$.
This corresponds to a notion of \emph{potential}
(\emph{existential}) non-termination, as for any input
to the clause there is a corresponding output that leads
to an infinite computation.

The criterion of Proposition~\ref{proposition:unit-loop-tplp}
entails that of Proposition~\ref{proposition:unit-loop}, but
the converse does not hold in general (\eg for the clause
$p(x) \la \{x \geq 0\}, p(y)$ the logical formula
of Proposition~\ref{proposition:unit-loop} is true
whereas that of Proposition~\ref{proposition:unit-loop-tplp}
is not).

In Proposition~\ref{proposition:unit-loop}, we consider
\emph{recursive} binary clauses. A recursive clause
in $\mathit{binunf}(P_{\mathit{CLP}})$ is a \emph{compressed}
form of a loop in $P_{\mathit{CLP}}$. The next result 
allows one to ensure that a loop is reachable
from a given program point.
%
\begin{proposition}
  \label{proposition:compound-loops}
  Let
  \[\begin{array}{r@{\;=\;}l}
    r & p(\tilde{x}) \la c, p(\tilde{y})\\[1.5ex]
    r' & p'(\tilde{x}') \la c', p(\tilde{y}')
  \end{array}\]
  be some $\clppl$ binary clauses where $c$ and
  $c'$ are satisfiable.
  Let $e(\tilde{x})$ denote the projection of
  $c$ onto $\tilde{x}$.
  Suppose the formul\ae
  \begin{itemize}
  \item $\forall \tilde{x}\ e(\tilde{x}) \Rightarrow
    \big[\exists \tilde{y}\ c \land e(\tilde{y})\big]$
  \item $\exists \tilde{x}',\ \exists \tilde{y}',\
    c' \land e(\tilde{y}')$
  \end{itemize}
  are true. Then, $p'(\tilde{v})$ has an infinite
  computation in $\{r,r'\}$, for some fixed integer values
  for $\tilde{v}$.
\end{proposition}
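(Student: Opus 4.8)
The plan is to combine the non-termination witness provided by Proposition~\ref{proposition:unit-loop} for the loop $r$ with a single ``entry'' step through the clause $r'$. First I would invoke Proposition~\ref{proposition:unit-loop} directly: its hypothesis is exactly the first displayed formula, so I obtain some fixed integer tuple $\tilde{a}$ such that $p(\tilde{a})$ has an infinite computation in $\{r\}$, and hence in $\{r,r'\}$ (adding clauses can only add computations). Moreover, the standard soundness argument behind Proposition~\ref{proposition:unit-loop} actually produces such an $\tilde{a}$ for \emph{every} input satisfying $e$; I would make this explicit, since the second hypothesis will hand me a specific input that I need to match up with the loop. Concretely, I would restate the content of Proposition~\ref{proposition:unit-loop}'s proof as: for every integer tuple $\tilde{a}$ with $e(\tilde{a})$ true, the query $p(\tilde{a})$ admits an infinite computation in $\{r\}$.

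Next I would use the second hypothesis, $\exists \tilde{x}',\exists \tilde{y}',\ c' \land e(\tilde{y}')$, to pick integer tuples $\tilde{v}$ (for $\tilde{x}'$) and $\tilde{a}$ (for $\tilde{y}'$) satisfying $c'[\tilde{x}'\mapsto\tilde{v},\tilde{y}'\mapsto\tilde{a}] \land e(\tilde{a})$. Here one must be slightly careful: the existentials range over integer values, which matches our specialised \textit{CLP} semantics where variables are bound to integers, so the chosen witnesses are legitimate inputs for a \textit{CLP} computation. Then I would exhibit the computation for the query $p'(\tilde{v})$ in $\{r,r'\}$: the first derivation step resolves $p'(\tilde{v})$ with clause $r'$, whose constraint $c'$ is satisfied by the pair $(\tilde{v},\tilde{a})$, leaving the goal $p(\tilde{a})$ (with the constraint store consistent, by the choice of witnesses). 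From $p(\tilde{a})$, since $e(\tilde{a})$ holds, the previous paragraph gives an infinite computation using $r$ repeatedly. Concatenating the single $r'$-step with this infinite $\{r\}$-computation yields an infinite computation of $p'(\tilde{v})$ in $\{r,r'\}$, as required.

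The main obstacle I anticipate is the bookkeeping needed to argue that the infinite computation of $p(\tilde{a})$ in $\{r\}$ remains a valid computation once it is grafted onto the goal produced by the $r'$-step in the program $\{r,r'\}$ — i.e.\ that the constraint store accumulated by the $r'$-step (projected onto the variables of the residual atom $p(\tilde{y}')$) is exactly captured by $e(\tilde{a})$, so that the loop computation genuinely ``picks up'' from a satisfiable state. This is where the hypothesis $e(\tilde{y}')$ in the second formula is essential: $e(\tilde{x})$ is by definition the projection of $c$ onto $\tilde{x}$, and one checks that the computation built in Proposition~\ref{proposition:unit-loop} only depends on the entry point through $e$, never through the full store. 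Beyond that, the remaining points are routine: satisfiability of $c'$ guarantees the first step does not fail, and the fact that $\tilde{v},\tilde{a}$ are integer tuples ensures compliance with the specialised semantics assumed throughout. I would therefore structure the write-up as (i) the ``for every input'' strengthening of Proposition~\ref{proposition:unit-loop}, (ii) the choice of witnesses from the second hypothesis, and (iii) the one-step-plus-loop concatenation, with the projection lemma linking (ii) to (i).
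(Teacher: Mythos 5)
Your proposal is correct and follows exactly the intended route: apply (the ``for every input satisfying $e$'' strengthening implicit in the proof of) Proposition~\ref{proposition:unit-loop} to the loop clause $r$, use the second hypothesis to pick integer witnesses $\tilde{v},\tilde{a}$ with $c'(\tilde{v},\tilde{a})\land e(\tilde{a})$, and prepend the single $r'$-resolution step to the infinite $\{r\}$-computation of $p(\tilde{a})$. This matches the paper's reading of the proposition (the second formula supplies an output of $r'$ that is an input to $r$), so there is nothing further to add.
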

The sense of the second logical formula
in Proposition~\ref{proposition:compound-loops}
is that there is an output to the clause
$r'$ which is an input to the clause $r$.
Moreover, any input to $r'$ that satisfies the
second formula is the starting point of a
potential infinite computation; if the first
logical formula in Proposition~\ref{proposition:compound-loops}
was that of Proposition~\ref{proposition:unit-loop-tplp},
then it would be the starting point of an
unavoidable infinite computation.

\begin{example}
  The clauses $u_4$ and $u_6$ in
  Example~\ref{example:proving_non_termination}
  satisfy the pre-conditions
  of Proposition~\ref{proposition:compound-loops}.
  Hence, by Proposition~\ref{proposition:compound-loops},
  there is a value $v$ in $\mathbb{Z}$ which is such that
  $\mathit{main}(v)$ has an infinite computation in
  $\mathit{binunf}(P_{\mathit{CLP}})$, hence an infinite
  computation in $P_{\mathit{CLP}}$. Consequently,
  by Theorem~\ref{theorem:non-term}, there is an execution
  of the Java virtual Machine started at block $\mathit{main}$
  that does not terminate, where $\mathit{main}$ is the
  initial block of the Java bytecode program corresponding
  to the Java program in Fig.~\ref{fig:sum_term_java}.
  \qed
\end{example}


\section{Experiments}
\label{section:experiments}
We implemented our approach in the Julia analyser.
Non-termination proofs of $\clppl$ programs
are performed by the BinTerm tool, a component of Julia
that implements Proposition~\ref{proposition:compound-loops}.
BinTerm is written in SWI-Prolog~\cite{wielemaker11}
and relies on the Parma Polyhedra Library~\cite{BagnaraHZ08SCP}
for checking satisfiability of integer linear constraints. 
Elimination of existentially quantified variables in $\integers$ 
follows the approach of the  Omega Test~\cite{Pugh92}.

We evaluated our analyser on a collection of 113
examples, made up of:
\begin{itemize}
\item a set of 75 iterative examples, consisting of 54 programs
  provided by~\cite{InvelWeb,VelroyenR08}%
  \footnote{We removed the incomplete program \texttt{factorial}
    from the collection
    of simple examples from~\cite{InvelWeb,VelroyenR08}.}
  and the 21 non-terminating
  programs submitted by the Julia team to the 
  \emph{International Termination Competition} 
  \cite{termcomp} in 2011,
\item a set of 38 recursive examples, consisting of 34 programs that
  we obtained by turning some examples from~\cite{InvelWeb,VelroyenR08}
  into recursive programs, and 4 programs that we wrote ourselves;
  all of these recursive programs do not terminate due to an infinite
  recursion.
\end{itemize}

In our experiments, we compared AProVE, Invel and the new version
of Julia. We used the default settings for each tool \ie a
time-out of 60 seconds for AProVE, a maximum number of iterations
set to 10 for Invel and, for Julia, a time-out of 20 seconds for
each strongly connected component of the $\clppl$ program.
Details on the experiments are available at~\cite{PayetMP12experiments}.
We do not provide running times as we could not run all the
tools on a same machine (the AProVE implementation that performs
non-termination proofs is available through a web interface only).

Table~\ref{table:results-imp} and Table~\ref{table:results-rec}
give an overview of the results that we obtained. 
Here, ``Invel'' is the set of 54 examples from~\cite{InvelWeb,VelroyenR08},
``Julia TC11'' is the set of 21 non-terminating examples
submitted by Julia to the competition in 2011,
``Invel rec.'' is the set obtained by turning 34 Invel
examples into recursive programs and
 ``Julia rec.'' corresponds to the 4 programs that we wrote.
Moreover, \textbf{Y} and \textbf{N} indicate how often termination
(resp. non-termination) could be proved, \textbf{F} indicates
how often a tool failed within the time limit set by its default
settings and \textbf{T} states how many examples led to time-out.
Finally, \textbf{P} (\textbf{P}roblems) gives the number of times
a tool stopped with a run-time exception or produced an incorrect
answer: the Invel analyser issued 2 incorrect answers (2 terminating
programs incorrectly proved non-terminating) and stopped twice
with a \texttt{NullPointerException}.
Table~\ref{table:results-imp} shows that Julia failed on 22 Invel
examples; 13 of these failures are due to the use, in the corresponding
programs, of bytecode instructions that are not exactly abstracted 
into constraints.
Table~\ref{table:results-rec} clearly shows that only Julia could
detect non-terminating recursions.

\begin{table}[t]
  \begin{center}
    \begin{tabular}{|l|c|c|c|c|c||c|c|c|c|}
      \cline{2-10}
      \multicolumn{1}{c|}{}
      & \multicolumn{5}{c||}{Invel (54)}
      & \multicolumn{4}{c|}{Julia TC11 (21)}\\
      \cline{2-10}
      \multicolumn{1}{c|}{}
      & \textbf{Y} & \textbf{N} & \textbf{F} & \textbf{T} & \textbf{P}
      & \textbf{N} & \textbf{F} & \textbf{T} & \textbf{P} \\
      \hline
      AProVE
      & 3 & 49 & 0 & 2 & 0
      & 21 & 0 & 0 & 0 \\
      \hline
      Invel
      & 0 & 29 & 1 & 21 & 3
      & 13 & 0 & 7 & 1\\
      \hline
      Julia
      & 2 & 30 & 22 & 0 & 0
      & 17 & 4 & 0 & 0 \\
      \hline
    \end{tabular}
  \end{center}
  \caption{The results of our experiments on iterative programs.}
  \label{table:results-imp}
\end{table}

\begin{table}[t]
  \begin{center}
    \begin{tabular}{|l|c|c|c||c|c|}
      \cline{2-6}
      \multicolumn{1}{c|}{}
      & \multicolumn{3}{c||}{Invel rec. (34)}
      & \multicolumn{2}{c|}{Julia rec. (4)}\\
      \cline{2-6}
      \multicolumn{1}{c|}{}
      & \textbf{N} & \textbf{F} & \textbf{T}  
      & \textbf{N} & \textbf{T} \\
      \hline
      AProVE
      & 0 & 0 & 34 
      & 0 & 4\\
      \hline
      Invel
      & 0 & 0 & 34
      & 0 & 4 \\
      \hline
      Julia
      & 26 & 8 & 0
      & 4  & 0 \\
      \hline
    \end{tabular}
  \end{center}
  \caption{The results of our experiments on recursive programs.}
  \label{table:results-rec}
\end{table}

\subsection*{Interpreting the Web Interface of Julia}
For our experiments, there are three cases.
Let us consider the Invel examples.

\begin{itemize}
\item For \texttt{alternatingIncr.jar},
we get one warning:
\begin{verbatim}
[Termination] are you sure that 
simple.alternatingIncr.increase always
terminates?
\end{verbatim}
It means that Julia has a proof that the upper approximation of the program
built to prove termination loops in $\mathbb{Q}$,
indicating a potential non-termination of the original program.

\item For \texttt{alternDiv.jar}, we get two warnings:
\begin{verbatim}
[Termination] are you sure that 
simple.alternDiv.AlternDiv.loop always
terminates?

[Termination] simple.alternDiv.Main.main may 
actually diverge
\end{verbatim}
The first warning has the same interpretation as above
while the second one emphasizes that Julia has a
non-termination proof for the original program.

\item For \texttt{whileBreak.jar}, there are no warnings.
It means that Julia has a termination proof for the original program.
\end{itemize}


\section{Conclusion}
\label{section:conclusion}
The work we have presented in this paper was initiated
in~\cite{Payet09b}, where we observed that in some situations
the non-termination of a Java bytecode program can be deduced
from that of its $\clppl$ translation.
Here, we have introduced the formal material that is missing
in~\cite{Payet09b} and we have presented a new experimental
evaluation, conducted with the Julia tool which now includes
an implementation of our results.
Currently, our non-termination analysis cannot be applied to programs
that include certain types of object field access.
We are actively working at extending its scope, so that it can
identify sources of non-termination such as traversal of cyclical
data structures.


\bibliographystyle{abbrv}

\begin{thebibliography}{10}

\bibitem{AlbertAGPZ12}
E.~Albert, P.~Arenas, S.~Genaim, G.~Puebla, and D.~Zanardini.
\newblock Cost {A}nalysis of {O}bject-{O}riented {B}ytecode {P}rograms.
\newblock {\em Theoretical Computer Science (Special Issue on Quantitative
  Aspects of Programming Languages)}, 413(1):142--159, 2012.

\bibitem{AProVEWeb}
\texttt{http://aprove.informatik.rwth-aachen.de/eval/ JBC-Nonterm/}.

\bibitem{BagnaraHZ08SCP}
R.~Bagnara, P.~M. Hill, and E.~Zaffanella.
\newblock The {Parma Polyhedra Library}: Toward a complete set of numerical
  abstractions for the analysis and verification of hardware and software
  systems.
\newblock {\em Science of Computer Programming}, 72(1--2):3--21, 2008.

\bibitem{KeYBook2007}
B.~Beckert, R.~H\"ahnle, and P.~H. Schmitt, editors.
\newblock {\em Verification of Object-Oriented Software: The {KeY} Approach},
  volume 4334 of {\em Lecture {N}otes in {C}omputer {S}cience}.
\newblock Springer, 2007.

\bibitem{BenAmramMCsZ}
A.~M. Ben-Amram.
\newblock Monotonicity constraints for termination in the integer domain.
\newblock {\em Logical Methods in Computer Science}, 7(3), 2011.

\bibitem{Bol91b}
R.~N. Bol, K.~R. Apt, and J.~W. Klop.
\newblock An analysis of loop checking mechanisms for logic programs.
\newblock {\em Theoretical Computer Science}, 86:35--79, 1991.

\bibitem{Bradley05polyrankingEV}
A.~R. Bradley, Z.~Manna, and H.~B. Sipma.
\newblock Polyranking for polynomial loops (extended version), 2005.
\newblock Available at
  \texttt{http://theory.stanford.edu/\-\~{}arbrad/papers/tcs\-\_term.pdf}.

\bibitem{BrockschmidtSOG}
M.~Brockschmidt, T.~Str\"oder, C.~Otto, and J.~Giesl.
\newblock Automated detection of non-termination and {NullPointerExceptions}
  for {J}ava bytecode.
\newblock In {\em Proc. of the 2nd {I}nternational {C}onference on {F}ormal
  {V}erification of {O}bject-{O}riented {S}oftware (FoVeOOS '11)}, Lecture
  Notes in Computer Science. Springer, 2011.
\newblock To appear.

\bibitem{Codish99a}
M.~Codish and C.~Taboch.
\newblock A semantic basis for the termination analysis of logic programs.
\newblock {\em Journal of Logic Programming}, 41(1):103--123, 1999.

\bibitem{CousotC77}
P.~Cousot and R.~Cousot.
\newblock Abstract interpretation: a unifed lattice model for static analysis
  of programs by construction or approximation of fixpoints.
\newblock In {\em Proc. of the 4th Symposium on Principles of Programming
  Languages (POPL'77)}, pages 238--252. {ACM} {P}ress, 1977.

\bibitem{DeSchreye90a}
D.~{De~Schreye}, K.~Verschaetse, and M.~Bruynooghe.
\newblock A practical technique for detecting non-terminating queries for a
  restricted class of {Horn} clauses, using directed, weighted graphs.
\newblock In {\em Proc. of the 7th {I}nternational {C}onference on {L}ogic
  {P}rogramming (ICLP'90)}, pages 649--663. The MIT Press, 1990.

\bibitem{Gabbrielli94a}
M.~Gabbrielli and R.~Giacobazzi.
\newblock Goal independency and call patterns in the analysis of logic
  programs.
\newblock In {\em Proc. of the ACM Symposium on Applied Computing (SAC'94)},
  pages 394--399. {ACM} {P}ress, 1994.

\bibitem{GieslST}
J.~Giesl, P.~Schneider-Kamp, and R.~Thiemann.
\newblock {AProVE 1.2}: Automatic termination proofs in the dependency pair
  framework.
\newblock In U.~Furbach and N.~Shankar, editors, {\em Proc. of the 3rd
  {I}nternational {J}oint {C}onference on {A}utomated {R}easoning (IJCAR'06)},
  volume 4130 of {\em Lecture Notes in Artificial Intelligence}, pages
  281--286. Springer, 2006.

\bibitem{Giesl05}
J.~Giesl, R.~Thiemann, and P.~Schneider-Kamp.
\newblock Proving and disproving termination of higher-order functions.
\newblock In B.~Gramlich, editor, {\em Proc. of the 5th {I}nternational
  {W}orkshop on {F}rontiers of {C}ombining {S}ystems (FroCoS'05)}, volume 3717
  of {\em Lecture Notes in Artificial Intelligence}, pages 216--231. Springer,
  2005.

\bibitem{Gupta08}
A.~Gupta, T.~A. Henzinger, R.~Majumdar, A.~Rybalchenko, and R.-G. Xu.
\newblock Proving non-termination.
\newblock In G.~C. Necula and P.~Wadler, editors, {\em Proc. of the 35th {ACM
  SIGPLAN-SIGACT} {S}ymposium on {P}rinciples of {P}rogramming {L}anguages
  (POPL'08)}, pages 147--158. {ACM} {P}ress, 2008.

\bibitem{InvelWeb}
\texttt{http://www.key-project.org/nonTermination/}.
\newblock Invel -- The Automatic Non-Termination Detection Tool.

\bibitem{JuliaWeb}
\texttt{http://julia.scienze.univr.it/nontermination}.
\newblock Julia Special Configuration for (Non-)Termination Experiments.

\bibitem{Otto10}
C.~Otto, M.~Brockschmidt, C.~von Essen, and J.~Giesl.
\newblock Automated termination analysis of {J}ava bytecode by term rewriting.
\newblock In C.~Lynch, editor, {\em Proc. of the 21st International Conference
  on {R}ewriting {T}echniques and {A}pplications (RTA'10)}, volume~6 of {\em
  {L}eibniz {I}nternational {P}roceedings in {I}nformatics}, pages 259--276.
  Schloss Dagstuhl - Leibniz-Zentrum fuer Informatik, 2010.

\bibitem{Payet08a}
E.~Payet.
\newblock Loop detection in term rewriting using the eliminating unfoldings.
\newblock {\em Theoretical Computer Science}, 403:307--327, 2008.

\bibitem{Payet09a}
E.~Payet and F.~Mesnard.
\newblock A non-termination criterion for binary constraint logic programs.
\newblock {\em {T}heory and {P}ractice of {L}ogic {P}rogramming},
  9(2):145--164, 2009.
\newblock Long version available at the web address
  \texttt{http://arxiv.org/abs/0807.3451}.

\bibitem{Payet06a}
E.~Payet and F.~Mesnard.
\newblock Non-termination inference of logic programs.
\newblock {\em {ACM} Transactions on Programming Languages and Systems}, 28,
  Issue~2:256--289, March 2006.

\bibitem{Payet09b}
E.~Payet and F.~Spoto.
\newblock Experiments with non-termination analysis for {J}ava bytecode.
\newblock In E.~Albert and S.~Genaim, editors, {\em Proc. of the 4th
  International Workshop on Bytecode Semantics, Verification, Analysis and
  Transformation (BYTECODE'09)}, volume 253, Issue~5 of {\em Electronic Notes
  in Theoretical Computer Science}, pages 83--96, 2009.

\bibitem{PayetMP12long}
\texttt{http://personnel.univ-reunion.fr/epayet/
  Research/Resources/ppdp12long.pdf}.

\bibitem{PayetMP12experiments}
\texttt{http://personnel.univ-reunion.fr/epayet/
  Research/PPDP12/Html/experiments.html}.

\bibitem{Pugh92}
W.~Pugh.
\newblock A practical algorithm for exact array dependence analysis.
\newblock {\em Commun. ACM}, 35(8):102--114, 1992.

\bibitem{Shen03}
Y.-D. Shen, J.-H. You, L.-Y. Yuan, S.~Shen, and Q.~Yang.
\newblock A dynamic approach to characterizing termination of general logic
  programs.
\newblock {\em ACM Transactions on Computational Logic}, 4(4):417--434, 2003.

\bibitem{Shen01a}
Y.-D. Shen, L.-Y. Yuan, and J.-H. You.
\newblock Loops checks for logic programs with functions.
\newblock {\em Theoretical Computer Science}, 266(1-2):441--461, 2001.

\bibitem{SpotoMP10}
F.~Spoto, F.~Mesnard, and E.~Payet.
\newblock A termination analyzer for {J}ava bytecode based on path-length.
\newblock {\em {ACM} {T}ransactions on {P}rogramming {L}anguages and
  {S}ystems}, 32, Issue~3:70~pages, March 2010.

\bibitem{termcomp}
\texttt{http://termcomp.uibk.ac.at/termcomp/home.seam}.
\newblock The Termination Competition Website.

\bibitem{VelroyenR08}
H.~Velroyen and P.~R\"ummer.
\newblock Non-termination checking for imperative programs.
\newblock In B.~Beckert and R.~H\"ahnle, editors, {\em Proc.\ of the 2nd
  International Conference on Tests and Proofs (TAP'08)}, volume 4966 of {\em
  Lecture Notes in Computer Science}, pages 154--170. Springer, 2008.

\bibitem{Waldmann04}
J.~Waldmann.
\newblock Matchbox: a tool for match-bounded string rewriting.
\newblock In V.~van Oostrom, editor, {\em Proc. of the 15th International
  Conference on Rewriting Techniques and Applications (RTA'04)}, volume 3091 of
  {\em Lecture {N}otes in {C}omputer {S}cience}, pages 85--94.
  Springer-{V}erlag, 2004.

\bibitem{Waldmann07}
J.~Waldmann.
\newblock Compressed loops (draft), 2007.
\newblock Available at
  \texttt{http://dfa.imn.htwk-leipzig.de/\-matchbox/methods/}.

\bibitem{wielemaker11}
J.~Wielemaker, T.~Schrijvers, M.~Triska, and T.~Lager.
\newblock {SWI-Prolog}.
\newblock {\em Theory and Practice of Logic Programming}, 12(1-2):67--96, 2012.

\bibitem{Zankl07b}
H.~Zankl and A.~Middeldorp.
\newblock Nontermination of string rewriting using {SAT}.
\newblock In {\em Proc. of the 9th International Workshop on Termination
  (WST'07)}, pages 52--55, 2007.

\bibitem{Zantema05}
H.~Zantema.
\newblock Termination of string rewriting proved automatically.
\newblock {\em {J}ournal of {A}utomated {R}easoning}, 34(2):105--139, 2005.

\end{thebibliography}

\newpage
\appendix

\section{Proofs}

\subsection{Proposition~\ref{proposition:composition}}
Let $\delta_1\in\Delta_{l_i,s_i\to l_t,s_t}$,
$\delta_2\in\Delta_{l_t,s_t\to l_o,s_o}$,
$\pl_1\in\mathbb{PL}_{l_i,s_i\to l_t,s_t}$ and
$\pl_2\in\mathbb{PL}_{l_t,s_t\to l_o,s_o}$.
Suppose that $\pl_1\models\delta_1$, that $\pl_2\models\delta_2$
and that $\delta_1$ is compatible with $\delta_2$.
For sake of readability, we let $\pl_{1,2}$ denote
the constraint $\pl_1;^{\mathbb{PL}}\pl_2$.
We have $\delta_1;\delta_2\in\Delta_{l_i,s_i\to l_o,s_o}$ and
$\pl_{1,2}\in\mathbb{PL}_{l_i,s_i\to l_o,s_o}$.
We have to prove that
$\pl_{1,2}\models\delta_1;\delta_2$. So, consider
any model $\rho$ of $\pl_{1,2}$ and
any state $\sigma$ compatible with $\delta_1;\delta_2$.
Suppose that 
$\inputv{len}(\sigma) = \inputv{\rho}$. We have to prove that
$(\delta_1;\delta_2)(\sigma)$ is defined and
$\outputv{len}((\delta_1;\delta_2)(\sigma)) =
\outputv{\rho}$.

Let
\[T=\{\overline{l}^0,\ldots,\overline{l}^{l_t-1},
\overline{s}^0,\ldots,\overline{s}^{s_t-1}\}\;.\]
By Definition~\ref{def:abstract_transformers},
\[\pl_{1,2} = \exists_T\big(
\pl_1[\outputv{v}\mapsto\overline{v}\;|\;\overline{v}\in T] \cup
\pl_2[\inputv{v}\mapsto\overline{v}\;|\;\overline{v}\in T]
\big)\;.\]
As $\rho$ is a model of $\pl_{1,2}$, there exists an assignment $\rho'$
that coincides with $\rho$ on every variable not in $T$ and
which is such that
\begin{equation}\label{equation-composition-1}
  \rho'\models\pl_1[\outputv{v}\mapsto\overline{v}\;|\;\overline{v}\in T]
  \quad\text{and}\quad
  \rho'\models\pl_2[\inputv{v}\mapsto\overline{v}\;|\;\overline{v}\in T]
  \;.
\end{equation}
Consider the following facts and definitions.
\begin{itemize}
\item The state $\sigma$ is compatible with $\delta_1;\delta_2$,
  hence it is compatible with $\delta_1$.
\item We define the assignment $\rho_1$ as:
  \[\begin{array}{rcl}
    \rho_1 & = & [\inputv{l}^k\mapsto \rho'(\inputv{l}^k)
    \;|\; 0\le k < l_i]\\[1ex]
    & \cup & [\inputv{s}^k\mapsto \rho'(\inputv{s}^k)
    \;|\; 0\le k < s_i]\\[1ex]
    & \cup & [\outputv{v}\mapsto\rho'(\overline{v})\;|\;
    \overline{v}\in T]\;.
  \end{array}\]
  By~(\ref{equation-composition-1}), $\rho_1\models \pl_1$, hence
  $\rho_1$ is a model of $\pl_1$. Moreover, $\sigma$ is compatible with
  $\delta_1$ and $\inputv{len}(\sigma) = \inputv{\rho}$ with
  $\inputv{\rho} = \inputv{\rho}_1$.
  As $\pl_1\models\delta_1$, then $\delta_1(\sigma)$ is defined and
  $\outputv{len}(\delta_1(\sigma)) = \outputv{\rho}_1$.
\item We define the assignment $\rho_2$ as:
  \[\begin{array}{rcl}
    \rho_2 & = & [\inputv{v}\mapsto\rho'(\overline{v})\;|\;
    \overline{v}\in T]\\[1ex]
    & \cup & [\outputv{l}^k\mapsto \rho'(\outputv{l}^k)
    \;|\; 0\le k < l_o]\\[1ex]
    & \cup & [\outputv{s}^k\mapsto \rho'(\outputv{s}^k)
    \;|\; 0\le k < s_o]\;.
  \end{array}\]
  By~(\ref{equation-composition-1}), $\rho_2\models \pl_2$, hence
  $\rho_2$ is a model of $\pl_2$. Moreover, $\delta_1(\sigma)$ is
  compatible with $\delta_2$ (because $\delta_1$ is compatible with
  $\delta_2$) and $\outputv{len}(\delta_1(\sigma)) = \outputv{\rho}_1$.
  By definition of $\rho_1$ and $\rho_2$, for each variable
  $\overline{v}$ in $T$ we have
  $\rho_1(\outputv{v})= \rho_2(\inputv{v})$. Hence,
  $\outputv{len}(\delta_1(\sigma)) = \outputv{\rho}_1$ implies
  that $\inputv{len}(\delta_1(\sigma)) = \inputv{\rho}_2$.
  As $\pl_2\models\delta_2$, then $\delta_2(\delta_1(\sigma))$ is
  defined and we have
  $\outputv{len}(\delta_2(\delta_1(\sigma))) = \outputv{\rho}_2$
  with $\outputv{\rho}_2 = \outputv{\rho}$.
\end{itemize}
Consequently, $(\delta_1;\delta_2)(\sigma)$ is defined and
$\outputv{len}((\delta_1;\delta_2)(\sigma)) = \outputv{\rho}$.

\subsection{Proposition~\ref{proposition:instructions_exactness}}
Let $q$ be a program point and $\#l,\#s$ be the number of local
variables and stack elements at $q$.
For any $L,S\subseteq\nat$, we let
\begin{align*}
  \mathit{Unchanged_q(L,S)} &= \{\inputv{l}^i=\outputv{l}^i\mid i\in L\}\\
  &\cup\{\inputv{s}^i=\outputv{s}^i\mid i\in S\}\\
  &\cup\{\inputv{s}^i=\inputv{s}^j \mid 0\le i,\ j<\#s,\\
  & \hspace{2cm}s^i\text{ is an alias of }s^j\text{ at }q\}\\
  &\cup\{\inputv{s}^i=\inputv{l}^j \mid
  0\le i<\#s,\ 0\le j<\#l,\\
  & \hspace{2cm}s^i\text{ is an alias of }l^j\text{ at }q\}\\
  &\cup\{\inputv{l}^i=\inputv{l}^j \mid 0\le i,\ j<\#l,\\
  & \hspace{2cm}l^i\text{ is an alias of }l^j\text{ at }q\}\\
  &\cup\{\inputv{s}^i\ge 0\mid 0\le i<\#s,\\
  & \hspace{0.5cm}\text{$s^i$ does not have integer type at }q\}\\
  &\cup\{\inputv{l}^i\ge 0\mid 0\le i<\#l,\\
  & \hspace{0.5cm}\text{$l^i$ does not have integer type at }q\}~.
\end{align*}
For any $l,s\in\nat$, we let
\begin{align*}
  \mathit{Unchanged}_q(l,s) = 
  \mathit{Unchanged}_q( & \{0,\ldots,l-1\},\\
  & \{0,\ldots,s-1\}).
\end{align*}

Let $\mathsf{ins}$ be an instruction in
$\correctins\setminus \{\mathsf{call}\}$.
We have to prove that $\mathit{ins}_q^{\mathbb{PL}}\models\mathit{ins}_q$.
Hence, consider any model $\rho$ of $\mathit{ins}_q^{\mathbb{PL}}$
and any state $\sigma$ compatible with $\mathit{ins}_q$, and suppose
that $\inputv{\len}(\sigma) = \inputv{\rho}$. We have to prove that
$\mathit{ins}_q(\sigma)$ is defined and that 
$\outputv{\len}(\mathit{ins}_q(\sigma)) = \outputv{\rho}$.
\begin{itemize}
\item Suppose that $\mathsf{ins} = \mathsf{const}\ c$. Then,
  \begin{align*}
    \mathit{ins}_q^\mathbb{PL} & =
    \begin{cases}
      \mathit{Unchanged}_q(\#l,\#s)\cup\{c=\outputv{s}^{\#s}\} &
      \text{if $c\in\mathbb{Z}$}\\
      \mathit{Unchanged}_q(\#l,\#s)\cup\{0=\outputv{s}^{\#s}\} &
      \text{if $c=\mathtt{null}$}
    \end{cases}\\
    \mathit{ins}_q & = \lambda\langle l\sep s\sep\mu\rangle.
    \langle l\sep c::s\sep\mu\rangle\;.
  \end{align*}
  Note that $\mathit{ins}_q(\sigma)$ is defined because 
  $\mathit{ins}_q$ is defined for any state that is compatible
  with it. Without loss of generality, suppose that $\sigma$
  has the form $\langle l\sep s\sep \mu\rangle$.
  \begin{itemize}
  \item Let $j\in\{0,\ldots,\#l-1\}$ and $k\in\{0,\ldots,\#s-1\}$.
    By definition of $\mathit{ins}_q$, we have
    $\outputv{\len}(\mathit{ins}_q(\sigma))(\outputv{l}^j) = 
    \inputv{\len}(\sigma)(\inputv{l}^j)$ and
    $\outputv{\len}(\mathit{ins}_q(\sigma))(\outputv{s}^k) = 
    \inputv{\len}(\sigma)(\inputv{s}^k)$; as 
    $\inputv{\len}(\sigma) = \inputv{\rho}$, 
    we have $\inputv{\len}(\sigma)(\inputv{l}^j) =
    \inputv{\rho}(\inputv{l}^j)$ and we have
    $\inputv{\len}(\sigma)(\inputv{s}^k) =
    \inputv{\rho}(\inputv{s}^k)$; moreover, as
    $\rho$ is a model of $\mathit{ins}_q^\mathbb{PL}$, with
    $\mathit{Unchanged}_q(\#l,\#s)\subseteq \mathit{ins}_q^\mathbb{PL}$,
    we have $\inputv{\rho}(\inputv{l}^j) = \outputv{\rho}(\outputv{l}^j)$
    and $\inputv{\rho}(\inputv{s}^k) = \outputv{\rho}(\outputv{s}^k)$.
    Therefore, 
    \[\outputv{\len}(\mathit{ins}_q(\sigma))(\outputv{l}^j) = 
    \outputv{\rho}(\outputv{l}^j)\] 
    and
    \[\outputv{\len}(\mathit{ins}_q(\sigma))(\outputv{s}^k) = 
    \outputv{\rho}(\outputv{s}^k)\;.\]
  \item Suppose that $c\in\mathbb{Z}$. By definition of
    $\mathit{ins}_q$, we have
    $\outputv{\len}(\mathit{ins}_q(\sigma))(\outputv{s}^{\#s}) = 
    \len(c, \mu) = c$. As $\rho$ is a model of
    $\mathit{ins}_q^\mathbb{PL}$, with
    $\{c=\outputv{s}^{\#s}\} \subseteq \mathit{ins}_q^\mathbb{PL}$,
    we have $\outputv{\rho}(\outputv{s}^{\#s}) = c$. Hence,
    \[\outputv{\len}(\mathit{ins}_q(\sigma))(\outputv{s}^{\#s}) = 
    \outputv{\rho}(\outputv{s}^{\#s})\;.\]
  \item Suppose that $c = \mathtt{null}$. By definition of
    $\mathit{ins}_q$, 
    $\outputv{\len}(\mathit{ins}_q(\sigma))(\outputv{s}^{\#s}) = 
    \len(c, \mu) = \len(\mathtt{null}, \mu) = 0$. As
    $\rho$ is a model of $\mathit{ins}_q^\mathbb{PL}$, with
    $\{0=\outputv{s}^{\#s}\} \subseteq \mathit{ins}_q^\mathbb{PL}$,
    we have $\outputv{\rho}(\outputv{s}^{\#s}) = 0$. Hence,
    \[\outputv{\len}(\mathit{ins}_q(\sigma))(\outputv{s}^{\#s}) = 
    \outputv{\rho}(\outputv{s}^{\#s})\;.\]
  \end{itemize}
  Consequently, we have $\outputv{\len}(\mathit{ins}_q(\sigma)) =
  \outputv{\rho}$.
\item Suppose that $\mathsf{ins} = \mathsf{dup}$. Then,
  \begin{align*}
    \mathit{ins}_q^\mathbb{PL} & = \mathit{Unchanged}_q(\#l,\#s)\cup
    \{\inputv{s}^{\#s-1}=\outputv{s}^{\#s}\}\\
    \mathit{ins}_q & = \lambda\langle l\sep\mathit{top}::s\sep\mu\rangle.
    \langle l\sep\mathit{top}::\mathit{top}::s\sep\mu\rangle\;.
  \end{align*}
  Note that $\#s\geq 1$ and that
  $\mathit{ins}_q(\sigma)$ is defined because 
  $\mathit{ins}_q$ is defined for any state that is compatible
  with it. Without loss of generality, suppose that $\sigma$
  has the form $\langle l\sep \mathit{top}::s\sep \mu\rangle$.
  \begin{itemize}
  \item Let $j\in\{0,\ldots,\#l-1\}$ and $k\in\{0,\ldots,\#s-1\}$.
    By definition of $\mathit{ins}_q$, we have
    $\outputv{\len}(\mathit{ins}_q(\sigma))(\outputv{l}^j) = 
    \inputv{\len}(\sigma)(\inputv{l}^j)$ and
    $\outputv{\len}(\mathit{ins}_q(\sigma))(\outputv{s}^k) = 
    \inputv{\len}(\sigma)(\inputv{s}^k)$; as 
    $\inputv{\len}(\sigma) = \inputv{\rho}$, 
    we have $\inputv{\len}(\sigma)(\inputv{l}^j) =
    \inputv{\rho}(\inputv{l}^j)$ and we have
    $\inputv{\len}(\sigma)(\inputv{s}^k) =
    \inputv{\rho}(\inputv{s}^k)$; moreover, as
    $\rho$ is a model of $\mathit{ins}_q^\mathbb{PL}$, with
    $\mathit{Unchanged}_q(\#l,\#s)\subseteq \mathit{ins}_q^\mathbb{PL}$,
    we have $\inputv{\rho}(\inputv{l}^j) = \outputv{\rho}(\outputv{l}^j)$
    and $\inputv{\rho}(\inputv{s}^k) = \outputv{\rho}(\outputv{s}^k)$.
    Therefore, 
    \[\outputv{\len}(\mathit{ins}_q(\sigma))(\outputv{l}^j) = 
    \outputv{\rho}(\outputv{l}^j)\]
    and
    \[\outputv{\len}(\mathit{ins}_q(\sigma))(\outputv{s}^k) = 
    \outputv{\rho}(\outputv{s}^k)\;.\]
  \item By definition of $\mathit{ins}_q$, we have
    $\outputv{\len}(\mathit{ins}_q(\sigma))(\outputv{s}^{\#s}) = 
    \len(\mathit{top}, \mu)$ with $\len(\mathit{top}, \mu) = 
    \inputv{\len}(\sigma)(\inputv{s}^{\#s-1})$.
    As $\inputv{\len}(\sigma) = \inputv{\rho}$, 
    we have $\inputv{\len}(\sigma)(\inputv{s}^{\#s-1}) =
    \inputv{\rho}(\inputv{s}^{\#s-1})$.
    As $\rho$ is a model of $\mathit{ins}_q^\mathbb{PL}$, with
    $\{\inputv{s}^{\#s-1}=\outputv{s}^{\#s}\} \subseteq
    \mathit{ins}_q^\mathbb{PL}$,
    we have
    $\inputv{\rho}(\inputv{s}^{\#s-1}) = \outputv{\rho}(\outputv{s}^{\#s})$.
    Hence,
    \[\outputv{\len}(\mathit{ins}_q(\sigma))(\outputv{s}^{\#s}) = 
    \outputv{\rho}(\outputv{s}^{\#s})\;.\]
  \end{itemize}
  Consequently, we have $\outputv{\len}(\mathit{ins}_q(\sigma)) =
  \outputv{\rho}$.
\item Suppose that $\mathsf{ins} = \mathsf{new}\ \kappa$. Then,
  \begin{align*}
    \mathit{ins}_q^\mathbb{PL} & = \mathit{Unchanged}_q(\#l,\#s)\cup
    \{1=\outputv{s}^{\#s}\}\\
    \mathit{ins_q} & = \lambda\langle l\sep s\sep\mu\rangle.
    \langle l\sep\ell::s\sep\mu[\ell\mapsto o]\rangle\\
    & \text{where $\ell$ is a fresh location}\\
    & \text{and $o$ is an object of class $\kappa$}\\
    & \text{whose fields hold $0$ or $\mathtt{null}$.}
  \end{align*}
  Note that $\mathit{ins}_q(\sigma)$ is defined because we assume
  that $\mathit{ins}_q$ is a total map%
  \footnote{This is true only if we assume that the
    system has infinite memory. Termination because of
    out of memory is not really termination from our point of view.}
  defined for any state that is compatible
  with it. Without loss of generality, suppose that $\sigma$
  has the form $\langle l\sep s\sep \mu\rangle$.
  \begin{itemize}
  \item Let $j\in\{0,\ldots,\#l-1\}$ and $k\in\{0,\ldots,\#s-1\}$.
    By definition of $\mathit{ins}_q$, we have
    $\outputv{\len}(\mathit{ins}_q(\sigma))(\outputv{l}^j) = 
    \inputv{\len}(\sigma)(\inputv{l}^j)$ and
    $\outputv{\len}(\mathit{ins}_q(\sigma))(\outputv{s}^k) = 
    \inputv{\len}(\sigma)(\inputv{s}^k)$; as 
    $\inputv{\len}(\sigma) = \inputv{\rho}$, 
    we have $\inputv{\len}(\sigma)(\inputv{l}^j) =
    \inputv{\rho}(\inputv{l}^j)$ and we have
    $\inputv{\len}(\sigma)(\inputv{s}^k) =
    \inputv{\rho}(\inputv{s}^k)$; moreover, as
    $\rho$ is a model of $\mathit{ins}_q^\mathbb{PL}$, with
    $\mathit{Unchanged}_q(\#l,\#s)\subseteq \mathit{ins}_q^\mathbb{PL}$,
    we have $\inputv{\rho}(\inputv{l}^j) = \outputv{\rho}(\outputv{l}^j)$
    and $\inputv{\rho}(\inputv{s}^k) = \outputv{\rho}(\outputv{s}^k)$.
    Therefore, 
    \[\outputv{\len}(\mathit{ins}_q(\sigma))(\outputv{l}^j) = 
    \outputv{\rho}(\outputv{l}^j)\]
    and
    \[\outputv{\len}(\mathit{ins}_q(\sigma))(\outputv{s}^k) = 
    \outputv{\rho}(\outputv{s}^k)\;.\]
  \item By definition of $\mathit{ins}_q$, we have
    $\outputv{\len}(\mathit{ins}_q(\sigma))(\outputv{s}^{\#s}) = 
    \len(\ell, \mu) = 1$.
    As $\rho$ is a model of $\mathit{ins}_q^\mathbb{PL}$, with
    $\{1 =\outputv{s}^{\#s}\} \subseteq
    \mathit{ins}_q^\mathbb{PL}$,
    we have
    $1 = \outputv{\rho}(\outputv{s}^{\#s})$.
    Hence,
    \[\outputv{\len}(\mathit{ins}_q(\sigma))(\outputv{s}^{\#s}) = 
    \outputv{\rho}(\outputv{s}^{\#s})\;.\]
  \end{itemize}
  Consequently, we have $\outputv{\len}(\mathit{ins}_q(\sigma)) =
  \outputv{\rho}$.
\item Suppose that $\mathsf{ins} = \mathsf{load}\ i$. Then,
  \begin{align*}
    \mathit{ins}_q^\mathbb{PL} & = \mathit{Unchanged}_q(\#l,\#s)\cup
    \{\inputv{l}^i=\outputv{s}^{\#s}\}\\
    \mathit{ins}_q & =\lambda\langle l\sep s\sep\mu\rangle.
    \langle l\sep l^i::s\sep\mu\rangle \; .
  \end{align*}
  Note that $\mathit{ins}_q(\sigma)$ is defined because 
  $\mathit{ins}_q$ is defined for any state that is compatible
  with it. Without loss of generality, suppose that $\sigma$
  has the form $\langle l\sep s\sep \mu\rangle$.
  \begin{itemize}
  \item Let $j\in\{0,\ldots,\#l-1\}$ and $k\in\{0,\ldots,\#s-1\}$.
    By definition of $\mathit{ins}_q$, we have
    $\outputv{\len}(\mathit{ins}_q(\sigma))(\outputv{l}^j) = 
    \inputv{\len}(\sigma)(\inputv{l}^j)$ and
    $\outputv{\len}(\mathit{ins}_q(\sigma))(\outputv{s}^k) = 
    \inputv{\len}(\sigma)(\inputv{s}^k)$; as 
    $\inputv{\len}(\sigma) = \inputv{\rho}$, 
    we have $\inputv{\len}(\sigma)(\inputv{l}^j) =
    \inputv{\rho}(\inputv{l}^j)$ and we have
    $\inputv{\len}(\sigma)(\inputv{s}^k) =
    \inputv{\rho}(\inputv{s}^k)$; moreover, as
    $\rho$ is a model of $\mathit{ins}_q^\mathbb{PL}$, with
    $\mathit{Unchanged}_q(\#l,\#s)\subseteq \mathit{ins}_q^\mathbb{PL}$,
    we have $\inputv{\rho}(\inputv{l}^j) = \outputv{\rho}(\outputv{l}^j)$
    and $\inputv{\rho}(\inputv{s}^k) = \outputv{\rho}(\outputv{s}^k)$.
    Therefore, 
    \[\outputv{\len}(\mathit{ins}_q(\sigma))(\outputv{l}^j) = 
    \outputv{\rho}(\outputv{l}^j)\]
    and
    \[\outputv{\len}(\mathit{ins}_q(\sigma))(\outputv{s}^k) = 
    \outputv{\rho}(\outputv{s}^k)\;.\]
  \item By definition of $\mathit{ins}_q$, we have
    $\outputv{\len}(\mathit{ins}_q(\sigma))(\outputv{s}^{\#s}) = 
    \len(l^i, \mu) =
    \inputv{\len}(\sigma)(\inputv{l}^i)$.
    As $\inputv{\len}(\sigma) = \inputv{\rho}$, 
    we have $\inputv{\len}(\sigma)(\inputv{l}^i) =
    \inputv{\rho}(\inputv{l}^i)$.
    As $\rho$ is a model of $\mathit{ins}_q^\mathbb{PL}$, with
    $\{\inputv{l}^i=\outputv{s}^{\#s}\} \subseteq
    \mathit{ins}_q^\mathbb{PL}$,
    we have
    $\inputv{\rho}(\inputv{l}^i) = \outputv{\rho}(\outputv{s}^{\#s})$.
    Hence,
    \[\outputv{\len}(\mathit{ins}_q(\sigma))(\outputv{s}^{\#s}) = 
    \outputv{\rho}(\outputv{s}^{\#s})\;.\]
  \end{itemize}
  Consequently, we have $\outputv{\len}(\mathit{ins}_q(\sigma)) =
  \outputv{\rho}$.
\item Suppose that $\mathsf{ins} = \mathsf{store}\ i$. Then,
  \begin{align*}
    \mathit{ins}_q^\mathbb{PL} & = \mathit{Unchanged}_q
    (\{0,\ldots,\#l-1\}\setminus i,\\
    & \hspace{2.3cm}\{0,\ldots,\#s-2\})\cup
    \{\inputv{s}^{\#s-1}=\outputv{l}^i\}\\
    \mathit{ins}_q & = \lambda\langle l\sep\mathit{top}::s\sep\mu\rangle.
    \langle l[i\mapsto\mathit{top}]\sep s\sep\mu\rangle \; .
  \end{align*}
  Note that $\#s\geq 1$, that $0\le i \le \#l-1$ and that
  $\mathit{ins}_q(\sigma)$ is defined because 
  $\mathit{ins}_q$ is defined for any state that is compatible
  with it. Without loss of generality, suppose that $\sigma$
  has the form $\langle l\sep \mathit{top}::s\sep \mu\rangle$.
  \begin{itemize}
  \item Let $j\in\{0,\ldots,\#l-1\}\setminus i$ and
    $k\in\{0,\ldots,\#s-2\}$.
    By definition of $\mathit{ins}_q$, we have
    $\outputv{\len}(\mathit{ins}_q(\sigma))(\outputv{l}^j) = 
    \inputv{\len}(\sigma)(\inputv{l}^j)$ and
    $\outputv{\len}(\mathit{ins}_q(\sigma))(\outputv{s}^k) = 
    \inputv{\len}(\sigma)(\inputv{s}^k)$; as 
    $\inputv{\len}(\sigma) = \inputv{\rho}$, 
    we have $\inputv{\len}(\sigma)(\inputv{l}^j) =
    \inputv{\rho}(\inputv{l}^j)$ and we have
    $\inputv{\len}(\sigma)(\inputv{s}^k) =
    \inputv{\rho}(\inputv{s}^k)$; as
    $\rho$ is a model of $\mathit{ins}_q^\mathbb{PL}$, with
    $\mathit{Unchanged}_q(\{0,\ldots,\#l-1\}\setminus i,\{0,\ldots,\#s-2\})
    \subseteq \mathit{ins}_q^\mathbb{PL}$,
    we have $\inputv{\rho}(\inputv{l}^j) = \outputv{\rho}(\outputv{l}^j)$
    and $\inputv{\rho}(\inputv{s}^k) = \outputv{\rho}(\outputv{s}^k)$.
    Therefore, 
    \[\outputv{\len}(\mathit{ins}_q(\sigma))(\outputv{l}^j) = 
    \outputv{\rho}(\outputv{l}^j)\]
    and
    \[\outputv{\len}(\mathit{ins}_q(\sigma))(\outputv{s}^k) = 
    \outputv{\rho}(\outputv{s}^k)\;.\]
  \item By definition of $\mathit{ins}_q$, 
    $\outputv{\len}(\mathit{ins}_q(\sigma))(\outputv{l}^i) = 
    \len(\mathit{top}, \mu)$ with $\len(\mathit{top}, \mu) = 
    \inputv{\len}(\sigma)(\inputv{s}^{\#s-1})$.
    As $\inputv{\len}(\sigma) = \inputv{\rho}$, 
    we have $\inputv{\len}(\sigma)(\inputv{s}^{\#s-1}) =
    \inputv{\rho}(\inputv{s}^{\#s-1})$.
    As $\rho$ is a model of $\mathit{ins}_q^\mathbb{PL}$, with
    $\{\inputv{s}^{\#s-1}=\outputv{l}^i\} \subseteq
    \mathit{ins}_q^\mathbb{PL}$,
    we have
    $\inputv{\rho}(\inputv{s}^{\#s-1}) = \outputv{\rho}(\outputv{l}^i)$.
    Hence,
    \[\outputv{\len}(\mathit{ins}_q(\sigma))(\outputv{l}^i) = 
    \outputv{\rho}(\outputv{l}^i)\;.\]
  \end{itemize}
  Consequently, we have $\outputv{\len}(\mathit{ins}_q(\sigma)) =
  \outputv{\rho}$.
\item Suppose that $\mathsf{ins} = \mathsf{add}$. Then,
  \begin{align*}
    \mathit{ins}_q^\mathbb{PL} & = \mathit{Unchanged}_q(\#l,\#s-2)\\
    & \cup
    \{\inputv{s}^{\#s-2}+\inputv{s}^{\#s-1}=\outputv{s}^{\#s-2}\}\\
    \mathit{ins}_q & = \lambda\langle l\sep x::y::s\sep\mu\rangle.
    \langle l\sep (x+y)::s\sep\mu\rangle \; .
  \end{align*}
  Note that $\#s\geq 2$. Without loss of generality, suppose that
  $\sigma$ has the form $\langle l\sep x::y::s\sep \mu\rangle$.
  As $\sigma$ is compatible with $\mathit{ins}_q$, 
  $x$ and $y$ are integer values. Moreover,
  $\mathit{ins}_q(\sigma)$ is defined because 
  $\mathit{ins}_q$ is defined for any state that is compatible
  with it. 
  \begin{itemize}
  \item Let $j\in\{0,\ldots,\#l-1\}$ and
    $k\in\{0,\ldots,\#s-3\}$.
    By definition of $\mathit{ins}_q$, we have
    $\outputv{\len}(\mathit{ins}_q(\sigma))(\outputv{l}^j) = 
    \inputv{\len}(\sigma)(\inputv{l}^j)$ and
    $\outputv{\len}(\mathit{ins}_q(\sigma))(\outputv{s}^k) = 
    \inputv{\len}(\sigma)(\inputv{s}^k)$; as 
    $\inputv{\len}(\sigma) = \inputv{\rho}$, 
    we have $\inputv{\len}(\sigma)(\inputv{l}^j) =
    \inputv{\rho}(\inputv{l}^j)$ and we have
    $\inputv{\len}(\sigma)(\inputv{s}^k) =
    \inputv{\rho}(\inputv{s}^k)$; as
    $\rho$ is a model of $\mathit{ins}_q^\mathbb{PL}$, with
    $\mathit{Unchanged}_q(\#l,\#s-2)
    \subseteq \mathit{ins}_q^\mathbb{PL}$,
    we have $\inputv{\rho}(\inputv{l}^j) = \outputv{\rho}(\outputv{l}^j)$
    and $\inputv{\rho}(\inputv{s}^k) = \outputv{\rho}(\outputv{s}^k)$.
    Therefore, 
    \[\outputv{\len}(\mathit{ins}_q(\sigma))(\outputv{l}^j) = 
    \outputv{\rho}(\outputv{l}^j)\]
    and
    \[\outputv{\len}(\mathit{ins}_q(\sigma))(\outputv{s}^k) = 
    \outputv{\rho}(\outputv{s}^k)\;.\]
  \item By definition of $\mathit{ins}_q$, we have
    \[\outputv{\len}(\mathit{ins}_q(\sigma))(\outputv{s}^{\#s-2}) = 
    \len(x+y, \mu)\]
    with
    \begin{align*}
      \len(x+y, \mu) & = x+y\\
      & = \len(x,\mu) + \len(y, \mu)\\
      & = \inputv{\len}(\sigma)(\inputv{s}^{\#s-1}) +
      \inputv{\len}(\sigma)(\inputv{s}^{\#s-2})\;.
    \end{align*}
    As $\inputv{\len}(\sigma) = \inputv{\rho}$, 
    we have $\inputv{\len}(\sigma)(\inputv{s}^{\#s-1}) =
    \inputv{\rho}(\inputv{s}^{\#s-1})$ and
    $\inputv{\len}(\sigma)(\inputv{s}^{\#s-2}) =
    \inputv{\rho}(\inputv{s}^{\#s-2})$.
    As $\rho$ is a model of $\mathit{ins}_q^\mathbb{PL}$, with
    $\{\inputv{s}^{\#s-2}+\inputv{s}^{\#s-1}=\outputv{s}^{\#s-2}\} \subseteq
    \mathit{ins}_q^\mathbb{PL}$,
    we have
    $\inputv{\rho}(\inputv{s}^{\#s-2}) +
    \inputv{\rho}(\inputv{s}^{\#s-1}) = \outputv{\rho}(\outputv{s}^{\#s-2})$.
    Hence,
    \[\outputv{\len}(\mathit{ins}_q(\sigma))(\outputv{s}^{\#s-2}) = 
    \outputv{\rho}(\outputv{s}^{\#s-2})\;.\]
  \end{itemize}
  Consequently, we have $\outputv{\len}(\mathit{ins}_q(\sigma)) =
  \outputv{\rho}$.
\item Suppose that $\mathsf{ins} = \mathsf{putfield}\ f$
  where $f$ has integer type.
  Then,
  \begin{align*}
    \mathit{ins}_q^\mathbb{PL} & =
    \mathit{Unchanged}_q(\#l,\#s-2) \cup \{\inputv{s}^{\#s-2} \geq 1\}\\
    \mathit{ins}_q & = 
    \lambda\langle l\sep v::\ell::s\sep\mu\rangle.
    \begin{cases}
      \langle l\sep s\sep\mu[\ell\mapsto\mu(\ell)[f\mapsto v]]\rangle\\
      \hspace{1.5cm}\text{if $\ell\neq\mathtt{null}$}\\
      \mathit{undefined}\quad\text{otherwise.}
    \end{cases}
  \end{align*}
  Note that $\#s\geq 2$. Without loss of generality, suppose that
  $\sigma$ has the form
  $\langle l\sep v::\ell::s\sep \mu\rangle$.
  As $\rho$ is a model of $\mathit{ins}_q^\mathbb{PL}$, with
  $\{\inputv{s}^{\#s-2}\geq 1\} \subseteq \mathit{ins}_q^\mathbb{PL}$,
  we have $\inputv{\rho}(\inputv{s}^{\#s-2}) \geq 1$.
  As $\inputv{\len}(\sigma) = \inputv{\rho}$, then we have
  $\inputv{\len}(\sigma)(\inputv{s}^{\#s-2}) \geq 1$,
  with $\inputv{\len}(\sigma)(\inputv{s}^{\#s-2}) =
  \len(\ell,\mu)$. Hence, $\len(\ell,\mu) \geq 1$.
  As $\sigma$ is compatible with $\mathit{ins}_q$, $\ell$ does
  not have integer type. So, $\len(\ell,\mu) \geq 1$
  implies that $\ell \neq \mathtt{null}$. Consequently,
  $\mathit{ins}_q(\sigma)$ is defined.
  
  We let $\mu'$ denote the memory 
  $\mu[\ell\mapsto\mu(\ell)[f\mapsto v]]$.
  Note that $\dom(\mu)=\dom(\mu')$.
  \begin{claim}
    For any $i\in\integers$, we have
    \[\len(i,\mu)=\len(i,\mu')\;.\]
    For any $\ell\ell\in\dom(\mu)$, we have
    \[\len(\ell\ell,\mu)=\len(\ell\ell,\mu')\;.\]
  \end{claim}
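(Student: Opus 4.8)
The first equality is immediate from the definition of $\len$ recalled in Section~\ref{section:preliminaries} (and made formal in~\cite{SpotoMP10}): the size of an integer does not depend on the memory, so $\len(i,\mu)=i=\len(i,\mu')$ for every $i\in\integers$.

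For the second equality, the plan is to show that $\mu$ and $\mu'$ induce the \emph{same} location-to-location successor relation, and then to conclude that they admit exactly the same chains of locations from any starting point. So I would first record two structural facts about $\mu'=\mu[\ell\mapsto\mu(\ell)[f\mapsto v]]$. First, $\dom(\mu)=\dom(\mu')$, as already noted. Second, for every $\ell\ell\in\dom(\mu)$ and every field $g$ of the object stored at $\ell\ell$, if $\mu(\ell\ell)(g)$ or $\mu'(\ell\ell)(g)$ is a location then $\mu'(\ell\ell)(g)=\mu(\ell\ell)(g)$. This second fact holds because $\mu'(\ell\ell)=\mu(\ell\ell)$ whenever $\ell\ell\neq\ell$, whereas for $\ell\ell=\ell$ the two objects agree on every field except $f$, and $f$ has integer type, so both the former value $\mu(\ell)(f)$ and the new value $v$ are integers and hence are never locations. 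This is exactly the point at which the hypothesis that we only consider $\mathsf{putfield}\ f$ with integer-typed $f$ is used.

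From these two facts, a straightforward induction on $k$ shows that a sequence $\ell\ell=\ell_0,\ell_1,\ldots,\ell_k$ is a chain of locations that can be followed from $\ell\ell$ in $\mu$ if and only if it is one in $\mu'$: the base case is trivial, and the induction step only consults the successor relation, which is the same in both memories by the second fact. Taking the supremum of the lengths of such chains (understood to be $+\infty$ for cyclic structures) then yields $\len(\ell\ell,\mu)=\len(\ell\ell,\mu')$ for every $\ell\ell\in\dom(\mu)$.

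The only genuine subtlety is that the paper gives the definition of $\len$ on locations only informally, so a fully rigorous argument has to go through the formal path-length function of~\cite{SpotoMP10} and verify that it depends on the memory \emph{only} through the location-to-location successor relation; once this is granted, everything else is the routine induction above. The claim is then applied immediately in the $\mathsf{putfield}\ f$ case of Proposition~\ref{proposition:instructions_exactness}: since $\sigma$ and $\mathit{ins}_q(\sigma)$ share the same array $l$ of local variables and the same surviving stack portion $s$, and $\len$ of each such entry is unchanged when $\mu$ is replaced by $\mu'$, the coordinates $\outputv{\len}(\mathit{ins}_q(\sigma))(\outputv{l}^j)$ and $\outputv{\len}(\mathit{ins}_q(\sigma))(\outputv{s}^k)$ equal the corresponding input sizes and hence, via $\mathit{Unchanged}_q(\#l,\#s-2)\subseteq\mathit{ins}_q^{\mathbb{PL}}$, equal $\outputv{\rho}$ on those coordinates; as $\mathsf{putfield}\ f$ pushes no new stack element, this exhausts the output coordinates and finishes the case.
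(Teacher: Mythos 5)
Your proof is correct and follows the same route as the paper's: the integer case is immediate from $\len(i,\mu)=i$, and the location case rests on the observation that updating an integer-typed field cannot affect chains of locations, so the path-length of every location is preserved. The only difference is one of granularity --- the paper simply cites Definition~24 of~\cite{SpotoMP10} for the fact that path-length is insensitive to the values of integer-typed fields, whereas you reprove that fact via the unchanged location-to-location successor relation and an induction on chains, which makes the argument more self-contained.
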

  \begin{proof}
    By Definition~24 in~\cite{SpotoMP10},
    for any $i\in\integers$ we have $\len(i,\mu) = i$ and
    $\len(i,\mu') = i$. Hence, $\len(i,\mu)=\len(i,\mu')$.
    
    Let $\ell\ell\in\dom(\mu)$. Note that $\mu'$ coincides with
    $\mu$, except, possibly, on the value of field $f$ of
    objects $\mu(\ell)$ and $\mu'(\ell)$.
    Field $f$ has integer type; by Definition~24 in~\cite{SpotoMP10},
    the path-length of a location does not depend on the value 
    of the fields with integer type of the objects in memory.
    Hence, $\len(\ell\ell,\mu)=\len(\ell\ell,\mu')$.
    \qed
  \end{proof}
  Let $j\in\{0,\ldots,\#l-1\}$ and $k\in\{0,\ldots,\#s-3\}$.
  By definition of $\mathit{ins}_q$ and by the claim above,
  \[\outputv{\len}(\mathit{ins}_q(\sigma))(\outputv{l}^j) = 
  \inputv{\len}(\sigma)(\inputv{l}^j)\] and
  \[\outputv{\len}(\mathit{ins}_q(\sigma))(\outputv{s}^k) = 
  \inputv{\len}(\sigma)(\inputv{s}^k)\;.\]
  As 
  $\inputv{\len}(\sigma) = \inputv{\rho}$, 
  we have
  \[\inputv{\len}(\sigma)(\inputv{l}^j) =
  \inputv{\rho}(\inputv{l}^j) \quad\text{and}\quad
  \inputv{\len}(\sigma)(\inputv{s}^k) =
  \inputv{\rho}(\inputv{s}^k)\;.\]
  Moreover, as $\rho$ is a model of $\mathit{ins}_q^\mathbb{PL}$, with
  \[\mathit{Unchanged}_q(\#l,\#s-2)\subseteq \mathit{ins}_q^\mathbb{PL}\]
  we have $\inputv{\rho}(\inputv{l}^j) = \outputv{\rho}(\outputv{l}^j)$
  and $\inputv{\rho}(\inputv{s}^k) = \outputv{\rho}(\outputv{s}^k)$.
  Therefore, 
  \[\outputv{\len}(\mathit{ins}_q(\sigma))(\outputv{l}^j) = 
  \outputv{\rho}(\outputv{l}^j)\]
  and
  \[\outputv{\len}(\mathit{ins}_q(\sigma))(\outputv{s}^k) = 
  \outputv{\rho}(\outputv{s}^k)\;.\]
  
  Consequently, we have $\outputv{\len}(\mathit{ins}_q(\sigma)) =
  \outputv{\rho}$.
\item Suppose that $\mathsf{ins} = \mathsf{ifeq\ of\ type}\ t$.
  Then,
  \begin{align*}
    \mathit{ins}_q^\mathbb{PL} & =
    \mathit{Unchanged}_q(\#l,\#s-1)\cup \{\inputv{s}^{\#s-1}=0\}\\
    \mathit{ins}_q & = 
    \lambda\langle l\sep\mathit{top}::s\sep\mu\rangle.
    \begin{cases}
      \langle l\sep s\sep\mu\rangle\\
      \quad\text{if $\mathit{top}=0$ or $\mathit{top}=\mathtt{null}$}\\
      \mathit{undefined}\quad\text{otherwise.}
    \end{cases}
  \end{align*}
  Note that $\#s\geq 1$. Without loss of generality, suppose that
  $\sigma$ has the form $\langle l\sep \mathit{top}::s\sep \mu\rangle$.
  As $\rho$ is a model of $\mathit{ins}_q^\mathbb{PL}$, with
  $\{\inputv{s}^{\#s-1}=0\} \subseteq \mathit{ins}_q^\mathbb{PL}$,
  we have $\inputv{\rho}(\inputv{s}^{\#s-1}) = 0$.
  As $\inputv{\len}(\sigma) = \inputv{\rho}$, then we have
  $\inputv{\len}(\sigma)(\inputv{s}^{\#s-1}) = 0$,
  with $\inputv{\len}(\sigma)(\inputv{s}^{\#s-1}) = \len(\mathit{top},\mu)$.
  Hence, $\len(\mathit{top},\mu) = 0$.
  \begin{itemize}
  \item Suppose that $t = \mathtt{int}$.
    As $\sigma$ is compatible with $\mathit{ins}_q$, $\mathit{top}$
    has type $\mathtt{int}$, hence $\len(\mathit{top},\mu) = 0$
    implies that $\mathit{top} = 0$. Consequently,
    $\mathit{ins}_q(\sigma)$ is defined.
  \item Suppose that $t \neq \mathtt{int}$.
    As $\sigma$ is compatible with $\mathit{ins}_q$, $\mathit{top}$
    has type $t \neq \mathtt{int}$, hence $\len(\mathit{top},\mu) = 0$
    implies that $\mathit{top} = \mathtt{null}$. Consequently,
    $\mathit{ins}_q(\sigma)$ is defined.
  \end{itemize}
  Let $j\in\{0,\ldots,\#l-1\}$ and $k\in\{0,\ldots,\#s-2\}$.
  By definition of $\mathit{ins}_q$, we have
  $\outputv{\len}(\mathit{ins}_q(\sigma))(\outputv{l}^j) = 
  \inputv{\len}(\sigma)(\inputv{l}^j)$ and
  $\outputv{\len}(\mathit{ins}_q(\sigma))(\outputv{s}^k) = 
  \inputv{\len}(\sigma)(\inputv{s}^k)$; as 
  $\inputv{\len}(\sigma) = \inputv{\rho}$, 
  we have $\inputv{\len}(\sigma)(\inputv{l}^j) =
  \inputv{\rho}(\inputv{l}^j)$ and
  $\inputv{\len}(\sigma)(\inputv{s}^k) =
  \inputv{\rho}(\inputv{s}^k)$; moreover, as
  $\rho$ is a model of $\mathit{ins}_q^\mathbb{PL}$, with
  \[\mathit{Unchanged}_q(\#l,\#s-1)\subseteq \mathit{ins}_q^\mathbb{PL}\]
  we have $\inputv{\rho}(\inputv{l}^j) = \outputv{\rho}(\outputv{l}^j)$
  and $\inputv{\rho}(\inputv{s}^k) = \outputv{\rho}(\outputv{s}^k)$.
  Therefore, 
  \[\outputv{\len}(\mathit{ins}_q(\sigma))(\outputv{l}^j) = 
  \outputv{\rho}(\outputv{l}^j)\]
  and
  \[\outputv{\len}(\mathit{ins}_q(\sigma))(\outputv{s}^k) = 
  \outputv{\rho}(\outputv{s}^k)\;.\]
  
  Consequently, we have $\outputv{\len}(\mathit{ins}_q(\sigma)) =
  \outputv{\rho}$.
\end{itemize}

\subsection{Existence of a Compatible State}
An important step of our analysis consists in deducing the
non-termination of $P$ from that of $P_{\mathit{CLP}}$.
The idea consists in constructing an infinite execution of $P$
from an infinite derivation with $P_{\mathit{CLP}}$. Each step of
the infinite derivation consists of an atom $b(\mathit{vars})$,
where $\mathit{vars}$ are integer values, and we must be able to
transform this atom into a state whose path-length matches
$\mathit{vars}$.
%
\begin{proposition}\label{proposition:instructions_exists_sigma}
  For any $\mathsf{ins} \in \correctins\setminus 
  \{\mathsf{call}\}$, any program point $q$ where $\mathsf{ins}$ 
  occurs and any model $\rho$ of $\mathit{ins}^{\mathbb{PL}}_q$,
  there exists a state $\sigma$ which is compatible with
  $\mathit{ins}_q$ and such that 
  $\inputv{\len}(\sigma) = \inputv{\rho}$.
\end{proposition}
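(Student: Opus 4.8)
The plan is to prove the statement by a case analysis on $\mathsf{ins}$, but to factor out the real content into a uniform \emph{realisation lemma}: given a target value $d\in\integers\cup\{+\infty\}$ and a type $\tau\in\mathbb{K}\cup\{\mathtt{int}\}$ consistent with $d$ (i.e.\ either $\tau=\mathtt{int}$ with $d$ finite, or $\tau\in\mathbb{K}$ with $d\ge 0$), one can build a value $v$ of type $\tau$ together with a finite memory $\mu_v$ over a fresh set of locations such that $\len(v,\mu_v)=d$. For $\tau=\mathtt{int}$ take $v=d$ and $\mu_v=\emptyset$, using $\len(d,\mu_v)=d$. For $\tau\in\mathbb{K}$ and $d=0$ take $v=\mathtt{null}$, using $\len(\mathtt{null},\mu_v)=0$. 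For $\tau\in\mathbb{K}$ and $1\le d<+\infty$ take fresh locations $\ell_1,\dots,\ell_d$, let $\mu_v$ map each $\ell_i$ to an object of class $\tau$ whose reference fields point to $\ell_{i+1}$ for $i<d$ and are $\mathtt{null}$ for $i=d$, and take $v=\ell_1$; the longest location chain from $\ell_1$ in $\mu_v$ is $\ell_1\cdots\ell_d$, of length $d$. For $d=+\infty$ take a single fresh $\ell$ whose object of class $\tau$ has a reference field pointing to $\ell$ itself, so that chains from $\ell$ are arbitrarily long and $\len(\ell,\mu_v)=+\infty$.

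Next I would fix $\mathsf{ins}\in\correctins\setminus\{\mathsf{call}\}$ and a program point $q$ where it occurs, and read off the static information at $q$: the number $\#l$ of local variables, the number $\#s$ of stack elements, and their static types $\tau^l_0,\dots,\tau^l_{\#l-1}$ and $\tau^s_0,\dots,\tau^s_{\#s-1}$. Inspecting the definitions of $\mathit{ins}^{\mathbb{PL}}_q$ recalled in the proof of Proposition~\ref{proposition:instructions_exactness}, each such polyhedron contains a set $\mathit{Unchanged}_q(L,S)$ for suitable index sets $L,S$, and by definition every such set contains $\inputv{l}^k\ge 0$ (resp.\ $\inputv{s}^k\ge 0$) for \emph{every} non-$\mathtt{int}$ local (resp.\ stack) slot $k$. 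Hence $\rho\models\mathit{ins}^{\mathbb{PL}}_q$ forces $\rho(\inputv{l}^k)\ge 0$ and $\rho(\inputv{s}^k)\ge 0$ for all non-$\mathtt{int}$ slots; moreover any instruction-specific input constraint (e.g.\ $\inputv{s}^{\#s-1}\ge 1$ for $\mathsf{putfield}$, $\inputv{s}^{\#s-1}=0$ for $\mathsf{ifeq}$, or the sign constraints for $\mathsf{if}\langle\mathit{cond}\rangle$) is compatible with the static type of the slot it constrains. Thus for every input slot the pair (static type, value assigned by $\rho$) is consistent in the sense of the realisation lemma. (Aliasing equalities occurring in $\mathit{Unchanged}_q$ are automatically satisfied by $\rho$; since $\sigma$ is only required to be \emph{compatible}, i.e.\ to match number and type of slots, $\sigma$ need not realise them structurally.)

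I would then apply the realisation lemma independently to each input slot, over pairwise disjoint sets of fresh locations, obtaining values $a^l_0,\dots,a^l_{\#l-1}$, $a^s_0,\dots,a^s_{\#s-1}$ and memory fragments whose disjoint union is a memory $\mu$. Put $\sigma=\langle[a^l_0,\dots,a^l_{\#l-1}]\sep a^s_{\#s-1}::\cdots::a^s_0\sep\mu\rangle$. Because the fragments are disjoint, no chain leaves a fragment, so the path-length of $a^l_k$ (resp.\ $a^s_k$) in $\mu$ equals its path-length in its own fragment, namely $\rho(\inputv{l}^k)$ (resp.\ $\rho(\inputv{s}^k)$); therefore $\inputv{\len}(\sigma)=\inputv{\rho}$. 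By construction $\sigma\in\Sigma_{\#l,\#s}$ and each slot carries the type required at $q$, so $\sigma$ is compatible with $\mathit{ins}_q$, which completes the proof; the per-instruction ``case analysis'' reduces to reading off $\#l$, $\#s$ and the slot types and checking the routine consistency noted above.

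The step I expect to be the main obstacle is the realisation lemma, and within it two points. First, one must check that the $\len$ definition of~\cite{SpotoMP10} really returns $+\infty$ on a cyclic heap fragment, i.e.\ that a ``chain of locations'' there is allowed to revisit a location (which is in any case forced, since $\mu$ is finite and must still admit unbounded chains when a cycle is present). Second, there is the innocuous-looking consistency requirement that $\rho$ be finite on $\mathtt{int}$-typed slots: this is precisely what lets us realise those slots by a concrete integer, and it should be remarked that for every model $\rho$ arising from a computation of $P_{\mathit{CLP}}$ — the only models used downstream, by the specialised \textit{CLP} semantics fixed in Section~\ref{section:compilation} — all variables are integer-valued, so no generality is lost.
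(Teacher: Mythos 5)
Your overall strategy is the same as the paper's: realise each input slot independently (the integer itself for \texttt{int} slots, \texttt{null} for class slots with value $0$, and a location of the prescribed path-length for class slots with positive value), over disjoint heap fragments, and observe that compatibility only requires matching the number and types of slots. Your reliance on the $\inputv{l}^k\ge 0$, $\inputv{s}^k\ge 0$ constraints in $\mathit{Unchanged}_q$ to guarantee consistency of $\rho$ with the static types is also exactly the right observation, and your explicit treatment of the $+\infty$ case (cyclic fragment, plus the remark that the specialised semantics only ever produces integer-valued models) is if anything more careful than the paper, which silently restricts to finite targets.

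There is, however, one genuine gap in your realisation lemma. For a class slot with target $d\ge 2$ you build a chain $\ell_1,\dots,\ell_d$ of objects of class $\tau$ ``whose reference fields point to $\ell_{i+1}$'' --- but this presupposes that $\tau$ (or the classes reachable from it) \emph{has} a reference field. If $\tau$ has only integer-typed fields, every non-\texttt{null} value of type $\tau$ has path-length exactly $1$ in any memory, so no state realises $d\ge 2$, and since $\mathit{ins}^{\mathbb{PL}}_q$ only imposes $\inputv{l}^k\ge 0$ on such a slot, a model with $\rho(\inputv{l}^k)=5$ is perfectly possible. As stated, your lemma (and hence the proposition) is therefore false for such programs. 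The paper closes exactly this hole with an explicit ``without loss of generality'' assumption: every class is assumed to admit, for each $n\ge 1$, an object realising path-length $n$, and any class violating this is transformed by adding a dummy field of its own type --- a transformation argued to preserve (non-)termination. You need either this assumption or an equivalent program transformation for your chain construction to go through; once it is added, your argument matches the paper's.
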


As a memory $\mu$ is a mapping from locations to objects,
we let $\dom(\mu)$ denote the domain of $\mu$.
An update of $\mu$ is written as $\mu[\ell\mapsto o]$, where
the domain of $\mu$ may be enlarged (if $\ell\not\in\dom(\mu)$).

Without loss of generality, we assume that every class $\kappa$
in the program under analysis satisfies the following property:
for any integer $n\geq 1$, any memory $\mu$ and any location
$\ell\not\in\dom(\mu)$, there exists an object
$o$ instance of $\kappa$ such that $\len(\ell, \mu[\ell\mapsto o]) = n$.
If the program includes a class $\kappa $ that does not satisfy
this property, just add to $\kappa$ a dummy field of type $\kappa$.
The termination of the transformed program is equivalent to that
of the original one.

Let $\mathsf{ins}$ be an instruction in
$\correctins\setminus\{\mathsf{call}\}$.
Let $q$ be a program point where $\mathsf{ins}$ occurs
and $\#l,\#s$ be the number of local variables and stack elements
at $q$. Let $\rho$ be a model of $\mathit{ins}_q^{\mathbb{PL}}$.
Given the above assumption, we can construct a state
$\langle l\sep s\sep \mu\rangle$ in $\Sigma_{\#l,\#s}$
which is such that: for any $k$ in $\{0,\ldots,\#l-1\}$
(resp. in $\{0,\ldots,\#s-1\}$),
\begin{itemize}
\item if the $k$th local variable (resp. stack element)
  has integer type at $q$, then $l^k$ is $\rho(\inputv{l}^k)$
  (resp. $s^k$ is $\rho(\inputv{s}^k)$);
\item if the $k$th local variable (resp. stack element)
  has class type at $q$ and $\rho(\inputv{l}^k) = 0$
  (resp. $\rho(\inputv{s}^k) = 0$),
  then $l^k$ is $\mathtt{null}$
  (resp. $s^k$ is $\mathtt{null}$);
\item if the $k$th local variable (resp. stack element)
  has class type at $q$ and $\rho(\inputv{l}^k) \neq 0$
  (resp. $\rho(\inputv{s}^k) \neq 0$), 
  then $l^k$ (resp. $s^k$) is a location $\ell$
  which is such that $\len(\ell,\mu) = \rho(\inputv{l}^k)$
  (resp. $\len(\ell,\mu) = \rho(\inputv{s}^k)$).
\end{itemize}
Then, $\sigma$ is compatible with $\mathit{ins}_q$ and we have
$\inputv{\len}(\sigma) = \inputv{\rho}$.

\subsection{Theorem~\ref{theorem:clp_completeness}}
Let $J$ be a Java Virtual Machine, $P$ be a Java bytecode program
consisting of instructions in $\correctins\setminus \{\mathsf{call}\}$,
and $b$ be a block of $P$.
By Theorem~56 of \cite{SpotoMP10},
if $b(\mathit{vars})$ has only terminating computations
in $P_{\mathit{CLP}}$, for any fixed integer values for
$\mathit{vars}$, then all executions of $J$
started at $b$ terminate.

Let us prove that if all executions of $J$
started at $b$ terminate, then
$b(\mathit{vars})$ has only terminating computations
in $P_{\mathit{CLP}}$, for any fixed integer values for
$\mathit{vars}$. This is equivalent to proving that
if there exists some fixed integer values for
$\mathit{vars}$ such that $b(\mathit{vars})$ has an
infinite computation in $P_{\mathit{CLP}}$, then
there exists an execution of $J$ started at $b$ that
does not terminate.

Hence, suppose that for some fixed integer values
for $\mathit{vars}$ there exists an infinite computation
of $b(\mathit{vars})$ in $P_{\mathit{CLP}}$. Note that for
any block $b'$ of $P$, the unification of the CLP atom
$b'(\outputv{\mathit{vars}})$ with the atom
$b'(\inputv{\mathit{vars}})$ corresponds to the $;^{\mathbb{PL}}$
operation (renaming of the variables into new overlined variables
and existential quantification). 
Then, by definition of our specialised semantics,
there exists an infinite sequence of integer values
\[\mathit{vars} \ra_{c_0} \mathit{vars}_0 \ra_{c_1} 
\cdots \ra_{c_i} \mathit{vars}_i \ra_{c_{i+1}} \cdots\]
where
\[\begin{array}{rcl}
  b(\inputv{\mathit{vars}}) & \texttt{:-} &
  c_0,b_0(\outputv{\mathit{vars}}_0)\\
  b_0(\inputv{\mathit{vars}}_0) & \texttt{:-} &
  c_1,b_1(\outputv{\mathit{vars}}_1)\\
  & \vdots & \\
  b_i(\inputv{\mathit{vars}}_i) & \texttt{:-} &
  c_{i+1},b_{i+1}(\outputv{\mathit{vars}}_{i+1})\\
  & \vdots &
\end{array}\]
are clauses from $P_{\mathit{CLP}}$.
For each $i\in\nat$, $\mathit{vars}_i$ are integer values
for $\outputv{\mathit{vars}}_i$ and the assignment
\[\rho_i = [\inputv{\mathit{vars}} \mapsto \mathit{vars},
\outputv{\mathit{vars}}_i \mapsto \mathit{vars}_i]\]
is a model of $c_0 ;^{\mathbb{PL}} \dots ;^{\mathbb{PL}} c_i$.

By definition of $P_{\mathit{CLP}}$ (Definition~53
of~\cite{SpotoMP10}), we have
\[c_0 = \mathit{ins}_1^{\mathbb{PL}}
;^{\mathbb{PL}}\ldots;^{\mathbb{PL}}
\mathit{ins}_{w_0}^{\mathbb{PL}}\]
where $\mathsf{ins}_1$, \dots,$\mathsf{ins}_{w_0}$ are the
instructions occurring in block $b$ and, for each $i\geq 1$, 
\[c_i = \mathit{ins}_{w_{i-1}+1}^{\mathbb{PL}}
;^{\mathbb{PL}}\ldots;^{\mathbb{PL}}
\mathit{ins}_{w_i}^{\mathbb{PL}}\]
where $\mathsf{ins}_{w_{i-1}+1}$, \dots, $\mathsf{ins}_{w_i}$ are the
instructions occurring in block $b_{i-1}$.
We let
\[\delta_0=\mathit{ins}_1 ;\ldots;\mathit{ins}_{w_0}\]
and
\[\forall i\geq 1,
\delta_i=\mathit{ins}_{w_{i-1}+1};\ldots;\mathit{ins}_{w_i}\;.\]
As $P$ is a valid Java bytecode program, any $\mathit{ins}_k$
is compatible with its direct successor $\mathit{ins}_{k+1}$. Hence,
by Proposition~\ref{proposition:composition}
and Proposition~\ref{proposition:instructions_exactness}, 
for each $i\in\nat$ we have $c_i\models\delta_i$.

As $\rho_0$ is a model of $c_0$ and
$c_0 = \mathit{ins}_1^{\mathbb{PL}} ;^{\mathbb{PL}}\ldots;^{\mathbb{PL}}
\mathit{ins}_{w_0}^{\mathbb{PL}}$, there exists a model $\rho$ of
$\mathit{ins}_1^{\mathbb{PL}}$ which is such that
$\inputv{\rho} = \inputv{\rho}_0$.
By Proposition~\ref{proposition:instructions_exists_sigma}
there exists a state $\sigma$ compatible with $\mathit{ins}_1$
which is such that
$\inputv{\len}(\sigma) = \inputv{\rho} = \inputv{\rho}_0$.
Note that for each $i\in\nat$, the state $\sigma$ is compatible
with $\delta_0;\ldots;\delta_i$ because $\sigma$ is compatible
with $\mathit{ins}_1$ and $\mathit{ins}_1$ is the denotation
that is applied first in $\delta_0;\ldots;\delta_i$. Moreover,
for each $i\in\nat$ we have $\inputv{\rho}_0=\inputv{\rho}_i$ \ie
$\inputv{len}(\sigma)=\inputv{\rho}_i$.

Let $i\in\nat$. As $c_0\models\delta_0$, \dots, $c_i\models\delta_i$,
by Proposition~\ref{proposition:composition}
we have $(c_0;^{\mathbb{PL}}\ldots;^{\mathbb{PL}}c_i) \models 
(\delta_0;\ldots;\delta_i)$.
Moreover, $\rho_i$ is a model of $c_0;^{\mathbb{PL}}\ldots;^{\mathbb{PL}}c_i$,
the state $\sigma$ is compatible with $\delta_0;\ldots;\delta_i$
and $\inputv{len}(\sigma)=\inputv{\rho}_0=\inputv{\rho}_i$.
Hence, by Definition~\ref{definition:models}, 
$(\delta_0;\ldots;\delta_i)(\sigma)$ is defined.

Consequently, we have proved that 
$(\delta_0;\ldots;\delta_i)(\sigma)$ is defined for
any $i$ in $\nat$.
By the equivalence of the denotational and operational semantics
(Theorem~23 of~\cite{SpotoMP10}), there exists
an infinite operational execution of $J$ from block $b$
starting at state $\sigma$.

\subsection{Theorem~\ref{theorem:non-term}}
Let $J$ be a Java Virtual Machine, $P$ be a Java bytecode program
consisting of instructions in $\correctins$, and $b$ be a block of $P$.
Let $\mathit{vars}$ be some fixed integer values and $\inputv{s}_b$
be a free variable. Suppose that the query $b(\mathit{vars}, \inputv{s}_b)$
has an infinite computation in $P_{\mathit{CLP}}$.

First, suppose that $P$ does not contain any $\mathsf{call}$
instruction. Then, $P_{\mathit{CLP}}$ is constructed using
Definition~\ref{def:clp_block_no_call_not_void} only.
Let $P'_{\mathit{CLP}}$ be the $\clppl$ program constructed as
in~\cite{SpotoMP10}. The existence of an infinite computation
of $b(\mathit{vars},\inputv{s}_b)$ in $P_{\mathit{CLP}}$ entails
the existence of an infinite computation of
$b(\mathit{vars})$ in $P'_{\mathit{CLP}}$ which, by
Theorem~\ref{theorem:clp_completeness}, entails the existence
of a non-terminating execution of $J$ started at block $b$.

Now, suppose that $P$ contains a $\mathsf{call}$ instruction
to a method $m$. Then, the result follows from 
Propositions~\ref{proposition:composition},
\ref{proposition:instructions_exactness} and
\ref{proposition:instructions_exists_sigma}
and the fact that, in Definitions~\ref{def:clp_block_call_not_void}%
--\ref{def:clp_block_call_not_void_one_inst}, the
operational semantics of the call is modeled in 
$P_{\mathit{CLP}}$ by:
\begin{itemize}
\item the constraint $c_=$, which specifies that
  the path-length of the local variables and stack elements
  under the actual parameters is not modified by the call,
\item the constraint $\inputv{s}^{\#s-1}\geq 1$, which specifies
  that the receiver of the call is not $\mathtt{null}$,
\item the atom $b_m(\inputv{s}^{\#s-1+p-1},\ldots,\inputv{s}^{\#s-1},
  \outputv{s}^{\#s-1})$, where $b_m$ denotes the entry block of $m$
  and $\inputv{s}^{\#s-1+p-1},\ldots,\inputv{s}^{\#s-1}$ are
  the actual parameters of $m$ and 
  $\outputv{s}^{\#s-1}$ is the result of $m$,
\item if the block $\mathit{bb}$ where the call occurs consists of
  more than one instruction, clauses of the form
  \begin{align*}
    \mathit{bb}(\inputv{\vars},\inputv{s}_{\mathit{bb}}) \la \; &
    c_= \cup \{\inputv{s}^{\#s-1}\geq 1,\
    \inputv{s}_{\mathit{bb}}=\outputv{s}_{\mathit{bb}}\},\\
    & b_m(\inputv{s}^{\#s-1+p-1},\ldots,\inputv{s}^{\#s-1},
    \outputv{s}^{\#s-1}),\\
    & \mathit{bb}'(\outputv{\vars}',\outputv{s}_{\mathit{bb}})\\
    \mathit{bb}'(\inputv{\vars}',\inputv{s}_{\mathit{bb}}) \la \; &
    c\cup\{\inputv{s}_{\mathit{bb}}=\outputv{s}_{\mathit{bb}}\},
    b_1(\outputv{\vars},\outputv{s}_{\mathit{bb}})\\
    &\cdots\\
    \mathit{bb}'(\inputv{\vars}',\inputv{s}_{\mathit{bb}}) \la \; &
    c\cup\{\inputv{s}_{\mathit{bb}}=\outputv{s}_{\mathit{bb}}\}
    ,b_n(\outputv{\vars},\outputv{s}_{\mathit{bb}})
  \end{align*}
  where the call to $\mathit{bb}'$ in the first clause models the
  continuation of the execution after the call to $m$,
\item if the block $\mathit{bb}$ where the call occurs consists of
  exactly one instruction, clauses of the form
  \begin{align*}
    \mathit{bb}(\inputv{\vars},\inputv{s}_{\mathit{bb}}) \la \;
    & c_= \cup \{\inputv{s}^{\#s-1}\geq 1,\
    \inputv{s}_{\mathit{bb}}=\outputv{s}_{\mathit{bb}}\},\\
    & b_m(\inputv{s}^{\#s-1+p-1},\ldots,\inputv{s}^{\#s-1},
    \outputv{s}^{\#s-1}),\\
    & b_1(\outputv{\vars},\outputv{s}_{\mathit{bb}})\\
    \cdots & \\
    \mathit{bb}(\inputv{\vars},\inputv{s}_{\mathit{bb}}) \la \;
    & c_= \cup \{\inputv{s}^{\#s-1}\geq 1,\
    \inputv{s}_{\mathit{bb}}=\outputv{s}_{\mathit{bb}}\},\\
    & b_m(\inputv{s}^{\#s-1+p-1},\ldots,\inputv{s}^{\#s-1},
    \outputv{s}^{\#s-1}),\\
    & b_n(\outputv{\vars},\outputv{s}_{\mathit{bb}})
  \end{align*}
  where the calls to $b_1$, \ldots, $b_n$ model the continuation
  of the execution after the call to $m$.
\end{itemize}

\end{document}